\newcommand{\draftnote}[1]{\marginpar{\footnotesize{#1}}}
\newcommand{\inparen}[1]{\left(#1\right)}
\newcommand{\insqrbr}[1]{\left[#1\right]}
\newcommand{\inbrace}[1]{\left\{#1\right\}}
\newcommand{\abs}    [1]{\left|#1\right|}
\newcommand{\pair}[2]{\inparen{#1,#2}}
\newcounter{savedenum}
\newcommand*{\saveenum}{\setcounter{savedenum}{\theenumi}}
\newcommand*{\resume}{\setcounter{enumi}{\thesavedenum}}
\providecommand*{\toclevel@title}{0}%     %% <---
\providecommand*{\toclevel@author}{0}%     %% <---
\newcommand{\multicaret}{Multi-\caret}
\DeclareMathOperator{\pre}{pre}
\DeclareMathOperator{\post}{post}
\newcommand{\mps}{P}            % Name of push down system
\newcommand{\mpsStates}{G}      % Global states
\newcommand{\astate}{g}
\newcommand{\mpsAlphabet}{\Gamma}
\newcommand{\aletter}{a}
\newcommand{\aletterbis}{b}
\newcommand{\aword}{\mpsWord} % alias 
\newcommand{\mpsTrans}{\Delta}
\newcommand{\mpsStackCount}{N}
\newcommand{\actCall}{\mathsf{call}}
\newcommand{\actRet}{\mathsf{return}}
\newcommand{\actInt}{\mathsf{internal}}
\newcommand{\actions}{\mathfrak{A}}
\newcommand{\anaction}{\mathfrak{a}}
\newcommand{\arun}{\rho}
\newcommand{\aconfig}{c}
\newcommand{\mpsStack}{\mpsAlphabet^*}
\newcommand{\mpsMoves}{\insqrbr{\mpsStackCount}}
\newcommand{\step}[1]{\xrightarrow{}_{#1}}
\newcommand{\cstep}[2]{\xrightarrow{\!\!#1\!\!}_{#2}} %% Complete STEP 
\newcommand{\mpsState}{g}
\newcommand{\mpsMove}{s}
\newcommand{\mpsWord}{w}
\newcommand{\mpsWordAtom}{v}
\newcommand{\mpsWordReturn}{u}
\newcommand{\mpsChar}{\aletter} % alias
\newcommand{\aformula}{\phi}
\newcommand{\aformulabis}{\psi}
\newcommand{\ltlState}{\mpsState}   %%{\mathsf{g}}
\newcommand{\next}{\mathsf{X}}
\newcommand{\until}{\mathsf{U}}
\newcommand{\atom}{A}
\newcommand{\opAbs}{\mathsf{a}}          % Superscript for denoting abstract operator
\newcommand{\opCall}{\mathsf{c}}        % Superscript for denoting caller operator
\DeclareMathOperator{\Succ}{succ}
\DeclareMathOperator{\Cl}{Cl}
\DeclareMathOperator{\Atoms}{Atoms}
\newcommand{\vect}[1]{\boldsymbol{#1}}
\newcommand{\prodify}[1]{\widehat{#1}}
\newcommand{\prodmps}{\prodify{\mps}}
\newcommand{\aprodrun}{\prodify{\arun}}
\newcommand{\tNoRet}{\mathsf{noreturn}}
\newcommand{\tRet}{\mathsf{willreturn}}
\newcommand{\tAlive}{\mathsf{alive}}
\newcommand{\tDead}{\mathsf{dead}}
\newcommand{\tagsRet}{\inbrace{\tNoRet,\tRet}}
\newcommand{\tagsDead}{\inbrace{\tAlive,\tDead}}
\newcommand{\artag}{r}
\newcommand{\adtag}{d}
\newcommand{\accSet}{\mathcal{F}}
\newcommand{\set}[1]{\{ #1 \}}
\newcommand{\couple}[2]{( #1,#2 )}
\newcommand{\triple}[3]{( #1,#2,#3 )}
\newcommand{\Nat}{\mathbb{N}}
\newcommand{\bottom}{\perp}
\newcommand{\powerset}[1]{{\cal P}(#1)}
\newcommand{\card}[1]{{\rm card}(#1)}
\newcommand{\aset}{X}
\newcommand{\asetbis}{Y}
\newcommand{\length}[1]{\left|#1\right|}
\newcommand{\size}[1]{\left|#1\right|}
\newcommand{\np}{\textsc{NP}}
\newcommand{\exptime}{\textsc{Exp\-Time}}
\newcommand{\etime}{\textsc{E\-Time}}
\newcommand{\defstyle}[1]{{\em #1}}
\newcommand{\egdef}{\stackrel{\mbox{\begin{tiny}def\end{tiny}}}{=}} % =def=
\newcommand{\equivdef}{\stackrel{\mbox{\begin{tiny}def\end{tiny}}}{\equivaut}} % <=def=>
\newcommand{\equivaut}{\;\Leftrightarrow\;}
\newcommand{\interval}[2]{[#1,#2]}
\newcommand{\caret}{{\tt CaRet}}
\newcommand{\poststar}[1]{{\rm post}^{\star}(#1)}
\newcommand{\aautomaton}{\mathcal{A}}
\newcommand{\aautomatonbis}{\mathcal{B}}
\newcommand{\alang}{{\rm L}}
\newcommand{\stackStimeT}[2]{\aword_{#1}^{#2}}
\newcommand{\aalphabet}{\Sigma}
\newcommand{\augmentrun}[1]{\gamma(#1)}
\title{A Note on the Complexity of Model-Checking
  Bounded Multi-Pushdown Systems\thanks{Work
    partially supported by projects \emph{ARCUS IdF/Inde} 
   and \emph{EU Seventh Framework Programme} under grant 
   agreement No. PIOF-GA-2011-301166 (DATAVERIF).}}
\author{Kshitij Bansal\inst{1} \and St\'ephane Demri\inst{1,2}}
\institute{New York University, USA \and LSV, CNRS, France}
\newif \ifLONG \LONGtrue
\newif \ifSHORT \SHORTfalse
\newcommand{\compressableeqn}[1]{\ifLONG \[#1\]\fi\ifSHORT $#1$\fi}
\begin{document}

\maketitle
%\begin{center}\today\end{center}
\begin{center}December 5, 2012\end{center}

\begin{abstract}
In this note, we provide complexity characterizations of model checking multi-pushdown systems.
\ifLONG 
Multi-pushdown systems model recursive concurrent programs 
in which any sequential process has a finite control. 
\fi 
We consider \ifLONG three \fi  standard notions for boundedness:
context boundedness, phase boundedness and stack ordering. The logical formalism is a linear-time
temporal logic extending well-known logic \caret \  but dedicated to multi-pushdown systems in which
abstract  operators 
\ifLONG (related to calls and returns) \fi 
such as those for next-time and until are
parameterized by stacks.
We show that the problem is \exptime-complete for context-bounded runs and unary encoding of the number of context
switches; we also prove that the problem is 2\exptime-complete for phase-bounded runs and unary
 encoding of the number of phase switches. In both cases, the value $k$ is given as an 
\ifLONG
input 
(whence it is not a constant of the model-checking problem),
\else
input,
\fi 
 which makes a substantial difference in the complexity. 
\ifLONG
In certain cases, our results improve previous complexity results. 
\fi 
\end{abstract}

\section{Introduction}
\label{section-introduction}
 
\paragraph{Multi-pushdown systems.}
Verification problems for pushdown systems, systems with a finite automaton and an
unbounded stack, have been extensively studied and decidability can be obtained as in the case
for finite-state systems. Indeed, many problems of interest like computing
$\pre^{\star}(\aset)$ (set of configurations reaching a regular set $\aset$), $\post^{\star}(\aset)$ 
(set of configurations accessible from a regular set $\aset$), reachability and LTL model checking have been shown to
be decidable~\cite{Bouajjani&Esparza&Maler97,Finkel&Willems&Wolper97,Schwoon02,Walukiewicz01}. 
More precisely, existence of infinite runs with B\"uchi acceptance
condition is decidable for pushdown systems. The proof is based on the fact that
$\pre^{\star}(\aset)$ is  computable and regular for any regular set $\aset$
of configuration ($\post^{\star}(\aset)$ is also regular and  computable)~\cite{Bouajjani&Esparza&Maler97}.
These have also been implemented, for instance
in the model-checker Moped~\cite{Schwoon02}. 
It can be
argued that they are natural models for modeling recursive programs. Two
limitations though of the model are the inability to model programs with
infinite domains (like integers) and modeling concurrency. 
Having an infinite automaton to handle the former limitation leads to 
undecidability~\cite{Minsky67}. An approach to tackle this has been to
abstract infinite-state programs to Boolean programs using, for instance,
predicate abstraction. The model is repeatedly refining,
as needed, like in the SLAM tool~\cite{Ball&Rajamani01}, SATABS~\cite{Clarkeetal05}
etc.
For concurrency, a natural way to extend this model would be to consider
pushdown automata with multiple stacks, 
% ... though first considered in X ...
which has seen significant
interest in the recent past~\cite{Atig10,Atig10bis,BCGZ11,Esparza&Ganty11}.
% ... from the point of view of ...
This is the main
object of study in this paper which we call \defstyle{multi-pushdown systems}.

\paragraph{The difficulty of model-checking multi-pushdown systems.}
A pushdown system with even two stacks and with a singleton stack alphabet is sufficient to model a Turing
machine (see e.g. this classical result~\cite{Minsky67}), hence making the problem of even
testing reachability undecidable.
%
%%Standard undecidability results but boundedness comes at the rescue !
%
This is not a unique situation  and similar issues exists with other
abstractions, like model-checking problems on counter systems;
other models of multithreaded programs
%% 
%% \draftnote{give more proper egs}
%% 
 are also known
to admit undecidable verification problems. 
That is why subclasses of runs have been introduced 
as well as problems related to the search for 'bounded runs' that may satisfy a 
desirable or undesirable property. 
\ifSHORT  
For instance, context-bounded model-checking
(bound on the number of context switches)~\cite{Qadeer&Rehof05}
allows to regain decidability. 
\else
%Here are standard examples:
%\begin{itemize}
%\itemsep 0 cm 
%\item 
For instance,
reversal-bounded counter automata have effectively semilinear reachability sets~\cite{Ibarra78},
context-bounded model-checking (bound on the number of context switches)~\cite{Qadeer&Rehof05}
(see also the more recent work dealing with complexity issues in~\cite{Esparza&Ganty11}),
and of course boun\-ded model-checking (BMC)
(bound on the distance of the reached positions)~\cite{Biereetal03}.
%\end{itemize}
\fi 

\paragraph{Motivations.}
In \cite{Qadeer&Rehof05}, \np-completeness of the reachability problem restricted
to context-bounded runs was shown. This was also implemented in a tool called ZINC
to verify safety properties and find bugs in actual programs in few context switches,
showing that feasibility of the approach. Since, there has been significant work on 
considering weaker restrictions
and other related problems.
This paper focuses on the study of model-checking problems
for multi-pushdown systems based on LTL-like dialects, naturally
allowing to express liveness properties, when some bounds are fixed.
Though decidability of 
these problems has been established in some recent works 
\ifLONG
(as a consequence
of considering more expressive logics like monadic second-order logic), 
\fi 
we aim to provide optimal computational
complexity analysis for LTL-like properties. In particular, we consider a LTL-like specification
language based on \caret~\cite{Alur&Etessami&Madhusudan04}, which strikes to us
as fitting given the interest of the model in program verification.
%We aimed at providing an optimal computational complexity analysis improving
%the known decidability results from the literature. 
%There exist decidability results but we wish to improve/refine complexity results with LTL-like
%specification languages such as CaReT~\cite{Alur&Etessami&Madhusudan04}.

\paragraph{Content.}
We consider the logic Multi-\caret, an extension of
\caret \  
\ifLONG
(which itself is a generalization of LTL)
\fi  
to be able to reason
about runs of concurrent recursive programs.
Next, we study the model-checking problem
of \multicaret \ formulae over multi-pushdown systems restricted to $k$-bounded
runs and give an \exptime \ upper bound, when $k$ is encoded in unary. Since
this problem is a generalization of LTL model checking pushdown systems which
is known to be \exptime-hard, this is an optimal result. Viewed as an extension 
of~\cite{Bouajjani&Esparza&Maler97},
we consider both a more general model and a more general logic, while still
preserving the complexity bounds.
%
% We should note here, if we think of $k$ encoded in binary, we get only get
% a 2-\exptime bound, which is still much better than existing non-elemenary
% complexity bounds.
% \draftnote{(Consider whether to make this point here or later. If here,
% this paragraph must be split into multiple ones)}
%
At a technical level, we focus on combining several approaches in order to achieve optimal complexity
bounds. In particular, we combine the approach taken in \caret \ model-checking of 
recursive state machines 
\ifLONG
 machines 
(equivalent to pushdown systems in expressiveness, but
more explicitly model recursive programs)~\cite{Alur&Etessami&Madhusudan04},
\else
machines,
\fi 
ideas from reachability analysis of multi-pushdown
systems~\cite{Schwoon02} 
and the techniques introduced in work on pushdown systems in~\cite{Bouajjani&Esparza&Maler97,Schwoon02}.
Also, we go further and consider less restrictive notions of boundedness that have
been considered in the literature, and obtain optimal or near-optimal complexity bounds for these.
To summarize the results in the paper,
\begin{itemize}
\itemsep 0 cm 
\item 
      Multi-\caret  \ model-checking over multi-pushdown systems with $k$-\textbf{context bounded runs}~\cite{Qadeer&Rehof05}
      is \exptime-complete  when $k$ is encoded in unary and it is in  2\exptime \  if the encoding is binary.
      The value $k$ is given as an input and not as a parameter of the problem,  
      which makes a substantial difference when complexity analysis is provided.
\item Multi-\caret \  model-checking over  multi-pushdown systems with $k$-\textbf{phase boun\-ded 
      runs}~\cite{LaTorre&Madhusudan&Parlato07} is
      \ifLONG 
      in 2\exptime \ when $k$ encoded in unary, and it is in 3-\exptime \ if the encoding is binary.
      \else
      2\exptime-complete when $k$ encoded in unary (optimal).
      \fi 
      Note that this problem can be encoded from developments in~\cite{BCGZ11} but
      the \exptime \ upper bound from~\cite{BCGZ11} applies when the number of phases
      is fixed. Otherwise, one gets an 3\exptime \ upper bound if $k$ is encoded in binary.
\item Similarly, Multi-\caret \  model-checking over \textbf{ordered
      multi-pushdown systems}~\cite{Atig&Bollig&Habermehl08}
       is
      \ifLONG
      in 2\exptime \ when $k$ encoded in unary, and it is in 3\exptime \ if the encoding is binary.
      \else
       2\exptime-complete when $k$ encoded in unary (optimal).
      \fi 
      This is the best we can hope for 
       since by~\cite{Atig10bis}, reachability problem, 
       existence of infinite runs with B\"uchi acceptance condition and LTL model-checking
       are decidable for ordered multi-pushdown systems, the latter problem being 2\etime-complete. 
       \ifLONG 
       Global model-checking is also decidable~\cite[Theorem 12]{Atig10bis}.
       \fi 
\end{itemize}

%% \paragraph{Existing work.}
In~\cite{Madhusudan&Parlato11}, decidability results can be found for \ifLONG several \fi 
classes of automata with auxiliary storage based on MSO property. This includes multi-pushdown systems with 
bounded context and ordered multi-pushdown systems. This might also include problems
with temporal logics as stated in~\cite{Madhusudan&Parlato11}. 
\ifLONG
Another work that is worth looking at is the one on 
games on multi-stack systems~\cite{Seth09} where parity games on bounded 
      multi-stack systems are shown  decidable thanks to a  new proof for the decidability of
      the emptiness problem.
\fi 
%%
%%  \begin{quote}
%%  ``Natural extensions of temporal logics like C A R E T that allow to
%%  reason with multi-stack pushdown automata are also possible, and can
%%  be proved decidable for all multi-stack automata whose runs can be
%%  modeled by graphs of bounded tree-width.''
%%  \end{quote}
%% 
  %% In particular, bounded context and ordered multi-pushdown systems as
  %% noted above. 
  Though, the complexity is non-elementary in the size of
  the formula, arrived at by using celebrated 
  \ifLONG Courcelle's Theorem~\cite{Courcelle90}, 
  \else Courcelle's Theorem, \fi 
 which has
  parameterized complexity non-elementary, the parameter being the size of
  formula plus the tree-width.
  Non-elementary lower bounds can be also reached with branching-time logics, see 
  e.g.~\cite{Atigetal12}. 

\ifSHORT
\noindent
Because of lack of space, omitted proofs can be found in the technical appendix.
\fi 

\paragraph{Comparison with two recent works.}
In this paper, we generalize the automata-based approach for LTL
to a linear-time temporal logic for multi-pushdown systems. A similar
approach have been also followed in the recent works~\cite{Atigetal12bis,LaTorre&Napoli12}.
Let us explain below the main differences with the present work. 

In~\cite{Atigetal12bis}, LTL model-checking on multi-pushdown systems when runs are $k$-scope-bounded
is shown \exptime-complete. Scope-boundedness strictly extends context-boundedness and therefore
Corollary~\ref{corollary-k-boundedness-reg}(I) and~\cite[Theorem 7]{Atigetal12bis} are closely related
even though  Corollary~\ref{corollary-k-boundedness-reg}(I) deals with Multi-\caret \  and it takes into
account only context-boundedness. 
Moreover, we are able to deal with regular constraints on stack contents
while keeping the optimal complexity upper bound. 
By contrast,~\cite{LaTorre&Napoli12} introduces an extension of \caret \ that is identical
to the variant we consider in our paper. Again,~\cite{LaTorre&Napoli12} deals with scope-boundedness
 and Corollary~\ref{corollary-k-boundedness-reg}(I) and~\cite[Theorem 6]{LaTorre&Napoli12} are closely related
even though Corollary~\ref{corollary-k-boundedness-reg}(I) takes into account only context-boundedness, 
which leads to a slightly different result. 
The upper bound for OBMC from Corollary~\ref{corollary-obmc} is a relatively simple consequence of
the way we can reduce model-checking to repeated reachability with generalized B\"uchi acceptance
conditions. Similarly,~\cite[Theorem 7]{LaTorre&Napoli12} provides an optimal complexity upper bound
for ordered multiply nested words, whence  Corollary~\ref{corollary-obmc} and~\cite[Theorem 7]{LaTorre&Napoli12}
are also  related. 

As a concluding remark, the work presented in this note is partly subsumed by the recent developments
presented in~\cite{Atigetal12bis,LaTorre&Napoli12}. 
Nevertheless, the upper bounds in Corollary~\ref{corollary-k-boundedness}(I) and in 
Corollary~\ref{corollary-k-boundedness-reg}(I) are original results apart from the way we build the synchronized
product in Section~\ref{section-synchronisation}. 
Moreover,  we believe that our developments
shed some useful light on technical issues. For instance, we deal with context-boundedness, phase-boundedness
and ordered multi-push\-down systems uniformly while providing in several cases optimal complexity upper
bounds. Our complexity analysis for context-boundedness relies on~\cite{Bouajjani&Esparza&Maler97,Schwoon02}
whereas for ordered multi-pushdown systems it relies on~\cite{Atig10bis} (for instance, this contrasts
with developments from~\cite[Section 5]{LaTorre&Napoli12}). Finally, our construction
allows us to add regularity constraints, that are known to go beyond first-order language, by a simple 
adaptation of the case for Multi-\caret.

\section{Preliminaries}
\label{section-preliminaries}

%%%%%%%%%%% Notational Preliminaries %%%%%%%%%
We write $\insqrbr{N}$ to denote the set $\set{1, 2, \ldots, N}$.
We also use a boldface  as a shorthand for elements indexed by $\mpsMoves$,
for e.g., $\vec{a}$ $=$ $\set{a_i \ |\ i \in \mpsMoves}$.
For an alphabet $\aalphabet$, $\aalphabet^*$ represents the set of finite words over $\aalphabet$, 
$\aalphabet^+$ the set of finite non-empty words over $\aalphabet$.  
For a finite word $\aword=\aletter_1\ldots \aletter_k$ over $\aalphabet$, 
we write $\length{\aword}$ to denote its \defstyle{length} $k$. 
For $0 \leq i < \length{\aword}$, $\aword(i)$ represents the $(i+1)$-th letter of the word, here $\aletter_{i+1}$. We use 
$\card{\aset}$ to denote the number of elements of a finite set $\aset$.

%%%%%%%%%%%%%%%%%% Subsection : Multipushdown system %%%%%%%%%%%%%%%%
\subsection{Multi-Pushdown Systems}
\label{section-multipushdownsystems}

\ifLONG
%%%%%%%% What is in this subsection %%%%%%%%%
In this section,  we first define multi-pushdown systems 
and then present a simple reduction into multi-pushdown systems with global states in 
 which the next active stack can be read. There exists a  correspondence
in terms of traces (which is what we need for the forthcoming model-checking
problems).
\fi 

%%%%%%%% Informal introduction (MPS) %%%%%%
Pushdown systems provide a natural execution model for programs with
recursion. A generalization with multiple stacks allows us to model threads.
The formal model is described below, which we will call
\defstyle{multi-pushdown systems}.
\begin{definition}
A \defstyle{multi-pushdown system}  is a tuple of the form 
$$\mps = 
 (\mpsStates, \mpsStackCount, \mpsAlphabet, \mpsTrans_1, 
 \ldots ,\mpsTrans_{\mpsStackCount}),$$ for some $N \geq 1$ such that:
\ifLONG
\begin{itemize}
\itemsep 0 cm
\item $\mpsStates$ is a non-empty finite set of  \defstyle{global states},
\item $\mpsAlphabet$ is the finite \defstyle{stack alphabet} containing the distinguished letter $\bottom$,
\item for every $s \in \insqrbr{\mpsStackCount}$, $\mpsTrans_s$ is the \defstyle{transition relation}
      acting on the $s$-th stack where $\mpsTrans_s$  
      is a relation included in $\mpsStates \times \mpsAlphabet \times \mpsStates \times \actions(\mpsAlphabet)$
      with $\actions(\mpsAlphabet)$ defined as 
$$\actions(\mpsAlphabet)  \egdef \bigcup_{\aletter \in \mpsAlphabet}\set{\actCall(\aletter),
\actRet(\aletter), \actInt(\aletter)}$$
      Elements of the set $\actions(\mpsAlphabet)$ are to be thought of as \defstyle{actions} modifying the stack with alphabet $\mpsAlphabet$. 
\end{itemize}
\else
$\mpsStates$ is a non-empty finite set of  \defstyle{global states},
$\mpsAlphabet$ is the finite \defstyle{stack alphabet} and
for every $s \in \insqrbr{\mpsStackCount}$, $\mpsTrans_s$ is the \defstyle{transition relation}
      acting on the $s$-th stack where $\mpsTrans_s$  
      is a relation included in $\mpsStates \times \mpsAlphabet \times \mpsStates \times \actions(\mpsAlphabet)$
      where $\actions(\mpsAlphabet)$ is defined as 
$\actions(\mpsAlphabet)  \egdef \bigcup_{\aletter \in \mpsAlphabet}\set{\actCall(\aletter),
\actRet(\aletter), \actInt(\aletter)}$. 
      Elements of the set $\actions(\mpsAlphabet)$ are to be thought of as 
\defstyle{actions} modifying the stack with alphabet $\mpsAlphabet$. 
\fi 
\end{definition}
A \defstyle{configuration} $\aconfig$ of the multi-pushdown system $\mps$ 
is the global state   along with contents of the $\mpsStackCount$ stacks, i.e. $\aconfig$ belongs to
$\mpsStates \times (\mpsStack)^\mpsStackCount$. 
For every $s \in  \mpsMoves$, we write $\step{s}$ to denote the \defstyle{one-step relation} with respect to
the $s$-th stack. Given two configurations 
 $\aconfig =(\mpsState, \mpsWord_1, \ldots, \mpsWord_s \mpsChar, \ldots
\mpsWord_\mpsStackCount)$ and $\aconfig'=(\mpsState', \mpsWord_1, \ldots,
\mpsWord_s', \ldots, \mpsWord_\mpsStackCount)$, $\aconfig \step{s} \aconfig'$ 
$\equivdef$ $(\mpsState, \mpsChar, \mpsState', \anaction(\aletterbis)) \in \mpsTrans_s$
where $\anaction(\aletterbis)$ reflects the change in the stack enforcing one of the conditions below:
\ifLONG
\begin{itemize}
\itemsep 0 cm 
\item $\aword_{s} = \aword_{s}'$, $\anaction = \actRet$ and $\aletter=\aletterbis$,
\item $\aword_{s}'  = \aword_s \aletterbis$ and $\anaction = \actInt$,
\item $\aword_{s}' = \aword_s \aletter \aletterbis$ and $\anaction = \actCall$.
\end{itemize}
\else
 $\aword_{s} = \aword_{s}'$, $\anaction = \actRet$ and $\aletter=\aletterbis$,
or $\aword_{s}'  = \aword_s \aletterbis$ and $\anaction = \actInt$, or
$\aword_{s}' = \aword_s \aletter \aletterbis$ and $\anaction = \actCall$.
\fi 
The letter $\bottom$ from the stack alphabet plays a special role; indeed the initial content
of each stack is precisely $\bottom$. Moreover, $\bottom$ cannot be pushed, popped or replaced
by any other symbol. This is a standard way to constrain the transition relations and to check for `emptiness' of
the stack (i.e. equal to $\bottom$). 
We write $\step{\mps}$ to denote the relation $(\bigcup_{s \in \mpsMoves} \step{s})$.  
Note that given a configuration $\aconfig$, there may exist $\aconfig_1$, $\aconfig_2$ and $i_1 \neq i_2 \in 
\insqrbr{\mpsStackCount}$ such that $\aconfig \step{i_1} \aconfig_1$ and  $\aconfig \step{i_2} \aconfig_2$,
which is the fundamental property to consider such models as adequate for modeling concurrency. 
An infinite \defstyle{run} is an $\omega$-sequence of configurations 
$\aconfig_0, \aconfig_1, \aconfig_2, \ldots$ such that for every $i \geq 0$, we have 
$\aconfig_i \step{\mps} \aconfig_{i+1}$. 
If $\aconfig_i \step{s} \aconfig_{i+1}$, then we say that for that 
step, the $s$-th stack is \defstyle{active}. Similar notions can be defined for finite  runs.
As usual, we write $\aconfig \cstep{*}{} \aconfig'$ whenever there is a finite run from
$\aconfig$ to $\aconfig'$. 
\ifLONG
A standard problem on multi-pushdown systems is the state reachability problem
defined below:
\begin{description}
\itemsep 0 cm
\item[input:] $(\mps, \aconfig, \astate)$ where $\mps$ is a multi-pushdown system, $\aconfig$ is a configuration of $\mps$ 
and $\astate$ a global state of $\mps$.
\item[question:] is there a finite run from $\aconfig$ to some configuration $\aconfig'$ such that
the global state of $\aconfig'$ is $\astate$?
\end{description}
\fi 

An \defstyle{enhanced} multi-pushdown system is a multi-pushdown system of the form
$\mps = 
 (\mpsStates \times \mpsMoves, \mpsStackCount, \mpsAlphabet, \mpsTrans_1,\ldots, \mpsTrans_{\mpsStackCount})$ such that
for every $s \in \mpsMoves$, $\mpsTrans_s \subseteq 
 (\mpsStates \times \set{s}) \times \mpsAlphabet \times (\mpsStates  \times \mpsMoves) \times \actions(\mpsAlphabet)$.
In such multi-pushdown systems, the global state contains enough information to determine the 
next \defstyle{active} stack. 
Observe that the way the one-step relation is defined, we do not necessarily need to carry this information
as part of the finite control (see Lemma~\ref{lemma:mpsdefs}). We do
that in order to enable us to assert about active stack in our logic (see Section~\ref{section-caret}),
and for technical convenience.

\begin{lemma}
\label{lemma:mpsdefs}
Given a multi-pushdown system $\mps = (\mpsStates, \mpsStackCount, \mpsAlphabet,
  \vec{\mpsTrans})$, one can construct in polynomial time
an enhanced multi-pushdown system $\mps' = (\mpsStates \times \mpsMoves, \mpsStackCount, \mpsAlphabet,
  \vec{\mpsTrans'})$ such that
\begin{description}
\itemsep 0 cm
\item[(I)] For every infinite run of $\mps$ of the form
\ifLONG 
  \[ \aconfig_0 \step{s_0} \aconfig_1 \step{s_1}
     \cdots \aconfig_t \step{s_t} \aconfig_{t+1} \cdots \]
\else
$\aconfig_0 \step{s_0} \aconfig_1 \step{s_1}
     \cdots \aconfig_t \step{s_t} \aconfig_{t+1} \cdots$
\fi 
there exists an infinite run $\aconfig_0' \step{s_0} \aconfig_1' \step{s_1}
     \cdots \aconfig_t' \step{s_t} \aconfig_{t+1}' \cdots$ of $\mps'$ 
such that $(\star)$ for $t \geq 0$, if $\aconfig_t =  (\astate_t,\inbrace{\stackStimeT{s}{t}}_s)$, then 
 $\aconfig_t' =  (\pair{\astate_t}{s_t},\inbrace{\stackStimeT{s}{t}}_s)$.
\item[(II)] Similarly, for every infinite run of $\mps'$ of the form
\ifLONG
\[ \aconfig_0' \step{s_0} \aconfig_1' \step{s_1}
     \cdots \aconfig_t' \step{s_t} \aconfig_{t+1}' \cdots \]
\else
$\aconfig_0' \step{s_0} \aconfig_1' \step{s_1}
     \cdots \aconfig_t' \step{s_t} \aconfig_{t+1}' \cdots$
\fi 
there exists an infinite run $\aconfig_0 \step{s_0} \aconfig_1 \step{s_1}
     \cdots \aconfig_t \step{s_t} \aconfig_{t+1} \cdots$ of $\mps$
such that  ($\star$). 
\end{description}
\end{lemma}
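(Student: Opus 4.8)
The statement is essentially a bookkeeping lemma: the enhanced system $\mps'$ should just simulate $\mps$ while carrying, as a second component of the global state, the identity of the stack that will be used in the \emph{next} step. The plan is to define $\mps'$ by a guess-and-check construction on transitions, and then prove (I) and (II) by straightforward inductions on the length of the run (or rather, as invariants that extend step by step to the infinite run).

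\medskip

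\noindent\textbf{Construction.} First I would set the state space of $\mps'$ to $\mpsStates \times \mpsMoves$, as prescribed, and define the transition relations $\vec{\mpsTrans'}$ as follows. For each $s \in \mpsMoves$, put a transition $((\astate, s), \aletter, (\astate', s'), \anaction(\aletterbis)) \in \mpsTrans'_s$ exactly when $(\astate, \aletter, \astate', \anaction(\aletterbis)) \in \mpsTrans_s$ and $s' \in \mpsMoves$ is \emph{arbitrary}. In other words: the first component evolves exactly as in $\mps$, the rule fired must be legal for the stack $s$ named in the incoming state, and the outgoing state nondeterministically guesses the index $s'$ of the stack to be used next. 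This clearly satisfies the syntactic restriction required of an enhanced system, namely $\mpsTrans'_s \subseteq (\mpsStates \times \{s\}) \times \mpsAlphabet \times (\mpsStates \times \mpsMoves) \times \actions(\mpsAlphabet)$, and it is computable in polynomial time (each rule of $\mpsTrans_s$ spawns $\mpsStackCount^2$ rules of $\mpsTrans'_s$, one for each pair of in/out stack indices). The initial configuration of $\mps'$ pairs the global state of $\mps$'s initial configuration with the index $s_0$ of the first active stack.

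\medskip

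\noindent\textbf{Direction (I).} Given an infinite run $\aconfig_0 \step{s_0} \aconfig_1 \step{s_1} \cdots$ of $\mps$ with $\aconfig_t = (\astate_t, \{\stackStimeT{s}{t}\}_s)$, I would exhibit the run of $\mps'$ obtained by taking $\aconfig_t' = ((\astate_t, s_t), \{\stackStimeT{s}{t}\}_s)$ and verify that $\aconfig_t' \step{s_t} \aconfig_{t+1}'$ for every $t$. This is immediate from the construction: since $\aconfig_t \step{s_t} \aconfig_{t+1}$, there is a rule $(\astate_t, \aletter, \astate_{t+1}, \anaction(\aletterbis)) \in \mpsTrans_{s_t}$ witnessing the required change on stack $s_t$, hence $((\astate_t, s_t), \aletter, (\astate_{t+1}, s_{t+1}), \anaction(\aletterbis)) \in \mpsTrans'_{s_t}$ (the guessed next index being precisely $s_{t+1}$), and the stack contents transform identically because the action $\anaction(\aletterbis)$ is the same. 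Property $(\star)$ holds by the very definition of $\aconfig_t'$. Note that the only place nondeterminism is used is in matching the guessed index to the actual next active stack, which is possible because $s_{t+1} \in \mpsMoves$ always.

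\medskip

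\noindent\textbf{Direction (II).} Conversely, given a run $\aconfig_0' \step{s_0} \aconfig_1' \step{s_1} \cdots$ of $\mps'$, I would first observe that, since $\mpsTrans'_s$ only contains rules whose source state has second component $s$, the fact that the step at time $t$ uses stack $s_t$ forces $\aconfig_t'$ to have the form $((\astate_t, s_t), \{\stackStimeT{s}{t}\}_s)$ for some $\astate_t$ and stack contents $\stackStimeT{s}{t}$ --- this is the key consistency invariant, and it is proved by induction on $t$ using that the second component emitted by the rule at time $t-1$ must equal $s_t$ (again because a rule of $\mpsTrans'_{s_t}$ can only fire from a state whose tag is $s_t$). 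Having established this shape, projecting away the second component gives configurations $\aconfig_t = (\astate_t, \{\stackStimeT{s}{t}\}_s)$, and each step $\aconfig_t' \step{s_t} \aconfig_{t+1}'$ of $\mps'$ descends from a rule of $\mpsTrans'_{s_t}$ that by construction projects to a rule of $\mpsTrans_{s_t}$ effecting the same action $\anaction(\aletterbis)$, so $\aconfig_t \step{s_t} \aconfig_{t+1}$ in $\mps$. Property $(\star)$ is then exactly the relationship between $\aconfig_t$ and $\aconfig_t'$ just exhibited.

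\medskip

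\noindent\textbf{Main obstacle.} There is no deep obstacle; the lemma is a routine simulation argument. The one point requiring a little care is the consistency invariant in direction (II): one must argue that a run of $\mps'$ cannot ``drift'' into a state whose tag disagrees with the index of the stack actually used next, and this relies on the syntactic restriction that the rule sets $\mpsTrans'_s$ partition along the tag of the source state. I would also be mildly careful about the treatment of the distinguished bottom symbol $\bottom$ and the initial configuration, but since the construction touches neither the stack alphabet nor the actions, every constraint involving $\bottom$ is inherited verbatim from $\mps$.
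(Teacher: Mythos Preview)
Your proposal is correct and matches the paper's approach: the paper itself dismisses this lemma with the single line ``The proof is by an easy verification,'' and your guess-and-check construction together with the two simulation arguments is exactly the routine verification one would expect. One tiny slip: each rule of $\mpsTrans_s$ spawns only $\mpsStackCount$ rules of $\mpsTrans'_s$ (the source tag is forced to be $s$, only the target tag $s'$ is free), not $\mpsStackCount^2$; this does not affect the polynomial-time claim or anything else.
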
 

The proof is by an easy verification. In the sequel, 
\ifLONG without any loss of generality,
\else w.l.o.g., 
\fi 
 we shall consider 
enhanced multi-pushdown systems only since all the properties that can be expressed in our logical
languages are linear-time properties. 
\ifLONG
For instance, there is a logarithmic-space reduction
from the state reachability problem to its restriction to enhanced multi-pushdown systems.
\fi 

\subsection{Standard Restrictions on Multi-Pushdown Systems}

State reachability problem is known to be undecidable by  
a simple reduction from the non-emptiness problem for intersection
of context-free grammars. This has motivated works on the definition of
restrictions on runs so that decidability can be regained (for state reachability problem but also
for model-checking problems). We recall below three standard notions
for boundedness; other notions can be found in~\cite{LaTorre&Napoli11,Cyriac&Gastin&Kumar12}. 
Definitions are provided for infinite runs
but they can be easily adapted to finite runs too. 
%% 
%% By a simple reduction from the nonemptiness problem for intersection
%% of context-free grammars, reachability problem for multi-pushdown
%% systems can be shown undecidable.
%% \begin{theorem} (give a reference)
%% Global state reachability for multi-pushdown systems
%% is undecidable.
%% \end{theorem}
%% This motivates considering restrictions of for which reachability and
%% model checking may be decidable while still being expressible.

In the notion of $k$-boundedness defined below, a phase is understood as
a sub-run such that a single stack is active (see e.g.~\cite{Qadeer&Rehof05}).
 
\begin{definition}
\label{definition-k-boundedness}
Let $\arun = \aconfig_0 \step{s_0} \aconfig_1 \step{s_1}
     \cdots \aconfig_t \step{s_t} \aconfig_{t+1} \cdots$ be an infinite run and $k \geq 0$.
We say that $\arun$ is \defstyle{$k$-bounded} if 
there exist positions $i_1 \leq i_2 \leq 
  \ldots \leq i_{k-1}$ such that
  $s_{t} = s_{t+1}$ for all $t \in \Nat \setminus \set{i_1 \ldots i_{k-1}}$.
\end{definition}

In the notion of $k$-phase-boundedness defined below, a phase is understood as
a sub-run such that return actions are performed on a single stack, see 
e.g.~\cite{LaTorre&Madhusudan&Parlato07}.

\begin{definition}
\label{definition-k-phase-boundedness}
Let $\arun = \aconfig_0 \step{s_0} \aconfig_1 \step{s_1}
     \cdots \aconfig_t \step{s_t} \aconfig_{t+1} \cdots$ be an infinite run and $k \geq 0$.
We say that $\arun$ is \defstyle{$k$-phase-bounded} if there is a partition $\asetbis_1, \ldots,
\asetbis_{\alpha}$ of $\Nat$ with $\alpha \leq k$ such that
for every $j \in \interval{1}{\alpha}$ there is $s \in  \mpsMoves$ 
\ifLONG such that \else s.t. \fi
for every $i \in \asetbis_{j}$, if a return action is performed from $\aconfig_i$ to
$\aconfig_{i+1}$, then it is done on the $s$th stack.
\end{definition}

In the notion of order-boundedness defined below, the stacks are linearly ordered and a return action
on a stack can only be performed if the smallest stacks are empty,  see 
e.g.~\cite{Atig&Bollig&Habermehl08}.

\begin{definition}
\label{definition-order-boundedness}
Let $\mps$ be a multi-pushdown system and $\preceq = \pair{\mpsMoves}{\leq}$ be a (finite) total ordering of the
stacks.  
Let $\arun = \aconfig_0 \step{s_0} \aconfig_1 \step{s_1}
     \cdots \aconfig_t \step{s_t} \aconfig_{t+1} \cdots$ be an infinite run.
We say that $\arun$ is \defstyle{$\preceq$-bounded} if 
for every $t \in \Nat$ that a return is performed on the $s$-th stack,
 %% such that 
%% a return action on some $s$-th stack is performed from $\aconfig_t$ to
%% $\aconfig_{t+1}$,
all the stacks strictly smaller than $s$ with respect to  $\preceq$
are empty.
\end{definition}

%% 
%%  We say a run $\arun = \{c_t\}_t$, $c_t =
%%  (\mpsState_t, \mpsMove_t, \mpsWord_{t,1}, \mpsWord_{t,2} \ldots
%%  \mpsWord_{t,\mpsStackCount})$ of a multi-pushdown system $\mps$ is
%%  \emph{$k$-bounded} if there exist positions $i_1 \leq i_2
%%  \ldots \leq i_{k-1}$ such that
%%  $s_{t} = s_{t+1}$ for all $t \in \Nat \setminus \set{i_1 \ldots i_{k-1}}$.
%% \end{definition}
%% 
%% \item $k$-boundedness where each phase admits pop operations of one particular stack,
%% \item ordering the stacks as in~\cite{Atig10bis} (this can simulate the above restriction, 
%% see~\cite{Atig&Bollig&Habermehl08}),

\section{A \ifLONG Rich Specification \fi Language for Multi-Pushdown Systems: \multicaret}
\label{section-caret}

Below, we introduce \multicaret, an extension of the logic \caret \ proposed in~\cite{Alur&Etessami&Madhusudan04}, and
dedicated to runs of multi-pushdown systems (instead of for runs of recursive state machines as done 
in~\cite{Alur&Etessami&Madhusudan04}). 
For instance, the logical language can state that a stack is active. 
%% or it can also specify whether a stack content belongs to a regular language or not (regularity constraint). 
Moreover,
the temporal operators are sometimes parameterized by a stack; for instance, the abstract next binary relation
can be naturally extended to the case when several stacks are present. Note that the logic presented below can
be easily seen as a fragment of monadic second-order logic and therefore the decidability results
from~\cite{Madhusudan&Parlato11,BCGZ11} apply to the forthcoming model-checking problems.
However, our definition makes a compromise between a language of linear-time temporal properties
that extends the logic from~\cite{Alur&Etessami&Madhusudan04} and the most expressive logic for which
our model-checking problems are known to be decidable. Indeed, we aim at proposing optimal 
complexity characterizations. 
%% 
%% 
%% We introduce below an extension of the logic CaReT introduced in~\cite{Alur&Etessami&Madhusudan04}
%% for analysing runs of multi-pushdown systems. Instead of defining runs from recursive
%% state machines as in~\cite{Alur&Etessami&Madhusudan04}, runs are defined from
%% multi-pushdown systems and we introduce properties specific to the presence of several threads
%% (which threads are active, abtract next relative to threads) as well as regularity constraints
%% on threads.
%% 
The logic presented below is identical the one presented in~\cite{LaTorre&Napoli12} except 
for the presence of regular constraints.

\subsection{Definition}
\label{section-definition-logic}

Models of Multi-\caret \ are infinite runs of multi-pushdown systems. For each (enhanced) multi-pushdown system 
$\mps = 
 (\mpsStates\times \mpsMoves, \mpsStackCount, \mpsAlphabet, \mpsTrans_1, 
 \ldots ,\mpsTrans_{\mpsStackCount})$, we define the fragment \multicaret($\mps$) of \caret \ that uses syntactic
resources from $\mps$ (namely $\mpsStates$ and $\mpsMoves$). 
The full language \multicaret~is defined as the union of all the sub-languages \multicaret($\mps$). 
Formulae of \multicaret($\mps$) are defined according to the grammar
\ifLONG
 below:
\begin{eqnarray*}
  \aformula & := & \ltlState    % atomic formulae
                    \ |\ s
                    %% \ |\ \aregcons{s}
                    \ |\ \actCall
                    \ |\ \actRet
                    \ |\ \actInt
                    \\ & & 
                    \ |\ \aformula \vee \aformula %formula building
                    \ |\ \neg \aformula
                    \\ & &
                    \ |\ \next \aformula % temportal stuff
                    \ |\ \aformula \until \aformula
                    \ |\ \next_s^\opAbs \aformula 
                    \ |\ \aformula \until_s^\opAbs \aformula
                    \ |\ \next_s^\opCall \aformula
                    \ |\ \aformula \until_s^\opCall \aformula
\end{eqnarray*}
where $s \in \mpsMoves$, $\ltlState \in \mpsStates$.
%% and $\apattern$ is a regular language over
%% the alphabet $\mpsAlphabet$ defined by a regular expression. 
\else
$\aformula :=  \ltlState    % atomic formulae
                    \ |\ s
                    %% \ |\ \aregcons{s}
                    \ |\ \actCall
                    \ |\ \actRet
                    \ |\ \actInt
                    \ |\ \aformula \vee \aformula %formula building
                    \ |\ \neg \aformula
                    \ |\ \next \aformula % temportal stuff
                    \ |\ \aformula \until \aformula
                    \ |\ \next_s^\opAbs \aformula 
                    \ |\ \aformula \until_s^\opAbs \aformula
                    \ |\ \next_s^\opCall \aformula
                    \ |\ \aformula \until_s^\opCall \aformula
$
where $s \in \mpsMoves$, $\ltlState \in \mpsStates$.
\fi 
Models of \multicaret($\mps$) formulae
are $\omega$-sequences  in $\inparen{ \mpsStates \times \mpsMoves \times (\mpsStack)^\mpsStackCount }^\omega$, 
which can be obviously understood as infinite runs of $\mps$.
%%
%% \paragraph{Model} $\arun \in \inparen{ \mpsStates \times \mpsMoves \times (\mpsStack)^\mpsStackCount }^\omega$
%% 
\paragraph{Semantics.}
Given an infinite run $\arun = \aconfig_0 \aconfig_1 \ldots \aconfig_t \ldots$
with $\aconfig_t = (\mpsState_t, \mpsMove_t, \stackStimeT{1}{t},\ldots, \stackStimeT{\mpsStackCount}{t})$ 
%% \draftnote{third notation}\draftnotelong{for $w_{x,y}$ -- this time subscript arguments swapped}
for every position $t \in \Nat$, the satisfaction relation $\arun, t \models \aformula$ with $\aformula$ in 
Multi-\caret($\mps$) is defined inductively as follows (successor relations are defined just below):
%% 
%% We define $(\arun,i) \models \phi$ inductively and say, $\arun \models \phi$
%% iff $(\arun,0) \models \phi$.
%% \[ \arun = c_0 c_1 \ldots c_t \ldots \qquad c_t = (\mpsState_t, \mpsMove_t, 
%% \mpsWord_{t,1}, \mpsWord_{t,2} \ldots \mpsWord_{t,\mpsStackCount}) \]
{\allowdisplaybreaks

\begin{align*}
&  \arun, t \models \ltlState &
    \text{ iff } & 
    \mpsState_t = \mpsState
    \\
&  \arun, t \models \mpsMove &
    \text{ iff } &
    \mpsMove_t = \mpsMove\\
%%   \arun, t \models \aregcons{s} &
%%     \text{ iff } &
%%     \mpsWord_{t,s} \in L(\apattern)\\
&  \arun, t \models \anaction &
    \text{ iff } &
    % May be a bit convoluted way of defining it, but hey, it is compact!
    \pair{\anaction}{\length{\stackStimeT{\mpsMove_t}{t+1}} -\length{\stackStimeT{\mpsMove_t}{t}}} \in 
    \set{\pair{\actCall}{1}, \pair{\actInt}{0}, \pair{\actRet}{-1}} \\
&  \arun, t \models \aformula_1 \vee \aformula_2 &
    \text{ iff } &
    \arun,t \models \aformula_1 \text{ or } \arun,t \models \aformula_2
    \\
&  \arun,t \models \neg \aformula &
    \text{ iff } &
    \arun, t \not\models \aformula
    \\
&  \arun,t \models \next \aformula &
    \text{ iff } &
    \arun, \Succ_\arun(t) \models \phi
    \\
& \arun, t \models \aformula_1 \until \aformula_2 &
    \text{ iff } &
    \text{there is a sequence of positions } i_0 = t \text{, }i_1 \ldots\text{, }
    i_k \text{, s.t.} \\
&    & & \text{for } j < k\text{, } i_{j+1} = 
    \Succ_\arun(i_j)\text{, } \arun, i_j \models \aformula_1  \text{ and } 
    \arun,i_k \models \aformula_2
    \\
\intertext{For $b \in \set{\opAbs,\opCall} $ and $s \in \mpsMoves$:}
&  \arun,t \models \next^b_s \aformula &
    \text{ iff } & \Succ^{b,s}_\arun(t)  \ \text{is defined and} \ 
    \arun, \Succ^{b,s}_\arun(t) \models \phi
    \\
&  \arun, t \models \aformula_1 \until^a_s \aformula_2 &
    \text{ iff } &
    \text{there exists a sequence of positions } t \leq i_0 < i_1 \\
&    & &\cdots < i_k \text{, where } i_0 \text{ smallest such with } s_{i_0} = s, \text{ for}\\
&    & &  j < k\text{, } i_{j+1} =  \Succ^{a,s}_\arun(i_j)\text{, }
    \arun, i_j \models \aformula_1  \text{ and} \ \arun,i_k \models \aformula_2
    \\
&  \arun, t \models \aformula_1 \until^c_s \aformula_2 &
    \text{ iff } &
    \text{there exists a sequence of positions } t \geq i_0 > i_1 \\
&    & &\cdots > i_k \text{, where } i_0 \text{ greatest such with } s_{i_0} = s, \text{ for}\\
&    & &  j < k\text{, } i_{j+1} =  \Succ^{c,s}_\arun(i_j)\text{, }
    \arun, i_j \models \aformula_1  \text{ and} \ \arun,i_k \models \aformula_2
    \\
\end{align*}
}

The above definition for $\models$ distinguishes three successor relations: 
\defstyle{global} successor relation, the \defstyle{abstract} successor relation that jumps  to the
first future position after a return action at the same level, if any, and the \defstyle{caller} successor
relation that jumps to the latest past position before a call action at the same level, if any. 
Here are the formal definitions:
\begin{itemize}
\itemsep 0 cm 
\item $\Succ_\arun(t) \egdef t+1$ for every $t \in \Nat$ (standard).
\item $\Succ^{\opAbs,s}_\arun(t)$ is defined below whenever $s$ is active at position $t$:
  \begin{enumerate}
  \itemsep 0 cm
  \item If $\length{\stackStimeT{s}{t+1}} = \length{\stackStimeT{s}{t}} + 1$ (call), then 
   $\Succ^{\opAbs,s}_\arun(t)$  is the smallest $t'>t$ such
  that 
  $s_{t'} = s$ and 
  $\length{\stackStimeT{s}{t'}} = \length{\stackStimeT{s}{t}}$. If there is no such $t'$ then 
  $\Succ^{\opAbs,s}_\arun(t)$  is undefined.

  \item If $\abs{\stackStimeT{s}{t+1}} = \abs{\stackStimeT{s}{t}}$ (internal),
  then $\Succ^{\opAbs,s}_\arun(t)$ is the smallest $t'>t$ such that $s_{t'} = s$ (first position when $s$th stack is active).

  \item If $\abs{\stackStimeT{s}{t+1}} = \abs{\stackStimeT{s}{t}} - 1$ (return), 
   then $\Succ^{\opAbs,s}_\arun(t)$  is undefined.
  \end{enumerate}
\item $\Succ^{\opCall,s}_\arun(t)$ (caller of $s$-th stack): largest $t'<t$ such
  that $s_{t'} = s$ and $\abs{\stackStimeT{s}{t'}} = \abs{\stackStimeT{s}{t}} - 1$. If such a $t'$ does not
  exist, then $\Succ^{\opCall,s}_\arun(t)$  is undefined.
\end{itemize}

\noindent
In the sequel, we write $\arun \models \aformula$ whenever $\arun, 0 \models \aformula$. 

\paragraph{Adding regularity constraints.} Regularity constraints are the most 
natural and simple constraints on stack contents and still such properties
are not always expressible in first-order 
logic (or equivalently in plain LTL).
\ifLONG
Such constraints have a second-order flavour
thanks to the close relationship between MSO and regular languages.
We define Multi-\caret$^{\sc reg}$ as 
the extension
of Multi-\caret \ in which regularity constraints on stack contents can be expressed. 
\fi 
Logic Multi-\caret$^{\sc reg}$  
is defined 
from Multi-\caret \ by adding atomic formulae of the form ${\tt in}(s, \aautomaton)$ where 
$s$ is a stack identifier
and $\aautomaton$ is a \ifLONG finite-state automaton \else FSA \fi  
over the stack alphabet $\mpsAlphabet$. The \ifLONG satisfaction \fi relation $\models$
is extended accordingly: $\arun,t \models \ {\tt in}(s, \aautomaton)$ $\equivdef$ 
$\stackStimeT{s}{t} \in \alang(\aautomaton)$ where $\alang(\aautomaton)$ is the set of finite words
accepted by $\aautomaton$. 
Note that regularity constraints can be expressed on
each stack. 
Even though most of the developments in the paper are done with Multi-\caret, 
we shall see that all our complexity
upper bounds still hold true with Multi-\caret$^{\sc reg}$.
This is despite the fact that these new constraints have a second-order flavour.

\paragraph{Another set of temporal operators.} 
Temporal operators $\next^a_s$ and $\until_s^a$ in Multi-\caret \ not only are abstract operators
that refer to future positions reached after returns but also they are parameterized by stacks. 
We made the choice to present these operators for their expressive power but also because
they are quite handy in forthcoming technical developments. Below, we briefly present 
the alternative operators $\next_s$,
$\next^a$ and $\until^a$ and we show  how they are related to the operators from Multi-\caret. 

\begin{itemize}
\itemsep 0 cm 
\item  $\arun,t \models \next_s \aformula$ $\equivdef$ there is $t' > t$ such that $s_{t'} = s$ 
       and for the smallest $t'$, we have $\arun, t' \models \aformula$. 
      So, $\next_s \aformula$  states that the next position
      when the stack $s$ is active (if any), $\aformula$ holds true. 
\item $\arun,t \models \next^a \aformula$ $\equivdef$
$\Succ^{a,s_t}_\arun(t)$ is defined and $\arun, \Succ^{a,s_t}_\arun(t) \models \aformula$.
So, $\next^a \aformula$ states that next time the current stack performs a return action, 
 $\aformula$ holds true. 
\item $\arun, t \models \aformula_1 \until^a \aformula_2$  $\equivdef$
    there exists a sequence of positions $t = i_0 < i_1 \cdots < i_k$ where for
    $j < k$, $i_{j+1} =  \Succ^{a,s_t}_\arun(i_j)$,
    $\arun, i_j \models \aformula_1$ and $\arun,i_k \models \aformula_2$. 
\end{itemize}

\ifLONG
Let us write $s$ to denote the formula (disjunction of atomic formulae) stating that the current
active stack is $s$. Note that ($\equiv$ denotes logical equivalence):
$$
\next_s \aformula \equiv (\neg s \until (s \wedge \aformula)) \ \ \
\next^a \aformula \equiv (\bigwedge_{s} (s \Rightarrow \next_s^a \aformula)) \ \ \
\aformula_1 \until^a \aformula_2 \equiv  (\bigwedge_{s} (s \Rightarrow \aformula_1 \until_s^a \aformula_2))
$$

Similarly, we have the following equivalences:
$$
\next^a_s \aformula \equiv (\neg s \Rightarrow \next_s \aformula) \wedge
                           (s \Rightarrow \next^a \aformula) \ \ \
\aformula_1 \until_s^a \aformula_2 \equiv
(\neg s \Rightarrow \next_s (\aformula_1 \until^a \aformula_2)) \wedge
(s \Rightarrow \aformula_1 \until^a \aformula_2) \ \ \
$$

Hence, it is worth noting that the choice we made about the set of primitive operators, does not strictly decrease
the expressive power but we shall see that the operators from Multi-\caret \ happen to be  extremely helpful
in forthcoming technicalities. 
\else
Let us write $s$ to denote the formula (disjunction of atomic formulae) stating that the current
active stack is $s$. Note that ($\equiv$ denotes logical equivalence):
$\next_s \aformula \equiv (\top \until (s \wedge \aformula))$, 
$\next^a \aformula \equiv (\bigwedge_{s} (s \Rightarrow \next_s^a \aformula))$,
$\aformula_1 \until^a \aformula_2 \equiv  (\bigwedge_{s} (s \Rightarrow \aformula_1 \until_s^a \aformula_2))$.
Similarly, we have the following equivalences:
$
\next^a_s \aformula \equiv (\neg s \Rightarrow \next_s \aformula) \wedge
                           (s \Rightarrow \next^a \aformula)$,
$\aformula_1 \until_s^a \aformula_2 \equiv
(\neg s \Rightarrow \next_s (\aformula_1 \until^a \aformula_2)) \wedge
(s \Rightarrow \aformula_1 \until^a \aformula_2)$.
Hence, it is worth noting that the choice we made about the set of primitive operators, does not strictly decrease
the expressive power but we shall see that the operators from Multi-\caret \ happen to be  extremely helpful
in forthcoming technicalities. 
\fi

\subsection{Decision Problems}

Let us introduce the model-checking problems considered in the paper.
\ifLONG
\noindent
Model-checking problem for multi-pushdown systems (MC):
\begin{description}
\itemsep 0 cm
\item[input:] $(\mps, \astate_0, \aformula)$ where $\mps$ is a multi-pushdown system $\mps$,
$\astate_0$ gives an initial configuration $\pair{\astate_0}{(\bottom)^N}$,
$\aformula$ is a formula in Multi-\caret($\mps$).
\item[question:] Is there an infinite run $\arun$ from 
$\pair{\astate}{(\bottom)^N}$ such that $\arun \models \aformula$?
\end{description}
\else
The model-checking problem for multi-pushdown systems (MC) is defined such that
it takes as inputs a multi-pushdown system $\mps$, a configuration $\pair{\astate}{(\bottom)^N}$ and
a formula $\aformula$ in Multi-\caret($\mps$) and asks whether
there is an infinite run $\arun$ from 
$\pair{\astate}{(\bottom)^N}$ such that $\arun \models \aformula$.
\fi
We know that \ifLONG the model-checking problem for multi-pushdown systems \else MC \fi 
is undecidable
whereas its restriction to a single stack is \exptime-complete~\cite{Alur&Etessami&Madhusudan04}.
Now, let us turn to bounded model-checking problems. 
\ifLONG
\noindent
Bounded model-checking problem for multi-pushdown systems (BMC):
\begin{description}
\itemsep 0 cm
\item[input:] $(\mps, \astate_0, \aformula, k)$ where $\mps$ is a multi-pushdown system $\mps$,
$\astate_0$ gives an initial configuration $\pair{\astate_0}{(\bottom)^N}$,
$\aformula$ is a formula in Multi-\caret($\mps$) and $k \in \Nat$ is a natural number thought of as a bound.
\item[question:] Is there an infinite $k$-bounded run $\arun$ from 
$\pair{\astate}{(\bottom)^N}$ such that $\arun \models \aformula$?
\end{description}
\else
Bounded model-checking problem for multi-pushdown systems (BMC) is defined such that
it takes as inputs a multi-pushdown system $\mps$, a configuration $\pair{\astate}{(\bottom)^N}$,
a formula $\aformula$ in Multi-\caret($\mps$)  and a bound $k \in \Nat$ and it asks whether
there is  an infinite $k$-bounded run $\arun$ from 
$\pair{\astate}{(\bottom)^N}$ such that $\arun \models \aformula$.
\fi 
Note that $k \in \Nat$ is an input of the problem and not a parameter of BMC.
This makes a significant difference for complexity  since usually complexity can increase
when passing from being a constant to being an input. 

Phase-bounded model-checking problem (PBMC) is defined similarly by replacing in the above definition
'$k$-bounded run' by '$k$-phase-bounded run'. Similarly, we can obtain a definition with order-boundedness. 
\ifLONG
\noindent
Order-bounded model-checking problem for multi-pushdown systems (OBMC):
\begin{description}
\itemsep 0 cm
\item[input:] $(\mps, \astate_0, \aformula, \preceq)$ where $\mps$ is a multi-pushdown system $\mps$,
$\astate_0$ gives an initial configuration $\pair{\astate_0}{(\bottom)^N}$,
$\aformula$ is a formula in Multi-\caret($\mps$) and $\preceq = \pair{\mpsMoves}{\leq}$ is a total ordering of the
stacks.
\item[question:] Is there an infinite $\preceq$-bounded run $\arun$ from 
$\pair{\astate}{(\bottom)^N}$ such that $\arun \models \aformula$?
\end{description}
\else
Order-bounded model-checking problem for multi-pushdown systems (OBMC) is defined such that
it takes as inputs  a multi-pushdown system $\mps$, a configuration $\pair{\astate}{(\bottom)^N}$,
a formula $\aformula$ in Multi-\caret($\mps$)  and a total ordering of the
stacks  $\preceq = \pair{\mpsMoves}{\leq}$ and it asks whether there is 
an infinite $\preceq$-bounded run $\arun$ from 
$\pair{\astate}{(\bottom)^N}$ such that $\arun \models \aformula$.
\fi 

We present below the problem of repeated reachability of multi-pushdown systems, denoted REP.
In Section~\ref{section-synchronisation},
we present how MC can be reduced to REP while obtaining optimal complexity upper bounds.
\noindent
\begin{description}
\itemsep 0 cm
\item[input:]  $(\mps, I_0, \mathcal{F})$
where $\mps$ is a multi-pushdown system,
$I_0$ is a subset of global states of $\mps$ denoting the initial states, and
$\mathcal{F}$ is a collection of B\"uchi acceptance sets,
\item[question:] Is there an infinite run $\arun$ from some
  $\pair{\astate_0}{(\bottom)^N}$ with $\astate_0 \in I_0$ such
  that for each $F \in \mathcal{F}$ there exists a $\mpsState_f \in F$
  that is repeated infinitely often?
\end{description}
We will refer to problem restricted to $k$-bounded runs by BREP.
Obviously, the variants with other notions of boundedness can be defined
too.

Finally, the simplified version of Multi-\caret \ consists of the restriction of Multi-\caret \ 
in which atomic formulae are of the form $\pair{g}{s}$ when enhanced multi-pushdown
systems are involved. Logarithmic-space reductions exist between the full problems 
and their restrictions to the simplified languages.

%% Ste 09102012
%% 
%% It is worth noting that we can handle operators like $\next_s$ (global-next on
%% the $s$-th stack) and $\next^\opAbs$ (abstract-next in the active
%% thread) in the above logic. Moreover, in the sequel, we shall restrict ourselves to \caret \ formulae
%% from the simplified language below: 
%% %% 
%% %% 
%% %% \paragraph{Model checking Problem}
%% %% Given a CaReT formula $\aformula$ and a multi-pushdown system $\mps$,
%% %% check if $\mps \models \aformula$.
%% %% 
%% %% \subsubsection*{Simplified language}
%% \[
%%   \aformula   :=   \ltlState    % atomic formulae
%%                    %% \ |\ s
%%                     %% \ |\ \aregcons{s}
%%                     %% \ |\ \actCall
%%                     %% \ |\ \actRet
%%                     %% \ |\ \actInt
%%                     %% \\ & & 
%%                     \ |\ \aformula \vee \aformula %formula building
%%                     \ |\ \neg \aformula
%%                     %% \\ & &
%%                     \ |\ \next \aformula % temportal stuff
%%                     \ |\ \aformula \until \aformula
%%                     \ |\ \next_s^\opAbs \aformula 
%%                     \ |\ \aformula \until_s^\opAbs \aformula
%%                     \ |\ \next_s^\opCall \aformula
%%                     \ |\ \aformula \until_s^\opCall \aformula
%% \]
%% 
%% Indeed, logarithmic-space reductions exist between the full problems and their restrictions to the simplified languages.
%%
 
\begin{lemma} For every 
\ifLONG problem 
\fi 
$\mathcal{P}$ in $\{$ MC, BMC, PBMC, OBMC $\}$, there is a logarithmic-space
reduction to $\mathcal{P}$ restricted to formulae from the simplified language.
\end{lemma}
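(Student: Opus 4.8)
The plan is to exhibit, for each problem $\mathcal{P} \in \{\mathrm{MC}, \mathrm{BMC}, \mathrm{PBMC}, \mathrm{OBMC}\}$, a logarithmic-space transformation that takes an instance with an arbitrary Multi-\caret($\mps$) formula $\aformula$ and produces an instance in which the input system is an \emph{enhanced} multi-pushdown system and the formula uses only atoms of the form $\pair{g}{s}$. First I would invoke Lemma~\ref{lemma:mpsdefs} to replace $\mps$ by an equivalent enhanced system $\mps'$ over states $\mpsStates \times \mpsMoves$; by parts (I) and (II) of that lemma the infinite runs of $\mps$ and $\mps'$ are in trace-preserving correspondence, and this correspondence clearly preserves $k$-boundedness, $k$-phase-boundedness and $\preceq$-boundedness, since those notions depend only on the sequence $s_0 s_1 \cdots$ of active stacks, which is exactly the data recorded in the augmented control. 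The construction of $\mps'$ is polynomial (indeed computable in logarithmic space: each transition of $\mps'$ is obtained by reading one transition of $\mps$ and pairing states with the appropriate stack index).

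Next I would rewrite the formula. For atoms $\ltlState$ with $\ltlState \in \mpsStates$, replace $\ltlState$ by the disjunction $\bigvee_{s \in \mpsMoves} \pair{\ltlState}{s}$; for an atom $s \in \mpsMoves$, replace it by $\bigvee_{\mpsState \in \mpsStates} \pair{\mpsState}{s}$. The action atoms $\actCall, \actRet, \actInt$ are slightly more delicate: by the semantics, $\arun, t \models \anaction$ depends on the length change of the currently active stack. Since in an enhanced system the active stack is determined by the control component, and the transition relation $\mpsTrans_s$ of $\mps'$ partitions into call/return/internal transitions, I would note that one can, at the level of the enhanced system, further annotate the target state with the action type just performed (this is again a logspace modification of $\mps'$ adding a tag from $\actSet$ to the state, with no effect on traces other than recording data already determined by the step), and then replace $\actCall$ by the disjunction over $\pair{g}{s}$-atoms carrying the call tag, and similarly for $\actRet, \actInt$. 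The boolean and temporal operators $\neg, \vee, \next, \until, \next_s^\opAbs, \until_s^\opAbs, \next_s^\opCall, \until_s^\opCall$ are left untouched, so the successor relations $\Succ_\arun$, $\Succ^{\opAbs,s}_\arun$, $\Succ^{\opCall,s}_\arun$ are unchanged and the inductive semantics is preserved atom-by-atom. Each substitution blows up the formula by at most a factor $\max(\card{\mpsStates}, \mpsStackCount)$ and can be produced by a logspace transducer, since it is a local textual rewriting.

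Putting the two steps together gives, for each $\mathcal{P}$, a logspace reduction from $\mathcal{P}$ to its restriction to the simplified language: the pair $(\mps, \aformula)$ (together with whatever boundedness parameter $k$ or ordering $\preceq$ the problem carries, which is passed through unchanged, possibly with the stacks' identities renamed trivially) is mapped to $(\mps'', \aformula'')$ where $\mps''$ is the tagged enhanced system and $\aformula''$ is the rewritten formula, and by the trace correspondence plus the semantic equivalence of the atom substitutions, there is an infinite ($k$-bounded / $k$-phase-bounded / $\preceq$-bounded) run of $\mps$ satisfying $\aformula$ iff there is one of $\mps''$ satisfying $\aformula''$. Composition of logspace-computable functions being logspace-computable, the overall reduction is in $\logspace$.

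The main obstacle, and the only point requiring genuine care rather than routine bookkeeping, is the treatment of the action atoms $\actCall, \actRet, \actInt$: unlike state atoms, their truth at position $t$ is not a static property of the configuration $\aconfig_t$ but of the transition taken from $\aconfig_t$, so one cannot simply substitute a disjunction of configuration atoms without first arranging that the relevant information is encoded in the control state. Verifying that the extra tagging of $\mps'$ is sound — that it genuinely adds no spurious runs and does not disturb the lengths of the stacks or the active-stack sequence, hence preserves all three boundedness notions and all successor relations used in the semantics — is the crux; once that is in place, the remaining equivalences for $\ltlState$ and $s$ atoms and the preservation of the temporal layer are immediate from the definition of $\models$.
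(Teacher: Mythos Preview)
Your approach is essentially the same as the paper's, which also sketches the reduction as ``adding to global states information about the next active stack and about the type of action.'' One small correction: you propose to annotate the \emph{target} state with the action just performed, but the semantics of $\actCall/\actRet/\actInt$ at position $t$ refers to the action taken from $t$ to $t+1$, so your substitution of $\anaction$ by a disjunction of tagged $\pair{g}{s}$-atoms would be off by one step. The fix is to tag the \emph{source} state with the action about to be performed (a nondeterministic guess verified by the transition relation, exactly as the enhanced system already does for the next active stack); with that adjustment your argument goes through and coincides with the paper's.
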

The proof is by an easy verification and its very idea consists in adding to global states
information about the next active stack and about the type of action.
In the sequel, without any loss of generality, we restrict ourselves to the simplified languages. 

\begin{theorem} \cite{Madhusudan&Parlato11}
BMC, PBMC and OBMC are decidable.
\end{theorem}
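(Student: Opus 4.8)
The plan is to establish each case by exhibiting a translation into a decidable problem about multi-pushdown systems equipped with the relevant boundedness restriction. Since the logic Multi-\caret($\mps$) embeds into monadic second-order logic over the nested-word structure of a run --- the global state and active-stack predicates are unary, the global, abstract, and caller successor relations are MSO-definable from the matching relation and the linear order, and regularity constraints are exactly MSO-definable stack predicates --- the model-checking question ``is there an infinite $k$-bounded (resp.\ $k$-phase-bounded, $\preceq$-bounded) run $\arun$ with $\arun \models \aformula$'' reduces to checking satisfiability of an MSO sentence over the class of nested-word structures generated by runs of $\mps$ subject to the boundedness constraint. First I would make this translation precise: fix the vocabulary (linear order, one matching relation per stack, unary predicates for states and active stacks), show each clause of the semantics of Multi-\caret \ is expressible, and note that ``$\arun$ is $k$-bounded'' (and its siblings) is itself an MSO-expressible restriction on the structure, or alternatively is built into the class of structures one quantifies over.

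Next I would invoke the decidability results already cited in the excerpt. By~\cite{Madhusudan&Parlato11}, the class of behaviours of multi-pushdown systems restricted to bounded context (for BMC), bounded phase (for PBMC), and to ordered stacks (for OBMC) all have bounded tree-width as graphs, and MSO model-checking over a class of bounded tree-width is decidable by Courcelle's Theorem. Concretely: one takes the MSO sentence produced by the translation of $\aformula$, conjoins it with the MSO definition of ``infinite run of $\mps$'' and with the boundedness restriction, and asks whether the resulting sentence has a model in the relevant class; decidability follows because that class has bounded tree-width (with the bound depending on $k$, resp.\ on $N$ and the ordering). The same argument handles Multi-\caret$^{\sc reg}$ since the extra atoms ${\tt in}(s,\aautomaton)$ are again MSO-definable stack predicates and do not affect tree-width.

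The main obstacle is making the tree-width / MSO-interpretation bookkeeping airtight: one must be careful that the boundedness constraints are imposed on the right object (the run, including its stack contents, not merely the induced nested word), that the abstract and caller successors are genuinely definable from the per-stack matching relations in the multi-stack setting, and that the class over which one quantifies is closed under the relevant operations so that Courcelle's Theorem applies verbatim. I expect this to be a short proof in the paper, essentially a pointer to~\cite{Madhusudan&Parlato11} together with the observation that Multi-\caret \ (and Multi-\caret$^{\sc reg}$) sits inside MSO; the complexity it yields is non-elementary, which is precisely why the bulk of the paper is devoted to the sharper automata-theoretic arguments that give the \exptime \ and 2\exptime \ bounds under unary encoding. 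So for this theorem the proof is: translate Multi-\caret \ into MSO, observe the three boundedness restrictions each cut down to a class of structures of bounded tree-width, and apply Courcelle's Theorem.
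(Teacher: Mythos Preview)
Your proposal is correct and matches the paper's treatment essentially verbatim. The paper does not give its own proof of this theorem: it is stated with the citation~\cite{Madhusudan&Parlato11} and followed only by the remark that the decidability proof there ``is very general and partly relies on Courcelle's Theorem'' with non-elementary complexity, together with the earlier observation that Multi-\caret \ ``can be easily seen as a fragment of monadic second-order logic'' so that the decidability results from~\cite{Madhusudan&Parlato11,BCGZ11} apply; your sketch --- MSO translation, bounded tree-width for each of the three boundedness notions, Courcelle --- is exactly this.
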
 

Decidability proof from~\cite{Madhusudan&Parlato11} is very general and partly relies on Courcelle's Theorem.
However, it provides non-elementary complexity upper bounds. 
As a main result of the paper, we shall show that BMC is \exptime-complete when $k$ is encoded in unary
and in 2\exptime \ when $k$ is encoded in binary. 

\section{From Model-Checking to Repeated Reachability}
\label{section-synchronisation}

Herein, we reduce the problem of model checking (MC) to the
problem of repeated reachability (REP) while noting complexity
features that are helpful later on (Theorem~\ref{theorem:mc2rep}).
This generalizes the reduction from LTL model-checking 
\ifLONG for finite-state systems
\fi 
into non-emptiness for generalized B\"uchi
automata (see e.g.~\cite{Vardi&Wolper94}), similarly to the approach followed in~\cite{LaTorre&Napoli12}; 
not only we have to tailor
the reduction to Multi-\caret \ and to multi-pushdown systems but also
we aim at getting tight complexity bounds afterwards.
The instance of the problem MC that we have is a multi-pushdown system
$\mps$, a formula $\aformula$ and initial state $\pair{g_0}{i_0}$. For the
instance of REP we will reduce to, we will denote the multi-pushdown
system by $\prodmps$, the set of acceptance sets by $\mathcal{F}$ and
set of initial states by $I_0$.

%% Ksh 281012
%% Don't think this informal statement adds anything
%%
%% \ifLONG The
%% goal would be to prove a theorem of the kind:
%% \begin{theorem} \label{theorem:bmc2brep_goal}
%% There is an exponential-time reduction from
%%  Multi-\caret model-checking problem on multi-pushdown system  
%% to repeated reachability of multipushdown systems.
%% \end{theorem}
%% A more technically complete version of the theorem with observations
%% on the specific nature of the reduction appears at the end of this
%% section.
%% \fi

\ifLONG
\subsection{Augmented Runs}
\fi 

\ifSHORT \paragraph{Augmented Runs.}\fi
Let $\arun$ be a run of the multi-pushdown system 
$\mps = (\mpsStates \times \mpsMoves, \mpsStackCount,
\mpsAlphabet, \vec{\mpsTrans})$ with 
$\arun \in (\mpsStates \times \mpsMoves \times (\mpsAlphabet^{*})^{\mpsStackCount})^{\omega}$.
The multi-pushdown system  $\prodmps$ is built in such a way that its runs correspond
exactly to runs from $\mps$ but augmented with pieces of information related to the satisfaction
of subformulae (taken from the closure set $\Cl(\aformula)$ elaborated on shortly),
whether a stack is dead or not (using a tag from $\tagsDead$) and whether
the current call will ever be returned or not (using a tag from $\tagsRet$).
These additional tags will suffice to reduce the existence of a run satisfying  $\aformula$ to  
the existence of a run satisfying a generalized B\"uchi condition. 
First, we define from $\arun$ an ``augmented run'' $\augmentrun{\arun}$ which is an infinite sequence from 
$\inparen{\prodify{\mpsStates} \times \mpsMoves \times (\prodify{\mpsAlphabet}^{*})^{\mpsStackCount}}^{\omega}$
where $\prodify{\mpsStates} = \mpsStates
\times \powerset{\Cl(\aformula)}^\mpsStackCount \times \tagsRet^\mpsStackCount
\times\allowbreak\tagsDead^\mpsStackCount$
and $\prodify{\mpsAlphabet}=\mpsAlphabet \times \powerset{\Cl(\aformula)} \times
\tagsRet$. 
By definition, an augmented run is simply an $\omega$-sequence
but it remains to check that indeed, it will be also a run of the new system. 
%% \ifLONG
%% The set $\Cl(\aformula)$, known as the closure of $\aformula$ and defined below, denotes a finite set of formulae built from
%% $\aformula$. \fi
 We will see that  $\prodify{\mpsStates} \times \mpsMoves$ is the set of global states of
 $\prodmps$ and $\prodify{\mpsAlphabet}$ is the stack alphabet of $\prodmps$. 
%% \ifLONG
%%  The elements in $\tagsRet \cup \tagsDead$ are atomic tags. 
%% Stack $j$ is ``alive'' (resp. ``dead'') whenever there is a future position at which the stack is
%%  active (resp. no more active).
%% Similarly, stack $j$ is labelled by ``noreturn'' whenever the current call will ever be returned. 
%%\fi 

Before defining $\augmentrun{\cdot}$ which maps runs to augmented runs, let us introduce 
the standard notion for \defstyle{closure} but slightly tailored to our needs.
Note that each global state is partially made of sets of formulas that
can be viewed as obligations for the future. \ifLONG In order to consider only
runs satisfying a formula $\aformula$, it is sufficient to require
that at the first position, obligations include the satisfaction of
$\aformula$. Obligations can be enforced by the transition relation
but also by the satisfaction of B\"uchi acceptance conditions.
It is our intention that the set of runs of $\prodmps$ satisfying such generalized B\"uchi acceptance condition
correspond exactly to the set of augmented runs obtained from runs of $\mps$.
Not only the new system simulates all the runs of the original system but it also keeps
track of which subformulas holds true at each position.
So, the projection of $\augmentrun{\arun}$ over $\mpsStates \times \mpsMoves$ and $\mpsAlphabet$
(i.e, the operation of getting rid of the tags) correspond exactly to $\arun$.

With this understanding of our intentions, we define
obligations as a tuple of sets of formulas indexed by the
stacks; each stack comes with a finite set of formulas. These
formulas are obtained from the closure of $\aformula$, defined as the
set of subformulas of $\aformula$ enriched with formulas for the until
formulas.\fi~This is similar to what is usually done for LTL and is just
a variant of 
\ifLONG Fischer-Ladner closure~\cite{Fischer&Ladner79}.  
\else
 Fischer-Ladner closure.
\fi 
\ifSHORT
An obligation for a stack\draftnote{Flow: why ``for a stack''} is a
set of subformulas that is locally consistent; such consistent sets
are called atoms and they are defined below as well as the notion of
closure.
\fi
Given a formula $\aformula$, its \defstyle{closure}, denoted
$\Cl(\aformula)$, is the smallest set that contains $\aformula$,
the elements of $G \times \mpsMoves$, and satisfies the following properties 
($b \in \set{a,c}$ and $s \in \mpsMoves$):
\begin{itemize} %Scope for *Compactification*
\itemsep 0 cm 
  \item If $\neg \aformula' \in \Cl(\aformula)$ or $\next \aformula'
    \in \Cl(\aformula)$ or  $\next^b_s \aformula'
    \in \Cl(\aformula)$ then $\aformula' \in \Cl(\aformula)$.
  \item If $\aformula' \vee \aformula'' \in \Cl(\aformula)$, then
    $\aformula', \aformula'' \in \Cl(\aformula)$.
  \item If $\aformula' \until \aformula'' \in \Cl(\aformula)$, then
    $\aformula', \aformula''$, and $\next (\aformula' \until
    \aformula'')$ are in $\Cl(\aformula)$.
   \item If $\aformula' \until_s^b \aformula'' \in \Cl(\aformula)$, then
    $\aformula'$, $\aformula''$, and $\next^b_s (\aformula' \until \aformula'')$
    are in $\Cl(\aformula)$.
  \item If $\aformula' \in \Cl(\aformula)$ and $\aformula'$ in not of the form
    $\neg \aformula''$, then $\neg \aformula' \in \Cl(\aformula)$.
\end{itemize}
Note that the number of formulas in $\Cl(\aformula)$ is linear in the size %*Complexity Note*
%% (number of operators/operands)
 of $\aformula$ and $\mps$.
 An \defstyle{atom} of $\aformula$, is a set $\atom \subseteq
\Cl(\aformula)$ that satisfies the following
properties:
\begin{itemize}
\itemsep 0 cm 
\item For $\neg \aformula' \in \Cl(\aformula)$,
  $\aformula' \in \atom$ iff
  $\neg \aformula' \notin \atom$.
\item For $\aformula' \vee \aformula'' \in \Cl(\aformula)$,
  $\aformula' \vee \aformula'' \in \atom$ iff
  $(\aformula' \in \atom$ or $\aformula'' \in \atom)$.
\item For $\aformula' \until \aformula'' \in \Cl(\aformula)$,
  $\aformula' \until \aformula'' \in \atom$ iff
  $\aformula'' \in \atom$ or $(\aformula' \in \atom$ and
  $\next (\aformula' \until \aformula'') \in \atom)$.
%% \item For $\aformula' \until^b_s \aformula'' \in \Cl(\aformula)$,
%%   $\aformula' \until^b_s \aformula'' \in \atom$ iff
%%   $\aformula'' \in \atom$ or $(\aformula' \in \atom$ and
%%   $\next^b_s (\aformula' \until^b_s \aformula'') \in \atom)$.
\item $\atom$ contains exactly one element from $\mpsStates \times \mpsMoves$. 
\end{itemize}
Let $\Atoms(\aformula)$ denote the set of atoms of $\aformula$, along
with empty set (used as special atom, use will become clear
later). Note that there are $2^{\mathcal{O}(\size{\aformula})}$ atoms of
$\aformula$.
\ifSHORT
An \defstyle{atom} of $\aformula$, is a set $\atom \subseteq
\Cl(\aformula)$ that satisfies the following
properties  ($b \in \set{a,c}$ and $s \in \mpsMoves$):
\begin{itemize}
\itemsep 0 cm 
\item For $\neg \aformula' \in \Cl(\aformula)$,
  $\aformula' \in \atom$ iff
  $\neg \aformula' \notin \atom$.
\item For $\aformula' \vee \aformula'' \in \Cl(\aformula)$,
  $\aformula' \vee \aformula'' \in \atom$ iff
  $(\aformula' \in \atom$ or $\aformula'' \in \atom)$.
\item For $\aformula' \until \aformula'' \in \Cl(\aformula)$,
  $\aformula' \until \aformula'' \in \atom$ iff
  $\aformula'' \in \atom$ or $(\aformula' \in \atom$ and
  $\next (\aformula' \until \aformula'') \in \atom)$.
%% \item For $\aformula' \until^b_s \aformula'' \in \Cl(\aformula)$,
%%   $\aformula' \until^b_s \aformula'' \in \atom$ iff
%%   $\aformula'' \in \atom$ or $(\aformula' \in \atom$ and
%%   $\next^b_s (\aformula' \until^b_s \aformula'') \in \atom)$.
\item $\atom$ contains exactly one element from $\mpsStates \times \mpsMoves$. 
\end{itemize}
Let $\Atoms(\aformula)$ denote the set of atoms of $\aformula$, along
with empty set (used as special atom, use will become clear
later). Note that there are $2^{\mathcal{O}(\size{\aformula})}$ atoms of
$\aformula$. %*Complexity Note*
\fi

%% \ifLONG
%% We need to augment the stack alphabet to track the obligations of the
%% abstract-next formulas that are to be satisfied on return. The caller
%% formulas only need to pass down the call stack, so are bit easier to
%% handle. That is why, the stack alphabet of $\prodmps$ is equal to
%% $\prodify{\mpsAlphabet}$.
%% \fi

We write $\inparen{(\mpsState^t, \mpsMove^t), \vect{\mpsWord}^t}$ to
denote the $t$-th configuration of $\arun$.
We define the augmented run $\augmentrun{\arun}$ so that its $t$-th configuration 
is of the form 
 $\inparen{
      \inparen{\prodify{\mpsState^t}, \mpsMove^t},
      \prodify{\vect{\mpsWord}^t} }$
with
  $\prodify{\mpsState^t} =
    (\mpsState^t, \vect{\atom}^t, \vect{\artag}^t, \vect{\adtag}^t)$
and
 $\prodify{\mpsWord^t_j} =
     (\mpsWord^t_j,\mpsWordAtom^t_j, \mpsWordReturn^t_j)$
for every $j$ in $\mpsMoves$.
We say that the stack $j$ is \emph{active} at time $t$ if $\mpsMove^t = j$.
\newcommand{\deadAliveDescription}
{Then, we define $\tDead$-$\tAlive$ tag to be
  $\tDead$ if and only if the stack is not active at or after the corresponding position.}
\newcommand{\deadAliveEquation}
{& & \forall t\geq 0\text{, }j\in\mpsMoves\text{: }(\adtag^t_j =\tDead) \ \equivdef \ (\forall t' \geq t, \  \mpsMove^{t'} \neq j)\text{.}
\label{sync:cond:dead}}

\newcommand{\atomDescription}{
The idea of the closure as we discussed is to maintain the set of subformulas
that hold true at each step. We
will expect it to be the empty set if the stack is dead.}

\newcommand{\atomEquation}
{\nonumber
& &
 \forall t \geq 0\text{, }j\in\mpsMoves\text{ with }\adtag^t_j = \tAlive\text{, }\aformulabis \in \Cl(\aformula)\text{:}\\
& &
 \qquad\aformulabis \in \atom^t_j \equivdef \arun, t' \models \aformulabis\text{ where }t'\text{ is the least }t' \geq t  \text{ with } \mpsMove^{t'} = j\text{.}
\label{sync:cond:atom}\\
& &
 \forall t \geq 0\text{, }j\in\mpsMoves\text{ with }\adtag^t_j = \tDead\text{: } \atom^t_j \egdef \emptyset\text{.}
\label{sync:cond:atom:b}}

\newcommand{\returnNoreturnDescription}
{As for $\tRet$-$\tNoRet$ tag, it reflects whether a $\actCall$ action has a ``matching'' return.
This is similar to the $\{\infty,ret\}$ tags in~\cite{Alur&Etessami&Madhusudan04}.
This may be done by defining tag to be $\tNoRet$ if stack will never become
smaller than what it is now.}

\newcommand{\returnNoreturnEquation}
{
\nonumber
& & \forall t\geq 0\text{, }j \in\mpsMoves
  \text{ with }\adtag^t_j = \tAlive\text{: }\\
& &
\qquad
(\artag^t_j = \tNoRet) \equivdef 
(\forall t' \geq t, \  \length{w^{t'}_j} \geq \length{w^t_j})\text{.}
\label{sync:cond:return} }

\newcommand{\stackContentDescription}
{Finally, the formulas and $\tRet$-$\tNoRet$ tag on the stack
are defined to be what they were in the global state at the time when the
corresponding letter was pushed on the stack.}

\newcommand{\stackContentEquation}
{\nonumber
& &\forall t \geq 0\text{, }j \in \mpsMoves\text{: }
  \mpsWordAtom^t_j \egdef \atom^{t_1}_j\atom^{t_2}_j\ldots\atom^{t_l}_j\text{
   and  }\mpsWordReturn^t_j \egdef \adtag^{t_1}_j\adtag^{t_2}_j\ldots\adtag^{t_l}_j
\text{,}
\label{sync:cond:stackatom} \\
& & \qquad \text{where for $k$ in $[l]$: $t_k$ is largest $t_k \leq t$
such that $\length{w^{t_k}_j} = k-1$.} }

\ifLONG
\deadAliveDescription \begin{eqnarray}\deadAliveEquation \end{eqnarray}

\atomDescription \begin{eqnarray}\atomEquation\end{eqnarray}

\returnNoreturnDescription \begin{eqnarray}\returnNoreturnEquation\end{eqnarray}

\stackContentDescription \begin{eqnarray}\stackContentEquation\end{eqnarray}
\fi
\ifSHORT
\deadAliveDescription
~\atomDescription
~\returnNoreturnDescription
~\stackContentDescription
\begin{eqnarray}
  \deadAliveEquation\\
  \atomEquation\\
  \returnNoreturnEquation\\
  \stackContentEquation
\end{eqnarray}
\fi

\ifLONG
We observe below properties about the way tags are placed in $\augmentrun{\arun}$. Later on, 
we shall establish that these conditions are sufficient to guarantee that any sequence satisfying 
these conditions correspond to  a run of $\mps$ whose augmented run is exactly the sequence.
Here are properties of $\augmentrun{\arun}$ that are easy to check using the definition of $\augmentrun{\cdot}$ and
the satisfaction relation $\models$. 

\begin{enumerate}
\itemsep 0 cm 
\item For all $t \geq 0$ and $j \in \mpsMoves$,  $\adtag^t_j =\tAlive$ if $j = s^t$.
\item % About atoms
For all $t \geq 0$ and $j \in \mpsMoves \setminus \set{s^t}$,
$\adtag^{t+1}_j = \adtag^{t}_j$ , $r^{t+1}_j = r^{t}_j$ and
$\atom^{t+1}_j = \atom^{t}_j$ (only tags related to the active stack
may change).
\item 
For all $t \geq 0$ and $j \in \mpsMoves$, $\set{\aformulabis \in \Cl(\aformula): \arun, t \models \aformulabis}$ is an atom
      and therefore for all $t \geq 0$ and $j \in \mpsMoves$, if $\adtag^t_j =\tAlive$ then  $\atom^t_j$ is an atom. 
\item For all $t \geq 0$ and $j \in \mpsMoves$, $(g^t,s^t) \in \atom^t_{s^t}$. 
\item  For all $t \geq 0$, if the $(t+1)$-th action on the stack is a call and
       $r^{t}_{s^t} = \tRet$, then $r^{t+1}_{s^t} = \tRet$ and $\tRet$ is the top symbol
       of $w_{s^t}^{t+1}$. Moreover, $A_{s^t}^{t}$ is the top symbol of $v_{s^t}^{t+1}$.
\item For all $t \geq 0$, if the $(t+1)$-th action on the stack is a return then 
       $r^{t}_{s^t} = \tRet$ and $r^{t+1}_{s^t}$ is equal to the top symbol of $w_{s^t}^{t}$.
      Moreover, $\next_{s^t}^a  \atom \subseteq A_{s^t}^{t+1}$ where $\atom$ is the top symbol of
      $v_{s^t}^{t}$. As a notational convenience, we denote the next formulas in a set $\atom$, using $\next \atom$.
       In other words, $\next \atom = \set{ \aformulabis\ |\ \next\aformulabis \in \atom }$,
       $\next^\opAbs_1\atom \egdef \set{\aformulabis\ |\ \next^\opAbs_1 \aformulabis \in \atom}$ etc.
\item  For all $t \geq 0$, $\next A_{s^t}^{t} \subseteq A_{s^t}^{t+1}$.
\item  For all $t \geq 0$, if the $(t+1)$-th action on the stack is internal 
       then $r^{t+1}_{s^t} = r^{t}_{s^t}$. Moreover,  $\next_{s^t}^a A_{s^t}^{t} \subseteq A_{s^t}^{t+1}$
       and $\next_{s^t}^c A_{s^t}^{t} = \next_{s^t}^c A_{s^t}^{t+1}$.
\item For all $t \geq 0$, the set $\next_{s^t}^a A_{s^t}^t$ is empty if one of the conditions below is met:
      \begin{enumerate}
      \item the $(t+1)$-th action on the stack is a call and $r^{t}_{s^t} = \tNoRet$,
      \item the $(t+1)$-th action on the stack is a return,
      \item  $d^{t+1}_{s^t} = \tDead$. 
      \end{enumerate}
\item For all $t \geq 0$, if the $(t+1)$-th action on the stack is a call, then for every
      $\next_{s^{t}}^c \aformulabis \in \Cl(\aformula)$, 
      $\next_{s^{t}}^c \aformulabis \in A_{s^{t+1}}^{t+1}$ iff $\aformulabis \in A_{s^t}^t$.
\item For all $t \geq 0$ and $j \in \mpsMoves$, if $\aformulabis 
      \in \Cl(\aformula)$ with $\aformulabis = \aformula_1 \until^\opAbs_i \aformula_2$, then:
\begin{enumerate}
\itemsep 0 cm 
\item if $i = j$, then $\aformulabis \in \atom^t_j$ iff $\aformula_1 \in
  \atom^t_j$ or ($\aformula_2 \in \atom^t_j$ and $\next^\opAbs_j \aformulabis \in \atom^t_j$),
\item if $i \neq j$, then $\aformulabis \in \atom^t_j$ iff $\aformulabis \in \atom^t_i$.
\end{enumerate}
%% \draftnote{TODO: Fix notation. Using aformula everywhere doesn't work.}
\item For all $t \geq 0$ and $j \in \mpsMoves$, if  $\adtag^t_j = \tDead$ then $\atom^t_j = \emptyset$.
\end{enumerate}
Properties stated above are local since they involve at most two successive configurations. It is also
possible to observe B\"uchi conditions that involve the infinite part of $\augmentrun{\arun}$. 

\begin{enumerate}
\itemsep 0 cm 
\item As standard until formulas in LTL, for every  $\aformula_1 \until \aformula_2 \in \Cl(\aformula)$,
      infinitely often in  $\augmentrun{\arun}$ there is a global state  $\inparen{
      \inparen{\prodify{\mpsState^t}, \mpsMove^t},
      \prodify{\vect{\mpsWord}^t} }$
with
  $\prodify{\mpsState^t} =
    (\mpsState^t, \vect{\atom}^t, \vect{\artag}^t, \vect{\adtag}^t)$ such that 
  either $\aformula_2 \in A_{s^t}^t$ or  $\aformula_1 \until \aformula_2  \not \in A_{s^t}^t$. 
\item There is a similar property with abstract until:
      for every $\aformula_1 \until^a_{\mpsMove} \aformula_2 \in \Cl(\aformula)$,
       infinitely often in  $\augmentrun{\arun}$ there is a  state  $\inparen{
      \inparen{\prodify{\mpsState^t}, \mpsMove^t},
      \prodify{\vect{\mpsWord}^t} }$
with
  $\prodify{\mpsState^t} =
    (\mpsState^t, \vect{\atom}^t, \vect{\artag}^t, \vect{\adtag}^t)$ such that 
  $ \mpsMove = \mpsMove^t$, $\artag^t_{ \mpsMove} = \tNoRet$, and 
 $\aformula_2 \in A_{\mpsMove}^t$ or  $\aformula_1 \until^a_{\mpsMove} \aformula_2  \not \in A_{\mpsMove}^t$.
\item  For every $\mpsMove \in \mpsMoves$,  infinitely often in  $\augmentrun{\arun}$ there is a global state  $\inparen{
      \inparen{\prodify{\mpsState^t}, \mpsMove^t},
      \prodify{\vect{\mpsWord}^t} }$
with
  $\prodify{\mpsState^t} =
    (\mpsState^t, \vect{\atom}^t, \vect{\artag}^t, \vect{\adtag}^t)$ such that either $\mpsMove^t = \mpsMove$ or $\adtag^t_{\mpsMove} = \tDead$. 
Moreover, note that $\adtag^t_{\mpsMove} = \tDead$ implies  $\adtag^{t+1}_{\mpsMove} = \tDead$.
\item  For every $\mpsMove \in \mpsMoves$,  infinitely often in  $\augmentrun{\arun}$ there is a global state  $\inparen{
      \inparen{\prodify{\mpsState^t}, \mpsMove^t},
      \prodify{\vect{\mpsWord}^t} }$
with
  $\prodify{\mpsState^t} =
    (\mpsState^t, \vect{\atom}^t, \vect{\artag}^t, \vect{\adtag}^t)$ such that either 
 $\adtag^t_{\mpsMove} = \tDead$ or
($\mpsMove^t = \mpsMove$ and $\adtag^t_{\mpsMove} = \tNoRet$). 
\end{enumerate}

\fi
\ifSHORT\paragraph{How to accept augmented runs.}\fi

%% 
%% Ste 16092012
%% 
%% We observe the following properties of the sequence of configurations
%% $\aprodrun$:
%% \begin{proposition}
%% \begin{enumerate}[(a)]
%% \item $\Pi(\aprodrun) = \arun$
%% \item $\forall t \geq 0$. $\adtag^t_j = \tAlive$ if $j = s^t$
%% \item $\forall t \geq 0$. $\forall j \neq s^t$. $\adtag^t_j = \adtag^{t+1}_j$
%% \item and so on...
%% \end{enumerate}
%% \end{proposition}
%% 
%% 
\subsection{Synchronized Product}
\label{subsection-synchronization}
Let us define the multi-pushdown system $\prodify{\mps}$ as $(\prodify{\mpsStates} \times \mpsMoves, \mpsStackCount,
\prodify{\mpsAlphabet}, \vec{\prodify{\mpsTrans}})$ with
\ifLONG
\begin{itemize}
\itemsep 0 cm
\item  $\prodify{\mpsStates} = \mpsStates \times \Atoms(\aformula)^\mpsStackCount \times \tagsRet^\mpsStackCount
\times\allowbreak\tagsDead^\mpsStackCount$, 
\item $\prodify{\mpsAlphabet}=\mpsAlphabet \times \Atoms(\aformula) \times
\tagsRet$,
\item
\fi
\ifSHORT
the states and alphabet as defined earlier, and 
\fi
 each transition relation $\prodify{\mpsTrans_s}$ is defined such that
$(\prodify{\mpsState}, \mpsMove, \prodify{\mpsChar},
  \prodify{\mpsState'}, \mpsMove', \anaction(\mpsChar'))$
is in $\prodify{\mpsTrans_s}$ $\equivdef$ the conditions  from Figure~\ref{figure:newTrans}.
      are satisfied.

\ifLONG
    These conditions are actually the syntactic counterparts of the semantical properties stated a bit earlier.  
    To refer to elements in the set
    we use $\mpsState$ to denote a state in original multi-pushdown system, $\atom_i$ for atoms,
    $\artag_i$ for return/no-return tags, $\adtag_i$ for dead/alive tags,
    $\mpsChar_S$ to denote stack letter from original multi-pushdown system,
    $\mpsChar_A$ for stack atom, $\mpsChar_\artag$ for return tag saved on the stack.
    We use unprimed version to denote state and letter on top of stack before the transition is
    taken and primed ones for after. Finally, $\anaction$ denotes the action to perform on the stack,
    one of $\{\actCall,\actInt,\actRet\}$.
\end{itemize}
\fi

\begin{figure}
\label{figure:newTrans}
\input{sync-figure-trans-reln}
\caption{Conditions for the transition relation  $\prodify{\mpsTrans_s}$.
We recall that
%% (As a notational convenience, we denote the next formulas in a set $\atom$, using $\next \atom$.
%% In other words, 
$\next \atom = \set{ \aformulabis\ |\ \next\aformulabis \in \atom }$,
$\next^\opAbs_1\atom = \set{\aformulabis\ |\ \next^\opAbs_1 \aformulabis \in \atom}$.
}
\end{figure}

%% We need to make sure all the obligations are eventually met. In order
%% to do that, we need to make sure that if any stack is going to become
%% dead, the atom corresponding to the stack has no
%% obligations. Similarily, if a call was to never return, we need to
%% ensure the stack atom corresponding to the call has no obligations.
%% In order to determine the above, we have annotated each state with
%% \emph{tags} (again, encoded in the global states) to track if a stack
%% is dead or not, and if the current call will ever be returned or
%% not. It can be ensured that these tags are correct at each step by
%% using the transition relation and (B\"uchi) acceptance conditions.
The set $\mathcal{F}$ is defined by the following sets of accepting states:
\begin{enumerate}[(a)]
\itemsep 0 cm 
\item \label{buchi:until} % Until formulae
      For each until formula $\aformulabis = \aformula_1 \until \aformula_2 \in \Cl(\aformula)$, we define 
      \compressableeqn{F^1_{\aformulabis} \egdef \set{(\prodify{\mpsState},\mpsMove) \ |
     \ \aformula_2 \in \atom_\mpsMove \allowbreak\text{ or }\allowbreak
     \aformulabis \notin \atom_\mpsMove}.}
\item \label{buchi:absuntil} % Abstract-until formulae
      For each abstract-until formula $\aformulabis= \aformula_1 \until^\opAbs_\mpsMove \aformula_2 \in \Cl(\aformula)$, we define
       \compressableeqn{F^2_{ \aformulabis} \egdef \set{(\prodify{\mpsState},\mpsMove) \ |
  \  r_\mpsMove = \tNoRet \allowbreak \text{ and } \allowbreak (\aformula_2 \in \atom_\mpsMove
  \text{ or } \aformulabis \notin \atom_\mpsMove) }.}
\item \label{buchi:dead} % Dead tag
      For each $j \in \mpsMoves$, we define 
      \compressableeqn{ F^3_j \egdef \set{(\prodify{\mpsState},\mpsMove) \ |\ j=\mpsMove } 
       \cup \set{(\prodify{\mpsState},\mpsMove) \ |\ \adtag_j = \tDead}.}
\item \label{buchi:return} % Inf tag
       For each $j \in \mpsMoves$, \ifLONG we define  \fi 
       \compressableeqn{F^4_j \egdef \set{(\prodify{\mpsState},\mpsMove) \ |\ \adtag_j = \tDead} \cup
            \set{(\prodify{\mpsState},\mpsMove) \ |\ j=\mpsMove, \adtag_\mpsMove = \tNoRet}.}
\end{enumerate}

\ifLONG

Transition relations in $\widehat{\mps}$ and the acceptance conditions in
$\mathcal{F}$ mimic syntactically the semantical properties satisfied
by the augmented runs defined from run of $\mps$. That is why, the
correctness lemma stated below follows from the observation about
$\gamma(\cdot)$ earlier.
\fi
\begin{lemma}
\label{lemma:soundness}
Let $\arun$ be a run of $\mps$. 
Then, $\augmentrun{\arun}$ is a run of $\widehat{\mps}$ such that for every
$F \in \mathcal{F}$, there is a global state in $F$ that is repeated infinitely often.
\end{lemma}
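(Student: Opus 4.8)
The plan is to verify that $\augmentrun{\arun}$, which we have \emph{defined} as an $\omega$-sequence over $(\prodify{\mpsStates}\times\mpsMoves\times(\prodify{\mpsAlphabet}^*)^{\mpsStackCount})^\omega$, actually satisfies the transition relation of $\prodmps$ at every step, and then that its configurations hit each B\"uchi set in $\mathcal{F}$ infinitely often. For the first part I would proceed step by step through the list of conditions \ref{trans:origTrans}--\ref{trans:deadEmptyAtom} in Figure~\ref{figure:newTrans}. Condition~\ref{trans:origTrans} holds because, by construction, $\arun$ is a run of $\mps$ so its $t$-th step uses some transition of $\mpsTrans_{\mpsMove^t}$, and $\augmentrun{\cdot}$ leaves the $\mpsStates\times\mpsMoves$ and $\mpsAlphabet$ projections untouched. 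Conditions~\ref{trans:activealive}, \ref{trans:nochangeDeadTag}, \ref{trans:nochageRet}, \ref{trans:nochangeAtom}: the active stack at time $t$ is never dead at time $t$ (it is used at time $t$, so by the defining equivalence for $\adtag$ it is $\tAlive$), and none of the $\adtag,\artag,\atom$ components of an inactive stack changes between $t$ and $t+1$ — this follows directly from the characterizations of these tags, since for $j\neq\mpsMove^t$ the sets $\{t'\geq t:\mpsMove^{t'}=j\}$, the stack contents $w^{t'}_j$, and the least future active position are the same at $t$ and $t+1$.

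The heart of the verification is matching the $\artag$/$\atom$ bookkeeping to the transition conditions. For \ref{trans:retTagCall}--\ref{trans:retTagRet} and \ref{trans:pushHandleCaller}--\ref{trans:internalUnchanged}: on a call, $\artag^t_{\mpsMove^t}=\tRet$ means the stack does return to its current height, hence also returns to the new (higher) height later, giving $\artag^{t+1}_{\mpsMove^t}=\tRet$, and the pushed stack symbol records exactly the pre-call $\artag$ value — matching the definition of $\mpsWordReturn$; the caller condition \ref{trans:pushHandleCaller} is exactly the observation that $\next^\opCall_{\mpsMove}$-formulas in the fresh atom are those whose argument held in the caller atom $\atom^t_{\mpsMove^t}$, which is immediate from the definition of $\Succ^{\opCall,s}$. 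On a return, $\artag^t_{\mpsMove^t}$ must be $\tRet$ (the step is a pop, so the stack does go below its height) and the new tag equals the tag stored beneath, which is again exactly $\mpsWordReturn$. For the atom-propagation conditions \ref{trans:globalnext}, \ref{trans:callHandleAbstract}--\ref{trans:obligFree}: $\next\atom^t_{\mpsMove^t}\subseteq\atom^{t+1}_{\mpsMove^{t+1}}$ is just $\arun,t\models\next\aformulabis \Leftrightarrow \arun,t+1\models\aformulabis$ under the semantics; on a call $\mpsChar_A'=\atom^t_{\mpsMove^t}$ because the abstract successor of the call position is the matching-return position, whose atom we store; on an internal step $\next^\opAbs_{\mpsMove}\atom^t\subseteq\atom^{t+1}$ since the abstract successor of an internal position is the next active position; and $\next^\opAbs$-emptiness in cases \ref{trans:noreturnObligFree}--\ref{trans:dyingObligFree} holds because in each of those cases $\Succ^{\opAbs,s}$ is undefined (no matching return / the step is itself a return / the stack is dead so never active again). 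Conditions \ref{trans:abstractUntilPropagateActive}, \ref{trans:abstractUntilPropagateInactive} and \ref{trans:deadEmptyAtom} follow from the atom axioms applied to $\{\aformulabis:\arun,t'\models\aformulabis\}$ together with \eqref{sync:cond:atom:b}.

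For the B\"uchi part I would argue each family separately. $F^1_\aformulabis$ (type~\ref{buchi:until}): if $\aformula_1\until\aformula_2\notin\atom^t_{\mpsMove^t}$ infinitely often we are done; otherwise it holds cofinitely, and by the standard LTL liveness argument the eventuality $\aformula_2$ must be realized infinitely often, and at each such realization point $\aformula_2\in\atom$. $F^2_\aformulabis$ (type~\ref{buchi:absuntil}): same argument along the $\Succ^{\opAbs,s}$-chain, using that the chain visits positions where $s$ is active and $\artag_s=\tNoRet$ precisely when the stack does not pop below — if the chain is infinite it passes through such a position, if finite the eventuality has already been met; the tag condition pins down where to look. $F^3_j$ (type~\ref{buchi:dead}): either stack $j$ is active infinitely often (first disjunct), or it is used only finitely, in which case from some point on $\mpsMove^{t'}\neq j$ forever, so $\adtag^t_j=\tDead$ from then on (second disjunct holds cofinitely). $F^4_j$ (type~\ref{buchi:return}): if $j$ is dead cofinitely we are in the first disjunct; otherwise $j$ is active infinitely often, and among those active positions there must be infinitely many with $\artag_j=\tNoRet$, since a position with $\artag_j=\tRet$ is eventually followed by a return of stack $j$ strictly below the current height, and this cannot happen infinitely without the height going negative — a König-style / well-foundedness argument on stack heights.

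The main obstacle will be the $F^4_j$ / well-foundedness argument and, relatedly, the care needed in the $\next^\opAbs$-emptiness and caller conditions: getting the quantifier order right in "the abstract successor is undefined exactly when\ldots" and making the height-monotonicity argument for $\tNoRet$-infinitude fully rigorous (one must rule out an infinite descending sequence of matched returns). Everything else is a long but routine unfolding of the definitions of $\augmentrun{\cdot}$, the semantics of $\models$, and the successor functions, matched line-by-line against Figure~\ref{figure:newTrans}.
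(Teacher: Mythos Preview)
Your proposal is correct and follows essentially the same approach as the paper: the paper's own argument consists of listing the local semantical properties (items 1--12) and the four B\"uchi properties of $\augmentrun{\arun}$, then observing that these are exactly the syntactic counterparts of the transition constraints in Figure~\ref{figure:newTrans} and the acceptance sets $F^1$--$F^4$. You carry out precisely this line-by-line verification, in somewhat more detail than the paper itself supplies (in particular your well-foundedness sketch for $F^4_j$ and your treatment of the $\next^\opAbs$-emptiness cases go beyond what the paper spells out).
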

\ifSHORT The correctness lemma follows by verifying
that the consecutive configurations of $\augmentrun{\arun}$ are indeed in the
transition relation. Similarly it is easy to verify that it repeats at least one
state in each of the $F \in \accSet$.
A more detailed
proof may be found in the Appendix~\ref{section-appendix-soundness-lemma}. \fi

It remains to show that any accepting run corresponds to a run of the
original system, we show that this run in fact the run obtained by
``forgetting'' the augmentations.
\begin{lemma}
\label{lemma:completeness}
Let  $\aprodrun$ be a run of $\prodmps$ satisfying the acceptance condition
$\mathcal{F}$. 
%% Let $\aprodrun$ be an accepting run of ($\prodmps$,
%% $\mathcal{F}$). 
\ifLONG
Then, $\aprodrun$ is the augmented run corresponding
  to $\Pi(\aprodrun)$, which can be shown to be a run of $\mps$:
\[ \gamma(\Pi(\aprodrun)) = \aprodrun \]
\else
Then, $\Pi(\aprodrun)$ is a run of $\mps$ and $\augmentrun{\Pi(\aprodrun)} = \aprodrun$. 
\fi 
\end{lemma}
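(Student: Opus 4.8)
The plan is to prove the two halves of the statement separately: that $\Pi(\aprodrun)$ is a run of $\mps$, and that re‑augmenting it yields back $\aprodrun$. Write the $t$‑th configuration of $\aprodrun$ as $\bigl((\mpsState^t,\mpsMove^t),\vect{\atom}^t,\vect{\artag}^t,\vect{\adtag}^t\bigr)$ together with stack contents whose symbols carry an $\Atoms(\aformula)$‑component and a $\tagsRet$‑component, and let $\aconfig_t$ be the configuration of $\mps$ obtained from it by deleting the $\Atoms(\aformula)$‑ and tag‑components from the global state and from every stack symbol; thus $\Pi(\aprodrun)=\aconfig_0\aconfig_1\cdots$. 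Condition~\ref{trans:origTrans} of Figure~\ref{figure:newTrans} says precisely that the $\Pi$‑image of any transition of $\prodify{\mpsTrans_s}$ lies in $\mpsTrans_s$ and performs the same stack action, so $\aconfig_t\step{\mpsMove^t}\aconfig_{t+1}$ for every $t$ and $\Pi(\aprodrun)$ is a run $\arun$ of $\mps$ from $\pair{g_0}{i_0}$.

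For the second half it suffices to check that $\aprodrun$ satisfies the equations defining $\augmentrun{\arun}$, namely \eqref{sync:cond:dead}, \eqref{sync:cond:atom}--\eqref{sync:cond:atom:b}, \eqref{sync:cond:return} and \eqref{sync:cond:stackatom}; I would settle the tags first. For \eqref{sync:cond:dead}: conditions~\ref{trans:activealive} and~\ref{trans:nochangeDeadTag} give by a one‑line induction that $\adtag^t_j=\tDead$ forces $\adtag^{t'}_j=\tDead$ and $\mpsMove^{t'}\neq j$ for all $t'\geq t$, and conversely if $\mpsMove^{t'}\neq j$ for all $t'\geq t$ but $\adtag^t_j=\tAlive$, then $\adtag^{t'}_j=\tAlive$ for all $t'\geq t$, contradicting the acceptance set $F^3_j$; then \eqref{sync:cond:atom:b} is condition~\ref{trans:deadEmptyAtom}. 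For \eqref{sync:cond:return}: by conditions~\ref{trans:retTagCall}--\ref{trans:nochageRet} the $\tagsRet$‑tag of any fixed frame of stack $j$ stays constant throughout that frame's lifetime (it is the global value while the frame is on top, it is saved onto the symbol pushed just above it and recovered when that symbol is popped, and internal moves leave it alone), and a push from a $\tRet$‑configuration again produces a $\tRet$‑configuration. Hence, if $\artag^t_j=\tNoRet$ and stack $j$ later dropped below its height $\ell=\length{w^t_j}$, the first return from height $\ell$ would pop the time‑$t$ top frame while that frame is still $\tNoRet$‑tagged, contradicting the requirement $\artag_\mpsMove=\tRet$ in~\ref{trans:retTagRet}; and if $\artag^t_j=\tRet$ but stack $j$ never drops below $\ell$, then $\artag^{t'}_j=\tRet$ for all $t'\geq t$ (no frame at height $\geq\ell$ is popped, and each push propagates $\tRet$ by~\ref{trans:retTagCall}), while by $F^3_j$ and \eqref{sync:cond:dead} stack $j$ is active infinitely often (otherwise it is eventually $\tDead$, hence $\tNoRet$ by~\ref{trans:deadEmptyAtom}), so by $F^4_j$ it is $\tNoRet$ infinitely often --- a contradiction.

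Next, \eqref{sync:cond:stackatom} follows by induction on $t$, since the $\Atoms(\aformula)$‑ and $\tagsRet$‑components of a stack symbol are set at push time to $\atom_{\mpsMove^t}$ and $\artag_{\mpsMove^t}$ (conditions~\ref{trans:callHandleAbstract},~\ref{trans:retTagCall}) and copied verbatim on internal moves (conditions~\ref{trans:internalHandleAbstract},~\ref{trans:retTagInt}). Finally \eqref{sync:cond:atom}: for the active stack it reduces, since an atom is negation‑complete over $\Cl(\aformula)$, to proving ``$\aformulabis\in\atom^t_{\mpsMove^t}\ \Rightarrow\ \arun,t\models\aformulabis$'' for all $\aformulabis\in\Cl(\aformula)$, which is the usual soundness argument for LTL‑style automata, by induction on $\aformulabis$: the atom clauses handle the Boolean connectives; condition~\ref{trans:globalnext}, resp.\ conditions~\ref{trans:callHandleAbstract},~\ref{trans:internalHandleAbstract},~\ref{trans:returnStackAtom},~\ref{trans:obligFree}, resp.\ conditions~\ref{trans:pushHandleCaller},~\ref{trans:internalUnchanged}, tie membership of $\next\aformulabis'$, resp.\ $\next^\opAbs_s\aformulabis'$, resp.\ $\next^\opCall_s\aformulabis'$ at $t$ to membership of $\aformulabis'$ at $\Succ_\arun(t)$, resp.\ $\Succ^{\opAbs,s}_\arun(t)$, resp.\ $\Succ^{\opCall,s}_\arun(t)$ (reading the atom recorded at a call off the stack symbol popped on the matching return, via \eqref{sync:cond:stackatom}); and the eventuality part of $\until$ and $\until^b_s$ formulas is discharged using $F^1_\aformulabis$ and $F^2_\aformulabis$ together with conditions~\ref{trans:abstractUntilPropagateActive}--\ref{trans:abstractUntilPropagateInactive}. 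For an alive but currently inactive stack $j$, condition~\ref{trans:nochangeAtom} propagates the equality forward to the first $t'\geq t$ with $\mpsMove^{t'}=j$.

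The delicate step is the induction for $\until^\opAbs_s$ and $\until^\opCall_s$ inside \eqref{sync:cond:atom}: since $\Succ^{\opAbs,s}_\arun$ and $\Succ^{\opCall,s}_\arun$ are partial and jump over nested sub‑runs, one must juggle simultaneously the atom/tag information recorded on the stack (equation \eqref{sync:cond:stackatom}) and the three side conditions of~\ref{trans:obligFree}, which force the obligation set $\next^\opAbs_s\atom_s$ to be empty in exactly the cases where $\Succ^{\opAbs,s}_\arun(t)$ is undefined --- a call that never returns, a return, or a dying stack. In particular the first case of~\ref{trans:obligFree} has the intended meaning only once the truthfulness of the $\tRet$ tag has been established, which is why the tag equations must be settled before the atom equation. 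Everything else is a routine reading of Figure~\ref{figure:newTrans} and of the definition of $\augmentrun{\cdot}$ against the semantics of the successor relations.
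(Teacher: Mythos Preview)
Your proposal is correct and follows essentially the same approach as the paper: first observe via condition~\ref{trans:origTrans} that $\Pi(\aprodrun)$ is a run of $\mps$, then verify the defining equations \eqref{sync:cond:dead}--\eqref{sync:cond:stackatom} of $\augmentrun{\cdot}$ one by one, establishing the tags before the atoms so that condition~\ref{trans:obligFree} carries its intended semantic meaning when handling $\until^\opAbs_s$. The only presentational differences are that the paper treats the stack contents (equation~\eqref{sync:cond:stackatom}) before the tags rather than after, and phrases the atom argument as ``take the smallest formula on which the atoms differ'' rather than as a structural induction; both orderings work and the two formulations are equivalent.
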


From Lemmas~\ref{lemma:soundness} and~\ref{lemma:completeness} the
soundness and completeness of the reduction follows if we define
the set of new initial states $I_0$ for the REP problem
as states with initial state $\pair{g_0}{i_0}$ for the MC problem and
$\aformula$ present in the part tracking formulae that hold true:
\[ I_0 = \set{ ((\mpsState_0, \vec{\atom}, \vec{\adtag}, \vec{\artag}), i_0) \ |\ 
%% \mpsState=\mpsState_0  \text{ and }
\aformula \in \atom_{i_0}} \]

This gives an exponential-time reduction
from MC to REP as well as with their bounded variants. 

\begin{theorem}  \label{theorem:mc2rep}
Let $\mps$ be a multi-pushdown system with initial configuration  $(g, (\bottom)^N)$ and $\aformula$
be a Multi-\caret  \ formula. Let $\prodmps$ be the 
\ifLONG multi-pushdown \fi system built from $\mps$, $g$ and $\aformula$,
$I_0$  be the associated set of initial states and $\accSet$ be the acceptance condition. 
\begin{description}
\itemsep 0 cm
\item[(I)] If $\arun_1$ is a run of $\mps$ from  $(g, (\bottom)^N)$ then 
$\arun_2 = \augmentrun{\arun_1}$
is a run of $\prodmps$ satisfying $\accSet$ and (A)-(C) hold true. 
\item[(II)] If $\arun_2$ is a run of $\prodmps$ from some 
\ifLONG initial \fi configuration with global state in  $I_0$ and satisfying
$\accSet$, then $\Pi(\arun_2)$ is a run of $\mps$ and (A)-(C) hold true too.
\end{description}
Conditions (A)--(C) are defined as follows:
\begin{description}
\itemsep 0 cm
\item[(A)] $\arun_1$ is $k$-bounded iff $\arun_2$ is $k$-bounded, for all $k \geq 0$;
\item[(B)] $\arun_1$ is $k$-phase-bounded iff $\arun_2$ is $k$-phase-bounded, for all $k \geq 0$;
\item[(C)] $\arun_1$ is $\preceq$-bounded iff $\arun_2$ is $\preceq$-bounded,  for all
 total orderings of the stacks $\preceq = \pair{\mpsMoves}{\leq}$.
\end{description}
\end{theorem}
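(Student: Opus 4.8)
The plan is to derive the theorem almost entirely from Lemmas~\ref{lemma:soundness} and~\ref{lemma:completeness}, and then to check that the boundedness notions are preserved by the map $\gamma(\cdot)$ and its inverse $\Pi(\cdot)$. First I would observe that part (I) is immediate: if $\arun_1$ is a run of $\mps$ from $(g,(\bottom)^N)$, then $\gamma(\arun_1)$ starts from a configuration whose global state lies in $I_0$ — because by the defining condition~\eqref{sync:cond:atom} the atom $\atom_{i_0}^0$ at the initial position records exactly the formulas of $\Cl(\aformula)$ satisfied by $\arun_1$ at position $0$, and $\arun_1\models\aformula$ is what we are tracking, so $\aformula\in\atom_{i_0}^0$; and Lemma~\ref{lemma:soundness} gives that $\gamma(\arun_1)$ is a run of $\prodmps$ satisfying $\accSet$. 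Part (II) is Lemma~\ref{lemma:completeness} verbatim together with the remark that the initial global state being in $I_0$ forces $\Pi(\arun_2)$ to be a run from $(g,(\bottom)^N)$. So the genuine content of the theorem is conditions (A)--(C).

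The key point for (A)--(C) is that $\gamma(\cdot)$ and $\Pi(\cdot)$ are mutually inverse bijections between runs of $\mps$ and accepting runs of $\prodmps$, and — crucially — they preserve the \emph{sequence of active stacks} and the \emph{sequence of actions} step by step. This is visible from the shape of $\gamma(\arun)$: its $t$-th configuration carries the same index $\mpsMove^t=s_t$ as the $t$-th configuration of $\arun$, and condition~\ref{trans:origTrans} in Figure~\ref{figure:newTrans} forces $\prodmps$ to take, on the active stack $\mpsMove$, a transition of $\mpsTrans_\mpsMove$ with the same action $\anaction\in\{\actCall,\actRet,\actInt\}$; moreover the stack-height changes on each stack are identical in $\arun$ and $\gamma(\arun)$, since the augmentation only decorates letters and never pushes or pops ``extra'' symbols (the map is height-preserving per stack, as is clear from~\eqref{sync:cond:stackatom}). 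Once one has recorded these two facts — same active-stack sequence $(s_t)_{t\ge0}$ and same per-stack stack-height profile — each of (A), (B), (C) follows by simply unfolding Definitions~\ref{definition-k-boundedness}, \ref{definition-k-phase-boundedness}, \ref{definition-order-boundedness}: $k$-boundedness depends only on $(s_t)_t$; $k$-phase-boundedness depends only on $(s_t)_t$ together with which steps are return steps; and $\preceq$-boundedness depends only on which steps are return steps together with emptiness of the smaller stacks, i.e.\ on stack heights. In every case the relevant data is literally identical for $\arun_1$ and $\arun_2=\gamma(\arun_1)$ (respectively $\Pi(\arun_2)$ and $\arun_2$), so the ``iff'' is trivial.

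Concretely, I would structure the proof as: (1) cite Lemma~\ref{lemma:soundness} and the computation $\aformula\in\atom_{i_0}^0$ for direction (I), and Lemma~\ref{lemma:completeness} plus the definition of $I_0$ for direction (II); (2) state and justify in one paragraph the ``projection preserves active stacks, actions and per-stack heights'' claim, pointing to condition~\ref{trans:origTrans} and to the fact that $\prodify{\mpsAlphabet}=\mpsAlphabet\times\Atoms(\aformula)\times\tagsRet$ so a push/pop/internal on $\prodmps$ projects to a push/pop/internal on $\mps$ of the first component; (3) read off (A), (B), (C) from this claim and the three definitions. I expect step (2) to be the only place requiring any care — specifically, making sure that the decoration of stack letters in $\gamma(\arun)$ does not alter stack \emph{lengths} (so that $\length{\stackStimeT{s}{t}}$ is the same before and after augmentation, which is what all three boundedness notions ultimately refer to through the $\length{\cdot}$-based characterization of call/return in the semantics) — but this is a one-line consequence of the definitions~\eqref{sync:cond:stackatom} and of condition~\ref{trans:origTrans}, which ties $\anaction$ on the product to $\anaction$ on the original. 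The remaining steps are routine unfolding of definitions.
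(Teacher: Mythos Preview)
Your approach is the paper's: parts (I) and (II) come from Lemmas~\ref{lemma:soundness} and~\ref{lemma:completeness}, and for (A)--(C) the paper makes exactly your observation---that at each position $\arun_1$ and $\arun_2$ act on the same stack with the same action type (hence with the same per-stack height profile)---and declares this sufficient in one sentence.

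One point to correct: in your treatment of (I) you claim that $\gamma(\arun_1)$ starts from a global state in $I_0$, justifying this by ``$\arun_1\models\aformula$ is what we are tracking.'' But the hypothesis of (I) does \emph{not} assume $\arun_1\models\aformula$, and its conclusion does \emph{not} assert that $\arun_2$ starts in $I_0$; the theorem is a pure run-correspondence statement, and the $I_0$ condition belongs to the surrounding MC-to-REP reduction discussed just before the theorem, not to (I) itself. Drop that sentence and your argument for (I) is simply Lemma~\ref{lemma:soundness}, as intended.
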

Note that at each position, $\arun_1$ and $\arun_2$ work on the same stack and perform the same type of
action (call, return, internal move), possibly with slightly different letters. This is sufficient to guarantee
the satisfaction of the conditions (A)--(C).

\ifSHORT
The proof of Lemma \ref{lemma:completeness} is quite involved and
can be found in Appendix~\ref{section-appendix-completeness-lemma}. %% This is a shorter version of proof of completeness lemma intended
%% for shorter version of the paper (which was aborted)

\begin{proof}[Sketch]
First of all we observe that $\arun \egdef \Pi(\aprodrun)$ is indeed a
run of $\mps$ because of constraint \ref{trans:origTrans} for the
transition relation $\prodify{\mpsTrans}$. Thus, we may conveniently
use the same notation as earlier to denote configurations.
By a case-by-case analysis, we can show that $\aprodrun$ matches each
augmentation of $\arun$ as would be obtained by (1)-(6). The most
interesting case is that of the abstract-until operator we
introduced. We will aim to give a flavor of the techniques and main
idea, a more complete proof may be found in
Appendix~~\ref{section-appendix-completeness-lemma}.
To that end, we assume that we have already established that the stack
contents and the tags are correct (that is those in $\aprodrun$ match
those defined by (\ref{sync:cond:stackatom}) and
(\ref{sync:cond:dead}), (\ref{sync:cond:return}),
(\ref{sync:cond:return:whendead}) respectively).

Consider the case when the atoms differ, if at all possible. Let
$\aformulabis$ denote the smallest formula on which the atoms differ.
Let $t_0$ be the smallest position where the difference is for
$\aformulabis$. Let us consider the case when $\aformulabis =
\aformula_1 \until^\opAbs_i \aformula_2$. Assume $s^{t_0} = i$ (other
cases can be reduced to this case by some analysis). Define $t_1$,
$t_2$, $\ldots$ as sequence of positions with $t_{k+1} =
\Succ^\opAbs_i(t_k)$. It is possible to establish that
$\aformulabis$ is in $\atom^{t_0}_i$, $\atom^{t_1}_i$,
$\atom^{t_2}_i$ etc., and also that $\arun, t_0 \models \aformula_1$,
$\arun, t_1 \models \aformula_1$ etc. This requires arguments
akin to LTL-model checking for pushdown systems, where obligations are

it is possible to establish
\end{proof}
\fi
\ifLONG In the rest of this section we prove Lemma~\ref{lemma:completeness}.
 \paragraph{Notation.}
Let $\aprodrun$ be a run of $\prodmps$ satisfying the
acceptance condition $\mathcal{F}$.
First of all, we observe that $\arun \egdef \Pi(\aprodrun)$ is indeed a
run of $\mps$ because of constraint \ref{trans:origTrans} for the
transition relation $\prodify{\mpsTrans}$. Thus, we may conveniently
use the same notation as earlier to denote configurations:
$\inparen{(\mpsState^t, \mpsMove^t), \vect{\mpsWord}^t}$ for $t$-th
configuration of $\arun$,
and
$\inparen{(\prodify{\mpsState^t}, \mpsMove^t), \prodify{\vect{\mpsWord}^t}}$
for $\aprodrun$ with
$\prodify{\mpsState^t} = (\mpsState^t, \vec{\atom}^t, \vec{\adtag}^t, \vec{\artag}^t)$
and  $\prodify{\mpsWord^t_j} =
     (\mpsWord^t_j,\mpsWordAtom^t_j, \mpsWordReturn^t_j)$
for every $j$ in $\mpsMoves$.

\paragraph{Proof strategy.}
By a case-by-case analysis, we will show that $\aprodrun$ matches each
augmentation of $\arun$ as would be obtained by (1)-(6). We will show
this first for contents of the stack, then for the two kinds of
tags. The order will be important, as for some proofs we would need
the assumption that the other parts of the run match. Eventually, we
will show the case for atoms. From the technical aspect, the case for
correctness of parameterized-abstract-until we introduced is the most
interesting, for which we will provide details. The other cases, we shall
just point to the relevant ingredients in the construction, and leave
the details to the reader.

\paragraph{\textbf{Case 1}: Stack content differs.}
Let $t$
denote the first position where augmented stack content in $\aprodrun$
differs from $\augmentrun{\arun}$ as defined in
(\ref{sync:cond:stackatom}).  Let $j$ denote a stack that differs.
First, we observe that the length of the stacks must be the same as
the actions for both runs are the same (this, of course, can be proven
more formally -- but in interest of readability, where clear, we use
informal arguments as these).  Further, since $t$ is the first
position they differ, the difference has to be in the contents at the
``top of the stack''. According to the semantics of multi-pushdown systems, the
change in the stack is possible only in the one stack active at
time $t-1$. We do a case analysis on the action:
\begin{itemize}
\item If $\anaction^{t-1} = \actCall$, then according to constraints
  \ref{trans:retTagCall} and \ref{trans:callHandleAbstract} the
  character at the top of the stack will be the same as defined in
  (\ref{sync:cond:stackatom}), a contradiction.
\item If $\anaction^{t-1} = \actInt$, the stack atom and stack return tag
  will not change from time $t-1$ because of constraints
  \ref{trans:retTagInt} and \ref{trans:internalHandleAbstract},
  contradicting that it differs from $\augmentrun{\arun}$ defined in
  (\ref{sync:cond:stackatom}).
\item If $\anaction^{t-1} = \actRet$, they could not possibly be
  different, since for both the top character is popped.
\end{itemize}

Thus, the augmented stack contents must be
identical and if at all, the difference must be in the \emph{augmented
  global state}. We already observed that the global state and active
stack match, it remains to show that the tags and the atoms match.

\paragraph{\textbf{Case 2}: $\tDead$-$\tAlive$ tag differs.}
Let $t$ denote the least such where the two tags may differ.
Fix $j \in \mpsMoves$, a stack corresponding to which the tag differ at
this position $t$.
\paragraph{\textbf{Case 2(a)}: $\adtag^t_j = \tDead$ but $\exists t' \geq t$ such that $\mpsMove^{t'} = j$.}
We rule this out by the following observation about $\aprodrun$:
\begin{claim}
  Let $j \in \mpsMoves$. Let $\adtag^t_j = \tDead$. Then,
  for all $t' \geq t$,
  $\mpsMove^{t'} \neq j$ and $d^{t'}_j = \tDead$.
\end{claim}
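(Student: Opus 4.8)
The plan is to prove the statement by induction on $t' \geq t$, using two of the transition constraints from Figure~\ref{figure:newTrans}: constraint~\ref{trans:activealive} (the active stack must be tagged $\tAlive$) and constraint~\ref{trans:nochangeDeadTag} (the dead/alive tag of a non-active stack is unchanged by a transition). First I would set up the base case $t' = t$: the hypothesis directly gives $\adtag^t_j = \tDead$, and for $\mpsMove^t \neq j$ we argue by contradiction — if $\mpsMove^t = j$, i.e.\ stack $j$ is active at time $t$, then constraint~\ref{trans:activealive} applied to the transition taken from configuration $t$ forces $\adtag^t_j = \tAlive$, contradicting $\adtag^t_j = \tDead$.

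For the inductive step, I would assume $\mpsMove^{t'} \neq j$ and $\adtag^{t'}_j = \tDead$ for some $t' \geq t$ and prove the same for $t'+1$. Since $j$ is not the active stack at time $t'$ (which by enhanced-MPS conventions is determined by $\mpsMove^{t'}$), constraint~\ref{trans:nochangeDeadTag} applied to the $(t'{+}1)$-th transition gives $\adtag^{t'+1}_j = \adtag^{t'}_j = \tDead$. Then, exactly as in the base case, if $\mpsMove^{t'+1} = j$ we would get $\adtag^{t'+1}_j = \tAlive$ from constraint~\ref{trans:activealive} applied to the transition from configuration $t'+1$, contradicting what we just derived; hence $\mpsMove^{t'+1} \neq j$. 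This closes the induction.

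The only mild subtlety — and the step I would flag as the main thing to be careful about — is the interplay between "stack $j$ is active at time $t'$" in the operational semantics and "$\mpsMove^{t'} = j$" in the enhanced multi-pushdown system: one must invoke the fact that for enhanced multi-pushdown systems the component $\mpsMove^{t'}$ of the global state is precisely the index of the stack that is active for the step taken from configuration $t'$, so that constraints~\ref{trans:activealive} and~\ref{trans:nochangeDeadTag}, which are phrased in terms of the transition's active stack $\mpsMove$, apply with $\mpsMove = \mpsMove^{t'}$. With that identification in hand the argument is a routine two-line induction. Everything else is immediate from reading off the relevant lines of Figure~\ref{figure:newTrans}, so no computation is involved.
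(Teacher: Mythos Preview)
Your proposal is correct and follows essentially the same approach as the paper: the paper's proof is a compressed two-line version of exactly your induction, invoking constraint~\ref{trans:activealive} to get $\mpsMove^t \neq j$ from $\adtag^t_j = \tDead$, then constraint~\ref{trans:nochangeDeadTag} to propagate $\adtag^{t+1}_j = \tDead$, and concluding ``by induction on time.'' Your write-up is simply a more explicit unrolling of the same argument.
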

\begin{proof}
  If $d^t_j = \tDead$, then from $\aprodrun$ being a run of $\prodmps$
  constraint \ref{trans:activealive} forces that $\mpsMove^t \neq
  j$. Consequently, from constraint \ref{trans:nochangeDeadTag} we
  conclude that $d^{t+1}_j = d^t_j = \tDead$. By induction on time the
  claim follows.
\end{proof}
\paragraph{\textbf{Case 2(b)}: $\adtag^t_j = \tAlive$ but $\forall t' \geq t$ $\mpsMove^t \neq j$.}
We make another observation which follows directly from
constraint \ref{trans:nochangeDeadTag} that states a tag may not change unless the corresponding stack
is active.
\begin{claim}
  Let $j \in \mpsMoves$. Let $\adtag^t_j = \tAlive$ and further assume $\forall t' \geq t$
  $\mpsMove^{t'} \neq j$.  Then, $\adtag^{t'}_j = \tAlive$ for all $t' \geq t$.
\end{claim}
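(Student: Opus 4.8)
The plan is to prove this claim by a one-line induction on the position $t' \geq t$, the single ingredient being Constraint~\ref{trans:nochangeDeadTag} of Figure~\ref{figure:newTrans}, which says that in any transition of $\prodmps$ the dead/alive tag of a stack distinct from the active one is preserved. For the base case $t' = t$ I simply invoke the hypothesis $\adtag^t_j = \tAlive$. For the inductive step I assume $\adtag^{t'}_j = \tAlive$ for some $t' \geq t$ and argue as follows: since $\aprodrun$ is a run of $\prodmps$, the move from its $t'$-th configuration to its $(t'+1)$-th configuration is witnessed by a tuple in $\prodify{\mpsTrans_s}$ where $s$ is the stack active at time $t'$, and in an enhanced system $s = \mpsMove^{t'}$. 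The standing assumption of the claim gives $\mpsMove^{t'} \neq j$, so $j$ is one of the non-active stacks at this step; applying Constraint~\ref{trans:nochangeDeadTag} at this transition with the index $j \neq s$ yields $\adtag^{t'+1}_j = \adtag^{t'}_j = \tAlive$, which closes the induction.

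There is essentially no obstacle here: the claim is just the infinite-horizon restatement of the single-step observation quoted immediately before it, and the induction is the only glue needed. The sole point I would take care to spell out is the index bookkeeping — namely, that the transition applied between positions $t'$ and $t'+1$ lies in $\prodify{\mpsTrans_{\mpsMove^{t'}}}$, and that Constraint~\ref{trans:nochangeDeadTag} is universally quantified over the stacks distinct from the active one, so it does apply to $j$.

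Finally, I would record how this claim discharges Case~2(b). Combined with the case hypothesis $\mpsMove^{t'} \neq j$ for all $t' \geq t$, the claim shows that for every $t' \geq t$ neither $\mpsMove^{t'} = j$ nor $\adtag^{t'}_j = \tDead$ holds, so no global state of the acceptance set $F^3_j$ from item~\ref{buchi:dead} is visited after position $t$. This contradicts $\aprodrun$ satisfying the generalized B\"uchi condition $\accSet$; hence Case~2(b) is impossible, which is exactly what is needed to conclude that the dead/alive tags along $\aprodrun$ coincide with those dictated by~(\ref{sync:cond:dead}).
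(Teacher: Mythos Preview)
Your proof is correct and is exactly the argument the paper has in mind: the paper states that the claim ``follows directly from constraint~\ref{trans:nochangeDeadTag}'', and your induction on $t'$ is the obvious unpacking of that remark. Your concluding paragraph on how the claim combines with the B\"uchi set $F^3_j$ to rule out Case~2(b) also matches the paper's own use of the claim.
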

As a consequence, for the tags to mismatch it must be the case that
for all $t' \geq t$, $\adtag^{t'}_j = \tAlive$ and $\mpsMove^{t'}
\neq j$. Though, such a run would not satisfy the B\"{u}chi condition
in (\ref{buchi:dead}) leading to a contradiction.

\paragraph{\textbf{Case 3}: $\tRet$-$\tNoRet$ tag differs.}
Let $t$ denote the least such position.
\begin{claim}
  Let $j$ be in $\mpsMoves$. Let $\artag^t_j = \tNoRet$, then for all
  $t' > t$ where $\size{\mpsWord^{t'}_j} = \size{\mpsWord^{t}_j}$,
  $r^{t'}_j = \tNoRet$.
\end{claim}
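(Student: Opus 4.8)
The plan is to establish the slightly stronger invariant: writing $h \egdef \size{\mpsWord^{t}_j}$, for every $t' \geq t$, if $\size{\mpsWord^{t'}_j} = h$ then $\artag^{t'}_j = \tNoRet$; the statement of the claim is then the instance $t' > t$. I would prove the invariant by strong induction on $t'\geq t$, using only the transition constraints governing the return tag in $\prodmps$, namely \ref{trans:retTagCall}, \ref{trans:retTagInt}, \ref{trans:retTagRet} and \ref{trans:nochageRet} (no B\"uchi condition is needed here). The base case $t' = t$ is the hypothesis $\artag^{t}_j = \tNoRet$. It is convenient to also record an auxiliary fact that drops out of the invariant for earlier times: stack $j$ never has height strictly below $h$ during $[t, T]$; indeed, were $s^{*} \in (t,T]$ the first time with $\size{\mpsWord^{s^{*}}_j} < h$, then $\size{\mpsWord^{s^{*}-1}_j} = h$ and the step $s^{*}-1 \to s^{*}$ is a return on stack $j$, which by \ref{trans:retTagRet} demands $\artag^{s^{*}-1}_j = \tRet$, contradicting the invariant already proved at $s^{*}-1$.

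For the inductive step, let $T + 1 > t$ with $\size{\mpsWord^{T+1}_j} = h$ and perform a case analysis on the step $T \to T+1$. If it is an internal move on stack $j$, or any move on a stack different from $j$, then $\size{\mpsWord^{T}_j} = h$ and the $j$-component of the return tag is left unchanged by \ref{trans:retTagInt} (resp.\ \ref{trans:nochageRet}), so $\artag^{T+1}_j = \artag^{T}_j = \tNoRet$ by the induction hypothesis. A call on stack $j$ producing height $h$ would force $\size{\mpsWord^{T}_j} = h - 1$, which the auxiliary fact rules out. The only remaining possibility is a return on stack $j$, so $\size{\mpsWord^{T}_j} = h + 1$ and, by \ref{trans:retTagRet}, $\artag^{T+1}_j$ equals $\mpsChar_\artag$, the return tag stored on the symbol popped off stack $j$ --- namely the topmost symbol of $\mpsWord^{T}_j$, located at height $h+1$.

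The crux --- and the step I expect to be the real obstacle --- is to show that this popped symbol carries the tag $\tNoRet$; this is the familiar pushdown/summary bookkeeping about information that is ``frozen'' on a stack. I would let $s_0$ be the greatest time in $[t, T]$ at which stack $j$ has height $h$; this is well defined ($t$ qualifies) and $s_0 < T$ since $\size{\mpsWord^{T}_j} = h+1$. By the auxiliary fact together with the maximality of $s_0$, stack $j$ has height exactly $h+1$ at time $s_0+1$ and height at least $h+1$ throughout $(s_0, T]$; hence the step $s_0 \to s_0+1$ is precisely the call on stack $j$ that pushed the symbol now sitting at height $h+1$, that symbol is not popped before the step $T \to T+1$, and whenever it is the top symbol (i.e.\ when the height is exactly $h+1$) an internal move only copies its return-tag component by \ref{trans:retTagInt}. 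Consequently its return tag is constant on $(s_0, T]$ and, by \ref{trans:retTagCall}, equal to $\artag^{s_0}_j$, which is $\tNoRet$ by the induction hypothesis at $s_0$ (height $h$). Therefore $\mpsChar_\artag = \tNoRet$, so $\artag^{T+1}_j = \tNoRet$, completing the induction; instantiating the invariant at any $t' > t$ with $\size{\mpsWord^{t'}_j} = h$ gives the claim.
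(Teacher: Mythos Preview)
Your proof is correct and follows essentially the same approach as the paper: a time-induction combined with a case analysis on the action, where the only nontrivial case is when the height returns to $h$ via a pop, and one must argue that the return-tag component stored on the symbol being popped is $\tNoRet$ because it was frozen on the stack when the matching call occurred (at a time where the height was $h$ and the induction hypothesis applies). The paper organises the induction as ``prove it for the smallest such $t'$ and repeat'', analysing the action \emph{leaving} $t$, whereas you do strong induction on $t'$ and analyse the action \emph{entering} $T+1$; these are equivalent reformulations of the same argument. Your version is in fact slightly more careful than the paper's on one point: the paper asserts that the character popped at time $t'-1$ ``is the same as the one that was pushed at time $t$'', which is not literally true because internal moves can rewrite the top symbol; you correctly observe that condition~\ref{trans:retTagInt} guarantees only the return-tag component is preserved under such rewrites, which is all that is needed.
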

\begin{proof}
  We will prove the statement for the smallest $t'$ with $t' > t$, and
  the claim will follow by induction on time.
  We consider the case where $j = \mpsMove^t$, as otherwise $t' = t+1$
  and the statement follows from condition \ref{trans:nochageRet}. If $\anaction^t =
  \actInt$, again $t' = t + 1$ and the statement follows from
  condition \ref{trans:retTagInt}. $\anaction^t = \actRet$ is not possible because of
  condition \ref{trans:retTagRet}. Hence, the interesting case is when $j = \mpsMove^t$
  and $\anaction^t = \actCall$. Then if $t'$ is as defined above, we
  can conclude from semantics of the multi-pushdown systems that $s^{t'-1} =
  j$ with $\anaction^{t'-1} = \actRet$. Further, the character
  ``popped'' from the stack at time $t'-1$ is the same as the one that
  was ``pushed'' at time $t$ -- which was $\tNoRet$ (condition \ref{trans:retTagCall}). This
  in turn shows, that $\artag^{t'}_j = \tNoRet$ (condition \ref{trans:retTagRet}).
\end{proof}
The following claim which rules out the tag being incorrectly marked
$\tNoRet$ follows from the previous claim, along with condition 6 that
disallows the action to be $\actRet$ when tag is $\tNoRet$.
\begin{claim}
  Let $j$ in $\mpsMoves$. If $\artag^t_j = \tNoRet$, then $\forall t'
  > t$, $\size{\mpsWord^{t'}_j} \geq \size{\mpsWord^t_j}$.
\end{claim}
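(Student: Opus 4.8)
The plan is to argue by contradiction via a minimal violating position, reducing the statement to the preceding claim together with the return condition~\ref{trans:retTagRet} of $\prodify{\mpsTrans}$. Suppose $\artag^t_j = \tNoRet$ but that, contrary to the claim, there is some $t' > t$ with $\size{\mpsWord^{t'}_j} < \size{\mpsWord^t_j}$. I would take the \emph{least} such $t'$. Because stack heights change by at most one per step and only the active stack changes height, minimality of $t'$ yields $\size{\mpsWord^{t''}_j} \geq \size{\mpsWord^t_j}$ for every $t \leq t'' < t'$; in particular $\size{\mpsWord^{t'-1}_j} \geq \size{\mpsWord^t_j}$.

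Next I would analyse the step from $t'-1$ to $t'$. Since the height of stack $j$ strictly decreases across this step, it must be a return performed on stack $j$, so $\mpsMove^{t'-1} = j$ and $\anaction^{t'-1} = \actRet$, and the height drops by exactly one. Combined with $\size{\mpsWord^{t'-1}_j} \geq \size{\mpsWord^t_j}$ and $\size{\mpsWord^{t'}_j} = \size{\mpsWord^{t'-1}_j} - 1 < \size{\mpsWord^t_j}$, this forces $\size{\mpsWord^{t'-1}_j} = \size{\mpsWord^t_j}$ exactly. I can then apply the preceding claim to the position $t'-1$: as $t'-1 \geq t$ (the boundary case $t'-1 = t$ being immediate from the hypothesis $\artag^t_j = \tNoRet$) and $\size{\mpsWord^{t'-1}_j} = \size{\mpsWord^t_j}$, it gives $\artag^{t'-1}_j = \tNoRet$.

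To close, I would invoke condition~\ref{trans:retTagRet}: since stack $j$ is active at step $t'-1$ and performs a return, the transition relation demands $\artag^{t'-1}_{\mpsMove^{t'-1}} = \tRet$, i.e.\ $\artag^{t'-1}_j = \tRet$. As $\tNoRet \neq \tRet$, this contradicts $\artag^{t'-1}_j = \tNoRet$, so no such $t'$ can exist and the claim follows. The only delicate point is pinning the height at $t'-1$ to be \emph{exactly} $\size{\mpsWord^t_j}$ rather than merely $\geq$, since that equality is precisely what licenses the appeal to the preceding claim; this is secured by combining the minimality of $t'$ with the fact that a single return lowers the height by exactly one.
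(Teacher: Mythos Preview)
Your proof is correct and follows essentially the same approach as the paper's. The paper's own argument is a single sentence stating that the claim follows from the previous claim together with condition~\ref{trans:retTagRet} (which forbids a return when the tag is $\tNoRet$); your proposal is a careful unfolding of exactly that, with the additional care of handling the boundary case $t'-1=t$ and pinning down the equality of heights needed to invoke the previous claim.
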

Next, we consider the case when the tag is marked $\tRet$ and show
there is indeed a position in the future when the stack returns.
\begin{claim}
  Let $j \in \mpsMoves$ and $\artag^t_j = \tRet$. Then, $\exists t'
  > t$ such that $\size{\mpsWord^{t'}_j} < \size{\mpsWord^t_j}$.
\end{claim}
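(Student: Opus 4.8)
The plan is to prove this by contradiction, the crucial ingredient being the generalized B\"uchi set $F^4_j$ from item~\ref{buchi:return}. I would assume $\artag^t_j = \tRet$ while $\size{\mpsWord^{t'}_j} \geq m$ for every $t' > t$, where $m := \size{\mpsWord^t_j}$ (note that at time $t$ stack $j$ occupies exactly its bottom $m$ levels, so all levels $\geq m$ of stack $j$ are empty at time $t$). First I would exhibit a position $t_1 > t$ at which the return tag of $j$ has already become $\tNoRet$: since $\aprodrun$ satisfies $\accSet$, the set $F^4_j$ is visited infinitely often, and at such a visit $t_1 > t$ either $\adtag^{t_1}_j = \tDead$, whence $\artag^{t_1}_j = \tNoRet$ by condition~\ref{trans:deadEmptyAtom}, or $\mpsMove^{t_1} = j$ with $\artag^{t_1}_j = \tNoRet$ by definition of $F^4_j$; either way $\artag^{t_1}_j = \tNoRet$ for some $t_1 > t$.

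Next I would analyse how the global tag $\artag_j$ can pass from $\tRet$ at time $t$ to $\tNoRet$ at time $t_1$. It changes only at steps where $j$ is active (condition~\ref{trans:nochageRet}), and at such a step conditions~\ref{trans:retTagCall}, \ref{trans:retTagInt} and~\ref{trans:retTagRet} keep it at $\tRet$ on a $\actCall$ from $\tRet$, leave it unchanged on an $\actInt$, and on an $\actRet$ force the step to start from $\tRet$ and set the new value to the return tag stored on the popped symbol. Hence the first position $t_2 \in [t, t_1)$ at which $\artag_j$ stops being $\tRet$ is necessarily an $\actRet$-step on stack $j$ popping a symbol tagged $\tNoRet$, and $\artag^{t'}_j = \tRet$ for all $t \leq t' \leq t_2$.

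Finally I would derive the contradiction by tracking the return tags carried by the individual symbols of stack $j$. Since $\artag_j = \tRet$ at every active step of $j$ before $t_2$, conditions~\ref{trans:retTagCall} and~\ref{trans:retTagRet} force every symbol pushed on stack $j$ during $(t, t_2]$ to carry the tag $\tRet$ and every pop strictly before $t_2$ to remove a $\tRet$-tagged symbol, while condition~\ref{trans:retTagInt} keeps the stored tag unchanged under internal moves; as all levels $\geq m$ of stack $j$ are empty at time $t$, an induction on time then shows that at time $t_2$ every occupied level $\geq m$ of stack $j$ carries the tag $\tRet$. Because the symbol popped at $t_2$ is tagged $\tNoRet$, it must lie at a level $\leq m-1$, so $\size{\mpsWord^{t_2}_j} = m$, and hence $\size{\mpsWord^{t_2+1}_j} = m-1 < m$ with $t_2+1 > t$, contradicting the standing assumption. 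Combined with the three preceding claims this shows that the $\tRet$/$\tNoRet$ tags along $\aprodrun$ coincide with those assigned by $\augmentrun{\cdot}$.

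The hard part will be the bookkeeping in the last step: making precise that a return tag, once stored at a level of stack $j$, is overwritten only when that level is popped and a fresh symbol pushed, and in particular that internal moves---which do rewrite the top symbol---preserve its return-tag component. This is exactly what condition~\ref{trans:retTagInt} secures, but turning it into a clean time-induction invariant requires some care about how stack levels are indexed.
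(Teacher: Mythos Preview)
Your argument is correct and uses exactly the same ingredients as the paper (constraints~\ref{trans:retTagCall}--\ref{trans:nochageRet}, constraint~\ref{trans:deadEmptyAtom}, and the B\"uchi set $F^4_j$). The paper's proof is organized a bit more directly---it first shows by induction that the return tag stays $\tRet$ as long as the height remains at least $m$, then obtains a contradiction by a case split on whether stack $j$ ever becomes $\tDead$ (if not, $F^4_j$ is never visited; if so, constraint~\ref{trans:deadEmptyAtom} forces $\tNoRet$)---whereas you first extract a $\tNoRet$ position via $F^4_j$ and then track the return tags stored on stack levels to reach the contradiction; the content is the same.
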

\begin{proof}
  It is easy to show that while the height of the stack is more than
  that at time $t$ the tag will be $\tRet$ (using constraints
  \ref{trans:retTagCall}-\ref{trans:retTagRet}).
  In case this stack never becomes dead, the B\"uchi condition (\ref{buchi:return})
will not be satisfied.
  In the case the stack eventually becomes dead, condition
  \ref{trans:deadEmptyAtom} in the transition relation which requires tag to
  be $\tNoRet$ when stack goes dead leads to contradiction.
\end{proof}

\paragraph{\textbf{Case 4}: Atom differs.} Let $\aformulabis$ denote the
smallest formula on which the atoms differ. Let $t$ be the smallest
position where the difference is for $\aformulabis$. That is, for some
$j\in \mpsMoves$, $A^t_j$ does not match definition obtained for atoms
for $\augmentrun{\arun}$ in (\ref{sync:cond:atom}) with respect to
presence (or absence) of $\aformulabis$.
% We first observe that the relevant atom to consider is in fact when
$j = s^{t-1}$, for otherwise it is easy to conclude using constraint
\ref{trans:nochangeAtom} (which states only the atom for an active
stack can change) that one may find a smaller $t$ where the runs
differ with respect to $\aformulabis$. Let $t^* \geq t$ be the smallest
such that $s^{t^*} = j$. If such a $t^*$ does not exist then the stack
would in fact be dead, and constraint \ref{trans:deadEmptyAtom} will
imply that the atom is the empty set -- matching definition of atom
for $\augmentrun{\arun}$, a contradiction. Thus, we have $\aformulabis$
present (respectively, absent) in atom $\atom^{t^*}_j$ but $\arun, t^*
\not\models \aformulabis$ (respectively, $\arun, t^* \models
\aformulabis$). To conclude
the previous statement, we made use of constraint
\ref{trans:nochangeAtom} and definition of atom for
$\augmentrun{\arun}$ in (\ref{sync:cond:atom}).

The rest of the analysis depends of the type of formula $\aformulabis$
is.  We consider the case when $\aformulabis$ is an abstract-until
formula in detail. The other cases for the atom are much simpler we
point to parts of transition relation constraints
($\vec{\prodify{\mpsTrans}}$ in Figure \ref{figure:newTrans}) and
B\"uchi acceptance conditions ($\mathcal{F}$) required to reach a
contradiction. Correctness of atomic propositions is ensured by
constraint \ref{trans:stateConsistent}, that of propositional parts of
the formulas by definition of atom, that of propagating next formulas
by constraint \ref{trans:globalnext}, and for abstract-next by
constraints
\ref{trans:callHandleAbstract}-\ref{trans:returnStackAtom}, until
formulas by atom and B\"uchi acceptance sets (\ref{buchi:until}), and
finally caller formulas (which are easier to handle since they are
only passed down the stack) by constraints
\ref{trans:pushHandleCaller}-\ref{trans:abstractUntilPropagateInactive}. We
also do not elaborate on the case of negation of abstract-until is
incorrectly present (i.e. \emph{release}) which can be established
with the help of transition constraints
%11
\ref{trans:callHandleAbstract}-%
%14
\ref{trans:obligFree} and
%17 
\ref{trans:abstractUntilPropagateActive}-%
%18
\ref{trans:abstractUntilPropagateInactive}.

%
%% \begin{itemize}
%% \item Case: propositional formula. Holds because of constraint
%%   \ref{trans:stateConsistent}.
%% \item Case: boolean operators. Follows from defintion of atom.
%% \item Case: next operator present. If $\next \aformula \in
%%   \atom^t_{s^t}$, then by constraint X, $\aformula \in
%%   \atom^{t+1}_{s^{t+1}}$. Since smallest, we know $\aformula$ holds at
%%   $t+1$. Which would in-turn imply, $\next \aformula$ holds at time
%%   $t$.
%% \item Case: next operator not present. Use maximality of atom, then
%%   same argument as in the previous.
%% \item Case: until formula present. Will construct a sequence of
%%   positions where first formulas hold with the help of the definition
%%   of atom and constraint in transition relation about next
%%   operator. Cannot have infinite number of them thanks to Buchi
%%   acceptance condition.

%%   until formula not present : analogous to the other one? to check.
%% \item Case: abstract next present. If it is the k-th one, obligation
%%   present in the k-th atom. Case: k is not active stack. Then go to
%%   the k-th atom it will hold there.  Case: k is the active stack: If
%%   this is the active stack, we push the k-th atom on the stack. Argue
%%   where it comes out is actually where the semantics of abstract-next
%%   require it to hold.
%% \item
%%\paragraph{%
%%\textbf{Case 4(a)}:
%
\begin{figure}
\includegraphics[width=\textwidth]{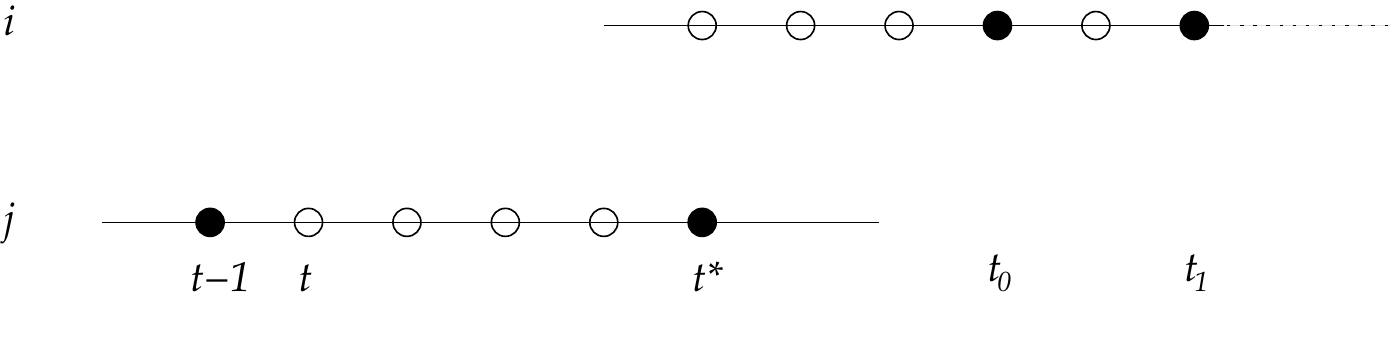}
\caption{Visual representation of $t$-s in proof for abstract-until.
Solid circles correspond to the particular stack being active.}
\end{figure}
For the rest of the proof we focus on the subcase where
$\aformulabis = \aformula_1 \until^\opAbs_i \aformula_2$
such that $\aformulabis \in
\atom^{t^*}_j$ but $\arun, t^* \not\models \aformulabis$.
%}
%
%% We consider the case when $\aformulabis$ is an abstract-until formula
%% $\aformula_1 \until^\opAbs_i \aformula_2$ such that $\aformulabis \in
%% \atom^{t^*}_j$ but $\arun, t^* \not\models \aformulabis$.
%
Let $t_0 \geq t^*$ be smallest such with $s^{t_0}$ = $i$.  We can
conclude that $\aformulabis \in \atom^{t_0}_i$ (trivially if $i=j$, and
using constraint \ref{trans:abstractUntilPropagateInactive} and
\ref{trans:nochangeAtom} otherwise), and $\arun, t_0 \not\models
\aformulabis$ (semantics of operator $\until^\opAbs_i$).
Define $t_1$, $t_2$, $\ldots$ as sequence of positions with $t_{k+1} =
\Succ^{\opAbs,i}_{\arun}(t_k)$.
From constraint \ref{trans:abstractUntilPropagateActive} about
abstract-until formula for active stack in atom corresponding to
active stack, if $\aformulabis \in \atom^{t_0}_i$, then either
$\aformula_2 \in \atom^{t_0}_i$ or $\aformula_1 \in \atom^{t_0}_i$ and
$\next^\opAbs_i \aformulabis \in \atom^{t_0}_i$. Using the fact that
$\aformulabis$ is the smallest ``incorrect'' formula we rule out the
former possibility (it will imply $\arun,  t_0 \models \aformula_2$
-- a contradiction to $\arun, t_0 \not\models \aformulabis$).
Thus, $\arun, t_0 \models \aformula_1$
and $\next^\opAbs_i \aformulabis \in \atom^{t_0}_i$. From constraints
\ref{trans:callHandleAbstract}-\ref{trans:returnStackAtom} it follows
that $\aformulabis \in \atom^{t_1}_i$. Inductively, one may show that
$\aformulabis$ is in $\atom^{t_0}_i$, $\atom^{t_1}_i$, $\atom^{t_2}_i$
etc., and hence also $\arun, t_0 \models \aformula_1$,
$\arun, t_1 \models \aformula_1$ etc.

%% To summarize in above the above formal treatment so far, all
We have
established if an abstract-until formula is in an atom, the
first part of the until formula is true for successive-abstract
positions for the particular stack it is asserted on. The rest of the
argument would have proceeded analogous to the standard until formula,
by having a B\"uchi acceptance condition to ensure that eventually the
second part of the formula holds. As such, the argument does not work since
the sequence of
successive positions is not guaranteed to be infinite. We need to
ensure that we have satisfied the until formula by the last position
through the transition relation. To be able to do that, we need to be able
to identify these last positions locally. This is where the tags will
be helpful -- they help us exactly determine the cases when the
sequence of successive positions where abstract until formula is to be
asserted is finite and the case when it is infinite and we require
a B\"uchi acceptance set. We now present a more formal treatment, which will
illustrate some subtleties (or technicalities, as one may prefer) of
the argument.

First of all,  we recall that from previous arguments we already know
the tags are correct (that is, they follow the semantics as in
(\ref{sync:cond:dead}), (\ref{sync:cond:return})). The sequence would
be finite if $\Succ^{\opAbs,i}_\arun(t_k)$ does not exist for some $k$. We do
a case analysis on the action performed on the stack.
If the action $\anaction^{t_k}$ is $\actCall$, we consider two cases
when the abstract-successor may not exist. First case is when the call
does not return at all, that is $\forall t' > t_k$,
$\size{\mpsWord^{t'}_i} > \size{\mpsWord^{t_k}_i}$. In other words,
$\forall t' \geq t_k + 1$, $\size{\mpsWord^{t'}_i} \geq
\size{\mpsWord^{t_k+1}}$. In such a case $\artag^{t_k+1}_i$ will be
$\tNoRet$ and hence because of constraint
\ref{trans:noreturnObligFree} there will be no abstract-next
obligations. Second case is when the call does return. Even in such a
case the successor may not exist if stack is dead at this
point. Formally, if $t' > t$ denotes the least position such that
$\mpsWord^{t'}_i = \mpsWord^{t_k}_i$, successor will not exist if
$\forall t'' \geq t'$, $s^{t''} \neq i$. In other words,
$\adtag^{t'}_i = \tDead$. Since this is the least such position, we
can conclude that $s^{t'-1} = i$ with $\anaction^{t'-1} = \actRet$
with the character being popped the one that was pushed at time $t_k$.
Since a dead stack will not have any obligations (constraint
\ref{trans:deadEmptyAtom}), we may conclude from constraint
\ref{trans:returnStackAtom} that the atom pushed at time $t_k$ has no
abstract-next formula. To conclude, $\atom^{t_k}_i$ cannot have any
abstract-next obligations if $\anaction^{t_k} = \actCall$ and
$\Succ^{\opAbs,i}_\arun(t_k)$ is undefined.
If the action $\anaction^{t_k}=\actRet$, constraint
\ref{trans:popObligFree} ensures the same.
Finally, for the case $\anaction^{t_k}=\actInt$, abstract-successor
will not exist only if stack becomes dead, which is handled by
constraint \ref{trans:dyingObligFree}.

On the contrary, let us consider the case the sequence of
abstract-successors is infinite. It must be the case that at each of
these positions the tags are $\tNoRet$ (follows from semantics of
abstract-successor and definition of tags). We also observe that for
each until formula there might be at most one such infinite
sequence. Together these observations would imply the run does not
satisfy the B\"uchi acceptance conditions (\ref{buchi:absuntil}).

We have exhausted all possible cases and may safely claim that,
indeed, $\aprodrun = \augmentrun{\arun}$.

\fi

\section{Complexity Analysis with Bounded Runs}

\subsection{Bounded Repeated Global State Reachability Problem}
In this section, we evaluate the computational complexity of the
problem BREP as well as its variant restricted to a single accepting global
state, written below BREP$_{\rm single}$. First, note that there is a 
logarithmic-space 
reduction
from BREP to BREP$_{{\rm single}}$ by copying the  multi-pushdown  system as many times as
the cardinality of $\mathcal{F}$ (as done to reduce non-emptiness problem for
generalized B\"uchi automata to non-emptiness problem for 
\ifLONG standard \fi B\"uchi automata). 
This allows us to conclude about the 
complexity upper bound for BMC itself but it is worth noting 
that the multi-pushdown system obtained by synchronization has an exponential number of global states
and therefore a refined complexity analysis is required to get optimal upper bounds.

In order to analyze the complexity for BREP$_{{\rm single}}$, we take advantage of two proof techniques
\ifLONG that have been introduced earlier and \fi 
for which we provide a complexity analysis that will suit our final
goal. Namely, existence of an infinite $k$-bounded run such that 
\ifLONG a final global state \fi 
 $\pair{\mpsState_f}{i_f}$ is repeated infinitely often is checked: 
\ifLONG
\begin{enumerate}
\itemsep 0 cm
\item[(1)] by  first guessing a sequence of 
intermediate global states witnessing context switches of length at most $k+1$,
\item[(2)] by  computing the (regular) set of reachable configurations following that sequence and then,
\item[(3)] by verifying whether 
 there is a reachable configuration leading to an infinite run such that  $\pair{\mpsState_f}{i_f}$ 
is repeated infinitely often
and no context switch occurs.
\end{enumerate}
\else
(1) by  first guessing a sequence of 
intermediate global states witnessing context switches of length at most $k+1$,
(2) by  computing the (regular) set of reachable configurations following that sequence and then 
(3) by verifying whether 
 there is a reachable configuration leading to an infinite run such that  $\pair{\mpsState_f}{i_f}$ 
is repeated infinitely often
and no stack switch is performed.
\fi 
 The principle behind  (2) is best explained in~\cite{Qadeer&Rehof05} but we 
provide a complexity analysis using the computation of post$^{\star}(\aset)$ along the lines of~\cite{Schwoon02}.
Sets post$^{\star}(\aset)$  need to be computed at most $k$ times, which might cause an 
exponential blow-up (for instance if at each step the number
of states were multiplied by a constant).
Actually,  computing post$^{\star}$ adds an additive factor at each step, which is
essential for our complexity analysis.
\ifLONG
Let us define the problem BREP$_{\rm single}$ (bounded repeated global state reachability problem for multi-pushdown 
systems):
\begin{description}
\itemsep 0 cm
\item[input:] a multi-pushdown system $\mps$, an initial configuration $\pair{\pair{\astate}{i}}{(\bottom)^N}$,
a global state $\pair{\astate_f}{i_f}$ and a bound $k \in \Nat$.
\item[question:] Is there an infinite $k$-bounded run $\arun$ from 
$\pair{\pair{\astate}{i}}{(\bottom)^N}$ such that $\pair{\astate_f}{i_f}$ is repeated infinitely often?
\end{description}
\else
Now, let us define the problem BREP$_{\rm single}$ 
(bounded repeated global state reachability problem for multi-pushdown 
systems). BREP$_{\rm single}$ takes as inputs 
 a multi-pushdown system $\mps$, a configuration $\pair{\pair{\astate}{i}}{(\bottom)^N}$,
a global state $\pair{\astate_f}{i_f}$ and a bound $k \in \Nat$ and it asks whether
there is an infinite $k$-bounded run $\arun$ from 
$\pair{\pair{\astate}{i}}{(\bottom)^N}$ such that $\pair{\astate_f}{i_f}$ is repeated infinitely often.
\fi 

\begin{proposition} \label{proposition-breps}
BREP$_{\rm single}$ can be solved in time $\mathcal{O}(\length{\mps}^{k+1} \times p(k,\length{\mps}))$
for some polynomial $p(\cdot,\cdot)$. 
\end{proposition}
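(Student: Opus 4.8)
The plan is to implement directly the three-step recipe stated just before the proposition and to track the complexity carefully at each step. Since $\mps$ is an enhanced multi-pushdown system, the index of the active stack is part of the global state, so a $k$-bounded run (Definition~\ref{definition-k-boundedness}) splits into at most $k$ maximal \emph{contexts}, in each of which a single stack is active, the last of them being infinite. First I would enumerate all sequences $\sigma = (h_0, \dots, h_{m-1})$ of global states with $m \le k$, interpreting $h_0$ as the given initial global state and $h_{j-1}$ as the global state at the start of the $j$-th context; such a $\sigma$ already fixes the active stack $c_j$ of context $j$ (the stack index recorded in $h_{j-1}$), and I would keep only the sequences with $c_m = i_f$. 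There are at most $\length{\mps}^{k+1}$ such sequences, and I would process each independently.

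For a fixed $\sigma$, I would scan the contexts from left to right, maintaining for every stack $s \in \mpsMoves$ a finite automaton $\aautomaton_s$ over $\mpsAlphabet$ describing the possible contents of stack $s$ (initially accepting only $\bottom$). Inside context $j$ the transition relation is restricted to $\mpsTrans_{c_j}$, which in an enhanced system only fires from configurations whose global state records stack $c_j$, so the system behaves there exactly like the single pushdown system $\mps_{c_j}$ obtained from $\mps$ by keeping only the rules of $\mpsTrans_{c_j}$; moreover, $h_{j-1}$ being fixed, the evolution of stack $c_j$ and of the global state does not depend on the frozen contents of the other stacks. I would use this to argue that the set of configurations reachable at the end of context $j$ retains the product form $\alang(\aautomaton_1) \times \dots \times \alang(\aautomaton_\mpsStackCount)$ --- the observation behind context-bounded reachability in~\cite{Qadeer&Rehof05} --- so that the update only replaces $\aautomaton_{c_j}$ by an automaton for the stack-$c_j$ component of $\poststar{\set{h_{j-1}} \times \alang(\aautomaton_{c_j})}$ in $\mps_{c_j}$, keeping only configurations whose global state equals the next boundary state $h_j$ (for $j < m$), and rejecting $\sigma$ if the result is empty. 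The crucial ingredient is the automata-theoretic construction of $\poststar{\cdot}$ of~\cite{Bouajjani&Esparza&Maler97,Schwoon02}: a single application enlarges $\aautomaton_{c_j}$ only by an \emph{additive} amount polynomial in $\length{\mps}$ (it adds no state beyond the control locations of the rules), so after the $\le k$ updates every $\aautomaton_s$ still has size polynomial in $k$ and $\length{\mps}$.

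Finally I would deal with the infinite terminal context $m$, which uses stack $i_f$: accept $\sigma$ iff $\mps_{i_f}$ has an infinite run, from some configuration with global state $h_{m-1}$ and stack content in $\alang(\aautomaton_{i_f})$, that visits the control state $\pair{\astate_f}{i_f}$ infinitely often. This is repeated control-state reachability from a regular set of configurations for a single pushdown system, which is solvable in time polynomial in $\length{\mps}$ and $\size{\aautomaton_{i_f}}$ using the $\prestar{\cdot}$-based techniques of~\cite{Bouajjani&Esparza&Maler97}. Summing up, each of the $\length{\mps}^{k+1}$ candidate sequences is verified in time polynomial in $k$ and $\length{\mps}$, which gives the announced $\mathcal{O}(\length{\mps}^{k+1} \times p(k, \length{\mps}))$.

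The hard part will be the complexity bookkeeping of the second step. A naive reading of iterated $\poststar{\cdot}$ would multiply the size of the stack automaton by a constant at each of the $k$ contexts, yielding a blow-up exponential in $k$ and only a bound of the form $\length{\mps}^{\mathcal{O}(k)}$; the additivity of a single $\poststar{\cdot}$ computation is exactly what one must exploit so that the $k$ computations accumulate additively rather than multiplicatively. A secondary point that needs care is the justification that, once the boundary global states are fixed, the reachable configuration sets stay products of the individual stack languages, which is what makes it sound to store and update the $\mpsStackCount$ stacks separately.
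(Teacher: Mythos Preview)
Your proposal is correct and follows essentially the same approach as the paper: enumerate the at most $\length{\mps}^{k+1}$ sequences of boundary global states, for each one maintain a family of finite automata (one per stack) updated via $\poststar{\cdot}$ on the active stack's single-pushdown restriction, and finish with a polynomial-time repeated-reachability check on the last pushdown system. You also correctly single out the additive (not multiplicative) growth of the $\poststar{\cdot}$ automaton as the key complexity ingredient, exactly as the paper does. One minor imprecision: the Schwoon $\poststar{\cdot}$ construction adds up to $\card{\mpsStates}\times\card{\mpsAlphabet}$ fresh states (one per push-rule head), not merely the control locations, but this is still an additive polynomial amount and your argument is unaffected.
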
 

The proof of Proposition~\ref{proposition-breps}   is at the heart of our complexity analysis and it relies on constructions 
from~\cite{Bouajjani&Esparza&Maler97,Schwoon02}. Moreover, we shall take advantage of it
when the input system is precisely $\prodmps$. 

\begin{proof} We use the following results on pushdown systems.
Let $\mps = (\mpsStates,  \mpsAlphabet, \mpsTrans)$  be a pushdown system and
$\aautomaton$ be a $\mps$-automaton encoding a regular set of configurations. 
We recall that a  
\defstyle{$\mps$-automaton} $\aautomaton = \triple{Q}{\mpsAlphabet}{\delta,\mpsStates,F}$
is a finite-state automaton over the alphabet $\mpsAlphabet$ such that $\mpsStates
\subseteq Q$~\cite{Schwoon02}.
A configuration $\pair{\astate}{\aword}$ is \defstyle{recognized} by  $\aautomaton$ (written 
 $\pair{\astate}{\aword} \in \alang(\aautomaton)$) $\equivdef$ 
$\astate \step{\aword} \astate'$ in $\aautomaton$ for some accepting state $\astate' \in F$.
Hence, the set $\mpsStates$ in $\triple{Q}{\mpsAlphabet}{\delta,\mpsStates,F}$ can be viewed as a set of initial states and acceptance
is relative to the initial state $\astate$ that allows to satisfy $\astate \step{\aword} \astate'$.
Note that a $\mps$-automaton is nothing else than a means to represent a regular set of configurations.

Let us list a few essential properties.
\begin{enumerate}
\itemsep 0 cm 
%
%% Ste -- 23082012
%% Does not seem to be used
%% 
%% \item 
%% Let $\prestar{\aautomaton} 
%% \egdef
%% \set{
%% \pair{\astate}{\aword}:
%% \exists \pair{\astate'}{\aword'} \in \alang(\aautomaton) \
%% {\rm s.t.} \  \pair{\astate}{\aword} \step{*} \pair{\astate'}{\aword'} \ 
%% {\rm in} \ \mps
%% }$. 
%% The set $\prestar{\aautomaton}$ can be represented by a $\mps$-automaton $\aautomaton'$
%% such that $\length{\aautomaton'}$ is bounded by $\length{\aautomaton} + 
%% \card{\mpsAlphabet} \times \length{\aautomaton}^2$.  $\aautomaton'$ can be computed
%% in polynomial time in $\length{\aautomaton} + \length{\mps}$, see e.g.~\cite{Schwoon02}. 
%% \draftnote{Needs to be refined}
%% 
%
\item[(i)] 
Let $\poststar{\aautomaton} 
\egdef
\set{
\pair{\astate}{\aword}:
\exists \pair{\astate'}{\aword'} \in \alang(\aautomaton) \
{\rm s.t.} \  \pair{\astate'}{\aword'} \step{*} \pair{\astate}{\aword} \ 
{\rm in} \ \mps
}$.
The set
$\poststar{\aautomaton}$ can be represented by a $\mps$-automaton $\aautomaton'$
such that the number of states for $\aautomaton'$ is bounded
by the number of states for $\aautomaton$ plus $\card{\mpsStates} \times  \card{\mpsAlphabet}$
and the time required to compute  $\aautomaton'$ is quadratic in its number of states and in 
$\card{\mpsAlphabet}$. 
Clearly, $\aautomaton'$ can be computed
in polynomial time in $\length{\aautomaton} + \length{\mps}$, see e.g.~\cite{Schwoon02} but we shall also take
advantage of the fact that 
the number of states is at most augmented by a constant factor~\cite[Section 3.3.3]{Schwoon02}. 

\item[(ii)] Checking whether there is an infinite run $\arun$ starting from a configuration
in $\alang(\aautomaton)$ and such that the global state $\pair{\mpsState_f}{i_f}$ is repeated
infinitely often, can be done in polynomial time in $\length{\aautomaton} + \length{\mps}$,
see e.g.~\cite{Bouajjani&Esparza&Maler97}. 

\end{enumerate}

Let $\mps = 
 (\mpsStates \times \insqrbr{\mpsStackCount}, \mpsStackCount, \mpsAlphabet, \mpsTrans_1, 
 \ldots ,\mpsTrans_{\mpsStackCount})$ be an enhanced multi-pushdown system, $\pair{\astate}{i}$ and $\pair{\astate_f}{i_f}$ be global states and
$k \in \Nat$. For every $\couple{\astate_1}{i_1} \cdots \couple{\astate_l}{i_l} \in (\mpsStates \times \insqrbr{\mpsStackCount})^*$ 
such that $l \leq k+1$, $\couple{\astate}{i} = \couple{\astate_1}{i_1}$ and $i_f = i_l$, we check
whether there is an infinite run $\arun$ from 
$\couple{\couple{\astate}{i}}{(\bottom)^N}$ such that 
\begin{enumerate}
\itemsep 0 cm 
\item $\pair{\mpsState_f}{i_f}$ is repeated infinitely often,
\item the sequence of active stacks in $\arun$ is exactly $i_1 \cdots i_l$
      and $\couple{\astate_1}{i_1} \cdots \couple{\astate_l}{i_l}$ witnesses which are the intermediate global states that there is a context switch.
\end{enumerate}
We show that the existence of such a run can be
done in polynomial time, which entails an \exptime \ upper bound since there is an exponential amount of
sequences of the form $\couple{\astate_1}{i_1} \cdots \couple{\astate_l}{i_l}$ with $l \leq k+1$.

The algorithm has two steps. First, we build an automaton $\aautomaton$ encoding
a (regular) set of configurations from 
$\triple{\mpsStates \times \insqrbr{\mpsStackCount}}{
\mpsAlphabet}{\mpsTrans_{i_l}}$ corresponding to the configurations
of $\mps$ restricted to the $i_l$-th stack that can be reached from  $\pair{\couple{\astate_1}{i_1}}{(\bottom)^N}$
via the sequence  $\couple{\astate_1}{i_1} \cdots \couple{\astate_l}{i_l}$. This is precisely  the approach followed
in~\cite{Qadeer&Rehof05} by picking a sequence of context switches and doing a $\poststar{\cdot}$
in that order to get the set of all reachable configurations.
We shall see that $\aautomaton$ can be indeed computed
in polynomial time in $\length{\mps}$. Then, we use the polynomial time algorithm from (ii) above
to check whether  there is an infinite run $\arun$ for the pushdown system
$\triple{\mpsStates \times \set{i_l}}{\mpsAlphabet}{\mpsTrans_{i_l}}$ that starts from a configuration
in $\alang(\aautomaton)$ and such that the global state $\pair{\mpsState_f}{i_f}$ is repeated
infinitely often. 
%% Note that necessarily, $\mpsState_f \in \mpsStates \times \set{i_l}$. 
It remains to check that $\aautomaton$ can be computed
in polynomial time in $\length{\mps}$.
  
Let us introduce some notation. We write $\mps_j$ to denote the pushdown system 
$\triple{\mpsStates \times \insqrbr{\mpsStackCount}}{
\mpsAlphabet}{\mpsTrans_{j}}$. Note that $\mps$ and $\mps_j$ have identical sets of global states:
$\mps_j$ corresponds to $\mps$ restricted to the transition relation $\mpsTrans_j$. 
Given a $\mps$-automaton or a  $\mps_j$-au\-to\-ma\-ton $\aautomaton = 
 \triple{Q}{\mpsAlphabet}{\delta,\mpsStates \times \insqrbr{\mpsStackCount},F}$, we write FSA$(\aautomaton,\pair{\astate}{i})$
to denote a finite-state automaton such that for every $\aword \in \mpsAlphabet^*$, we have
$\aword \in \alang({\rm FSA}(\aautomaton,\pair{\astate}{i}))$ iff $\pair{\pair{\astate}{i}}{\aword} \in \alang(\aautomaton)$.
Note that  FSA$(\aautomaton,\pair{\astate}{i})$ can be easily obtained from $\aautomaton$ 
by replacing $\pair{\astate}{i}$ by a new state $q_0$ that is also the initial state.
Similarly,  given a finite-state automaton $\aautomaton$ over the alphabet $\mpsAlphabet$, we write
PA$(\aautomaton,\pair{\astate}{i})$ to denote a  $\mps$-automaton such that
\begin{enumerate}
\itemsep 0 cm
\item for every $\pair{\astate'}{i'} \neq \pair{\astate}{i}$, for  every $\aword \in \mpsAlphabet^*$, $\pair{\pair{\astate'}{i'}}{\aword} \not \in 
\alang({\rm PA}(\aautomaton,\pair{\astate}{i}))$,
\item  for  every $\aword \in \mpsAlphabet^*$, $\pair{\pair{\astate}{i}}{\aword} \in 
\alang({\rm PA}(\aautomaton,\pair{\astate}{i}))$ iff $\aword \in \alang(\aautomaton)$. 
\end{enumerate}
Without any loss of generality, we can assume that $\mpsStates  \times \insqrbr{\mpsStackCount}$ is disjoint from the set of states of $\aautomaton$ (otherwise, we rename
the states). PA$(\aautomaton,\pair{\astate}{i})$ can be obtained from $\aautomaton$ by adding all the  states from 
$\mpsStates  \times \insqrbr{\mpsStackCount}$ and by replacing in the transition relation, the initial state of $\aautomaton$ 
by $\pair{\astate}{i}$. 
Observe that the operations $\poststar{\cdot}$, FSA$(\cdot)$ and PA$(\cdot)$ define automata where the increase in the number of states
is bounded by $\card{\mpsStates} \times N$. Hence, performing such operations at most 
$3 \times (k+1)$ times, irrespective of the ordering of these operations,
will increase the number of states by at most $\card{\mpsStates} \times N \times 3 \times (k+1)$. 

Now, let us define the (simple) algorithm that consists in computing a $\mps_{i_l}$-automaton
that represents the set of configurations with global state $\pair{\astate_{l}}{i_l}$ reachable
from the initial configuration and following the sequence of intermediate global states (that witness
context switches too).

Let $\alpha := 1$ and define the finite-state automata $\aautomaton_1$, \ldots, $\aautomaton_N$ over the alphabet $\mpsAlphabet$
such that $\alang(\aautomaton_1) = \cdots = \alang(\aautomaton_N) = \set{\bottom}$. While $\alpha \leq l$,
perform the following
steps:
\begin{enumerate}
\itemsep 0 cm
\item Compute a  $\mps_{i_{\alpha}}$-automaton $\aautomatonbis$ that represents all the configurations reachable
      from a configuration of the form $\pair{\pair{\astate_{\alpha}}{i_{\alpha}}}{\aword}$ with $\aword \in \alang(\aautomaton_{i_{\alpha}})$ using
      only the stack $i_{\alpha}$. \\
     That is, $\aautomatonbis  := \poststar{{\rm PA}(\aautomaton_{i_{\alpha}}, \pair{\astate_{\alpha}}{i_{\alpha}})}$ (computed from $\mps_{i_{\alpha}}$);
\item If $\alpha <  l$, then update $\aautomaton_{i_{\alpha}}$ so that it represents the set of contents for the stack $i_{\alpha}$ from the configurations
      represented by $\aautomatonbis$. Otherwise return 
      the $P$-automaton 
      ${\rm PA}(\aautomaton_{i_{l}}, \pair{\astate_{l}}{i_{l}})$.\\
      That is, if $\alpha <  l$ then $\aautomaton_{i_{\alpha}} := {\rm FSA}(\aautomatonbis,  \pair{\astate_{\alpha+1}}{i_{\alpha+1}})$
      else return ${\rm PA}(\aautomaton_{i_{l}}, \pair{\astate_{l}}{i_{l}})$.
\item $\alpha := \alpha +1$.
\end{enumerate}
%% 
%% $\aautomaton_m^i$ be a $\mps_i$-automaton
%% such that $\alang(\aautomaton_m^i) = \set{\pair{\astate}{\varepsilon}}$ for every $i \in  \insqrbr{\mpsStackCount}$.
%% While $m \neq l$   do  the three following steps:
%% \begin{enumerate}
%% \itemsep 0 cm
%% \item Compute $\aautomatonbis$ such that $\alang(\aautomatonbis) =
%%       \poststar{\alang(\aautomaton_{m}^{i_{m+1}})}$;
%% \item if $m+1 < l$ then 
%%       $\aautomaton_{m+1}^{i_{m+2}} := (\aautomaton_{m}^{i_{m+2}})^{\useful{\aautomatonbis}{i_{m+2}}}$
%%       else $\aautomaton := \aautomatonbis^{\useful{\aautomatonbis}{i_{l}}}$;
%% \item $m := m+1$;
%% \end{enumerate} 
%% 
Let the $\mps_{i_l}$-automaton returned by the above algorithm be denoted by $\aautomaton$.
Then, as explained earlier, we check whether  there is an infinite run $\arun$ for 
$\triple{\mpsStates \times \set{i_l}}{\mpsAlphabet}{\mpsTrans_{i_l}}$ that starts from a configuration
in $\alang(\aautomaton)$ and such that $\pair{\mpsState_f}{i_f}$ is repeated
infinitely often.

Note that $\aautomaton$ is obtained from automata with a few states after applying the operations 
$\poststar{\cdot}$, FSA$(\cdot)$ and PA$(\cdot)$ at most $3 \times (k+1)$ times.
Hence, the size of $\aautomaton$ is in 
$\mathcal{O}([3 \times (k+1) \times \card{\mpsStates} \times N]^2 \times \card{\mpsAlphabet})$.
Detecting whether there is an infinite run in which  $\pair{\mpsState_f}{i_f}$ is repeated infinitely often,
will be polynomial in $3 \times (k+1) \times \card{\mpsStates} \times N \times \card{\mpsAlphabet}$. 
As a consequence, the complete decision procedure requires time in $\mathcal{O}(\length{\mps}^{k+1} \times p(k,\length{\mps}))$
for some polynomial $p(\cdot,\cdot)$. 
\end{proof}

\begin{corollary} \label{corollary-k-boundedness}
(I) BMC with $k$ encoded with a unary representation is \exptime-complete.
(II) BMC 
\ifLONG with $k$ encoded with a binary representation
\else
with a binary encoding
\fi  is in 2\exptime. 
\end{corollary}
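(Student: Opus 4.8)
The plan is to combine the reduction of Theorem~\ref{theorem:mc2rep} with the algorithm of Proposition~\ref{proposition-breps} and then to carry out a careful size analysis; the matching lower bound will come for free from the single-stack case.

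First I would reduce BMC to BREP restricted to $k$-bounded runs. Given an instance of BMC of total size $n$, say $(\mps, \astate_0, \aformula, k)$, apply Theorem~\ref{theorem:mc2rep} to obtain $\prodmps$, $I_0$ and $\accSet$: by condition~(A), the $k$-bounded runs of $\mps$ satisfying $\aformula$ correspond exactly to the $k$-bounded runs of $\prodmps$ starting in $I_0$ that meet $\accSet$. The key quantitative facts are that $\prodmps$ has $2^{\mathcal{O}(n)}$ global states and a stack alphabet of size $2^{\mathcal{O}(n)}$, whereas $\card{\accSet}$ is only \emph{linear} in $n$ (one acceptance set per until- and abstract-until-subformula, plus two per stack). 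I would then apply the standard generalized-to-simple B\"uchi translation, making one copy of $\prodmps$ per element of $\accSet$: this only records which acceptance set is currently awaited, leaves all stack operations untouched, hence preserves $k$-boundedness, and produces an instance of BREP$_{\rm single}$ whose multi-pushdown system $\prodmps'$ still has size $M := 2^{\mathcal{O}(n)}$ (polynomially many copies of an exponential-size system).

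Next I would feed $\prodmps'$ and the bound $k$ to Proposition~\ref{proposition-breps}, which decides BREP$_{\rm single}$ in time $\mathcal{O}(M^{k+1}\times p(k,M))$ for some polynomial $p$. Since $M = 2^{\mathcal{O}(n)}$, this is $2^{\mathcal{O}(nk)}$. For a unary encoding of $k$ one has $k \le n$, so the running time is $2^{\mathcal{O}(n^{2})}$ and BMC is in \exptime, which is the upper bound in~(I). For a binary encoding one only has $k \le 2^{n}$, so the running time is $2^{\mathcal{O}(n\cdot 2^{n})} = 2^{2^{\mathcal{O}(n)}}$ and BMC is in \twoexptime, which proves~(II). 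For the \exptime \ lower bound in~(I) I would restrict to $N = 1$: a single-stack system performs no stack switch, so every run is $k$-bounded for every $k \ge 0$, and BMC then coincides with Multi-\caret \ model-checking of a pushdown system, which is \exptime-hard already for \caret~\cite{Alur&Etessami&Madhusudan04} (indeed already for plain LTL model-checking of pushdown systems~\cite{Bouajjani&Esparza&Maler97}).

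The step I expect to be delicate is the size bookkeeping: because $\prodmps'$ is exponential rather than polynomial, the factor $M^{k+1}$ is what dominates, so it really matters that the exponent is \emph{exactly} $k+1$ and not some larger function of $k$. This is precisely the content of the proof of Proposition~\ref{proposition-breps}, in which each of the $\poststar{\cdot}$, FSA and PA steps adds only an \emph{additive} $\card{\mpsStates}\times N$ to the automaton under construction rather than multiplying its size; the automaton therefore stays polynomial in $k$ and $M$, and the leading cost is the $M^{k+1}$ enumeration of sequences of at most $k+1$ intermediate global states witnessing the context switches. Granting that, the two time bounds $2^{\mathcal{O}(n^{2})}$ and $2^{2^{\mathcal{O}(n)}}$ drop out immediately.
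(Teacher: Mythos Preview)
Your proposal is correct and follows essentially the same route as the paper: reduce BMC to BREP via Theorem~\ref{theorem:mc2rep}, pass from generalized to simple B\"uchi acceptance, invoke the $\mathcal{O}(\length{\mps}^{k+1}\cdot p(k,\length{\mps}))$ bound of Proposition~\ref{proposition-breps} on the exponential-size product, and inherit \exptime-hardness from the single-stack case. The only minor imprecision is that the generalized-to-simple B\"uchi step does not yield \emph{one} instance of BREP$_{\rm single}$ but exponentially many (one per choice of initial state in $I_0$ and of accepting state), exactly as the paper phrases it; this extra $2^{\mathcal{O}(n)}$ enumeration factor is harmless for the bounds you derive.
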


In Corollary~\ref{corollary-k-boundedness}(I), \exptime-hardness is inherited from the 
case with a single stack~\cite{Bouajjani&Esparza&Maler97}. 
We have seen that there is  an infinite $k$-bounded run $\arun$ from 
$\pair{\astate}{(\bottom)^N}$ such that $\arun \models \aformula$ iff
there exists a $k$-bounded run $\widehat{\arun}$ from
$(\widehat{g}, (\bottom)^N)$ for some $\widehat{g} \in I_0$
from some  multi-pushdown system $\prodmps$
 such that for each $F \in \mathcal{F}$ there exists a $\mpsState_f
    \in F$ that is repeated infinitely often.
Since  $\prodmps$, $I_0$ and $\mathcal{F}$ are of exponential size, the second proposition
can be reduced to an exponential number of instances of BREP$_{\rm single}$ in which the multi-pushdown
system is of exponential-size only. Using the parameterized complexity upper bound  
$\mathcal{O}(\length{\mps}^{k+1} \times p(k,\length{\mps}))$, we can conclude that BMC with 
$k$ encoded with an unary representation can be solved in exponential time.  
Corollary~\ref{corollary-k-boundedness}(II) is then a consequence of the above argument.

It is worth also noting that~\cite[Theorem 15]{BCGZ11} would  lead to an \exptime \ upper
bound for BMC if $k$ is not part of the input, see the \exptime \ upper bound
for the problem NESTED-TRACE-SAT($\mathcal{L}^{-},k$) introduced in~\cite{BCGZ11}; 
in our case $k$ {\em is} indeed part of the input and in that
case, the developments in~\cite{BCGZ11}  will lead to a 2\exptime \ bound by using the method used for   NESTED-TRACE-SAT($\mathcal{L}^{-},k$)
even if $k$ is encoded in unary.
Indeed, somewhere in the proof, the path expression $succ_{\leq k}$ is exponential in the value $k$.
Hence, Corollary~\ref{corollary-k-boundedness}(I) is the best we can hope for when $k$ is part of the input of
the model-checking problem. 

Adding regularity constraints about stack contents preserves the complexity upper bound.
We write BMC$^{\sc reg}$ to denote the extension of BMC in which Multi-\caret \ is replaced
by Multi-\caret$^{\sc reg}$ (see Section~\ref{section-definition-logic}). 

\begin{corollary} \label{corollary-k-boundedness-reg}
(I) BMC$^{\sc reg}$ with $k$ encoded with an unary representation is \exptime-complete.
(II) BMC$^{\sc reg}$ 
\ifLONG with $k$ encoded with a binary representation
\else
with a binary encoding
\fi  
is in 2\exptime. 
\end{corollary}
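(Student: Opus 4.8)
The plan is to re-run the construction of Section~\ref{section-synchronisation} and the complexity analysis behind Proposition~\ref{proposition-breps} and Corollary~\ref{corollary-k-boundedness}, adding one extra layer of bookkeeping to the synchronized product that handles the new atomic formulae ${\tt in}(s,\aautomaton)$. The \exptime\ lower bound in~(I) is immediate: BMC$^{\sc reg}$ subsumes BMC, which is already \exptime-hard even with a single stack and no regularity constraints by~\cite{Bouajjani&Esparza&Maler97}. So only the two upper bounds need work, and for them it suffices to show that $\prodmps$ can be extended to cope with ${\tt in}(s,\aautomaton)$ while staying of the same (exponential) order of size.

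The idea is to synchronize each stack with the finite-state automata that test its content. For a stack $j$, let $\mathcal{B}_j$ be the set of automata $\aautomaton$ such that ${\tt in}(j,\aautomaton)$ occurs in $\Cl(\aformula)$; since every such $\aautomaton$ is part of $\aformula$, the product $\prod_{\aautomaton\in\mathcal{B}_j}\card{Q_\aautomaton}$ is at most $2^{\mathcal{O}(\size{\aformula})}$. I would enrich the alphabet of stack $j$ so that every pushed symbol additionally records, for each $\aautomaton\in\mathcal{B}_j$, the state reached by $\aautomaton$ after reading the portion of the stack \emph{strictly below} that symbol, and enrich the global state so that it records, for every $j$ and every $\aautomaton\in\mathcal{B}_j$, the state $Q^{j}_{\aautomaton}$ reached by $\aautomaton$ on the whole current content of stack $j$. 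These annotations are maintained deterministically by the transition relation: on a $\actCall$ of stack $j$ the new top symbol is annotated with the current $Q^{j}_{\aautomaton}$ and $Q^{j}_{\aautomaton}$ is advanced by one $\aautomaton$-transition reading the pushed letter; on an $\actInt$ move the annotation of the replaced symbol is copied onto the new top symbol and $Q^{j}_{\aautomaton}$ is recomputed from it; on a $\actRet$ the restored top symbol already carries its ``state below'' annotation, from which $Q^{j}_{\aautomaton}$ is recomputed by one $\aautomaton$-transition on the now-visible top symbol. For inactive stacks these data are left unchanged, exactly as the atoms and tags are. Then the atom-consistency clause gets the single extra rule ``${\tt in}(j,\aautomaton)\in\atom$ iff $Q^{j}_{\aautomaton}$ is accepting for $\aautomaton$'', and since this is purely local no new B\"uchi set is required; the definition of $\augmentrun{\cdot}$ and Lemmas~\ref{lemma:soundness} and~\ref{lemma:completeness} go through with this one added case.

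With this extension, the new stack alphabet and the new global-state space of $\prodmps$ are still of size $2^{\mathcal{O}(N\cdot\size{\aformula})}$, exactly as in the regularity-free case, so Theorem~\ref{theorem:mc2rep} still applies (condition~(A) is untouched, since at each step $\arun_1$ and $\arun_2$ work on the same stack and perform the same type of action), and BMC$^{\sc reg}$ reduces to an exponential number of instances of BREP$_{\rm single}$ whose multi-pushdown systems are of exponential size. Feeding the bound $\mathcal{O}(\length{\mps}^{k+1}\times p(k,\length{\mps}))$ of Proposition~\ref{proposition-breps} with $\length{\mps}$ exponential yields time $2^{\mathcal{O}(k)}\cdot 2^{\mathcal{O}(N\size{\aformula})}$, which is \exptime\ when $k$ is written in unary and 2\exptime\ when $k$ is written in binary, giving~(I) and~(II). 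The only delicate point I anticipate is precisely the maintenance of the automaton annotations with a pushdown transition that sees only the top of the stack: one must store on each symbol the state reached \emph{before} reading it (equivalently, after reading everything strictly below it) rather than the state after reading it, so that an $\actInt$ replacement of the top symbol, and a $\actRet$ that exposes a new top symbol, can both recompute the annotations without ever inspecting the symbol underneath. Once this invariant is fixed, the rest is a routine adaptation of the Multi-\caret\ case.
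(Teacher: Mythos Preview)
Your proposal is correct and follows essentially the same approach as the paper: enrich the global states and stack alphabet of $\prodmps$ with the ``state(s) reached so far'' by each regularity automaton, maintain this annotation through $\actCall/\actInt/\actRet$ via the top-of-stack invariant you describe, add the local consistency clause ``${\tt in}(j,\aautomaton)\in\atom$ iff the current annotation is accepting'', and observe that the blow-up stays single-exponential so Proposition~\ref{proposition-breps} and Corollary~\ref{corollary-k-boundedness} still apply. One small caveat: you speak of \emph{the} state $Q^{j}_{\aautomaton}$ reached by $\aautomaton$, which presupposes determinism; the paper tracks a \emph{set} of reachable states (subset construction on the fly) to handle nondeterministic $\aautomaton$, but since $\prod_{\aautomaton}2^{\card{Q_\aautomaton}}=2^{\sum\card{Q_\aautomaton}}\le 2^{\mathcal{O}(\size{\aformula})}$ your complexity claims are unaffected by this fix.
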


%% 
%% Proof sketch of Corollary~\ref{corollary-k-boundedness-reg} can be found in 
%% Appendix~\ref{section-appendix-regularity}. In a nutshell, 
%% we  enrich the global states of  $\prodmps$ with pieces of information about the regularity constraints
%% satisfied by the current stack content. Typically, such pieces of information are FSA
%% enriched with a set of states and an update on a stack triggers an update on the set of states.
%% Moreover, we have to take advantage of the stack mechanism to recover previous 
%% values.
%% 
Let us explain how the construction of $\prodmps$ can be updated so that BMC$^{\sc reg}$
can be solved almost as BMC. Obviously, we have to take care of regularity constraints and to do so,
we  enrich the global states with pieces of information about the regularity constraints
satisfied by the current stack content. Typically, such pieces of information shall be finite-state automata
enriched with a set of states and an update on a stack triggers an update on the set of states.
Moreover, we  take advantage of the stack mechanism to recover previous values of such pieces of information.

Let us provide a bit more detail. Suppose that the formula $\aformula$ contains the following
regularity constraints ${\tt in}(s_1, \aautomaton_1)$, \ldots, ${\tt in}(s_n, \aautomaton_n)$. We extend the notion
of augmented run so that global states are enriched with triples 
$(s_1,\aautomaton_1, \aset_1)$, \ldots, $(s_n,\aautomaton_1, \aset_n)$ where each $\aset_i$ is a set of states
from $\aautomaton_i$. By definition, $\aset _i$ is the set of states from $\aautomaton_i$ that can be reached
from some initial state of $\aautomaton_i$ with the current content of the stack $s_i$. Hence, $\aset_i$ is uniquely
defined but since $\aautomaton_i$ is not necessarily deterministic, $\aset_i$ may not be a singleton. 
In the definition of $\prodmps$, $(s_1,\aautomaton_1, \aset_1)$, \ldots, $(s_n,\aautomaton_1, \aset_n)$ are updated
according to the updates on stacks but we have to be a little bit careful. Before explaining the very reason,
first note that if ${\tt in}(s_i, \aautomaton_i)$ belongs to an atom of some global state of  $\prodmps$,
we impose that $\aset_i$ contains an accepting state of $\aautomaton_i$, which amounts to check that
the current content of the stack $s_i$ is indeed a pattern from $\alang(\aautomaton_i)$. 
Let us explain now how to update the values  $(s_1,\aautomaton_1, \aset_1)$, \ldots, $(s_n,\aautomaton_1, \aset_n)$.
When a call action is performed on a stack $s$ with letter $a$, 
each value  $(s_i,\aautomaton_i, \aset_i)$ with $s_i = s$ is replaced by
 $(s_i,\aautomaton_i, \asetbis_i)$ where $\asetbis_i$ is the set of states that can be reached
from some state of $\aset_i$ by reading $a$ (as what is done in the power set construction for
finite-state automata). 
Moreover, we extend the stack alphabet of $\prodmps$ so that each letter of the stack alphabet is also
enriched with values  $(s_1,\aautomaton_1, \aset_1)$, \ldots, $(s_n,\aautomaton_1, \aset_n)$.
When a call action is performed in $\mps$, we perform also a call in  $\prodmps$, but with a letter enriched
with the values $(s_1,\aautomaton_1, \aset_1)$, \ldots, $(s_n,\aautomaton_1, \aset_n)$ on the stack.
Now, when a return is performed on a stack $s$, the current values 
 $(s_1,\aautomaton_1, \aset_1)$, \ldots, $(s_n,\aautomaton_1, \aset_n)$ on the top of the stack
are used to restore those values in the global state of $\prodmps$.
Similarly, when an internal action is performed on a stack $s$, 
the current values 
 $(s_1,\aautomaton_1, \aset_1)$, \ldots, $(s_n,\aautomaton_1, \aset_n)$ on the top of the stack
are also used to get the new values in the global state (but this time these values are not popped from
the stack). By observing that values of the form $(s_1,\aautomaton_1, \aset_1)$, \ldots, $(s_n,\aautomaton_1, \aset_n)$ 
are of linear size in the size of $\aformula$, all the complexity analysis we have performed
for BMC can be easily adapted  to BMC$^{\sc reg}$; actually, the arguments are identical except that the construction
of  $\prodmps$ is a bit more complex, as described above.

\subsection{Complexity Results for Other  Boundedness Notions}
\label{section-others}

%% Ste 17092012
%% 
%% (older comment)
%% \begin{itemize}
%% \item OMPS: Preliminary understanding, double exponential repeated
%%   reachability analysis. (only looked at FSTTCS paper of Atig for
%%   now. Add reference here.
%% 
%% \item k-phase bounded: To analyse. References,
%%   \begin{itemize}
%%     \item \emph{Emptiness of Multi-Pushdown Automata Is
%%       2ETIME-complete}. Atig, Bollig, Peter Habermehl.
%%     \item \emph{A robust class of Context Sensitive Languages} La
%%       Torre, Madusudan, Parlato.
%%     \item \emph{From Multi to Single Stack Automata} Atig. Also for
%%       OMPA.
%%   \end{itemize}
%% 
%% \end{itemize}
%% 
%% 
%% \begin{enumerate}
%% \itemsep 0 cm
%% \item Check complexity of reducing emptiness problem for $k$-phase
%% multi-stack systems to ordered multi-stack 
%% system~\cite{Atig&Bollig&Habermehl08}. 
%% \item \cite{Atig10} deals with repeated reachability for ordered
%% multi-stack systems.
%% \item What is the complexity of this problem  
%%  for $k$-phase
%% multi-stack systems? (probably in $2 \exptime$)
%% \item Complexity of checking repeated reachability for our synchronized
%% product using~\cite{Atig10}.
%% \end{enumerate}
%% 
%%
\ifLONG
In this section, we focus on the complexity analysis for OBMC and PBMC based not only
on previous developments but also on the complexity of repeated reachability problems
when runs are either $k$-phase-bounded or from ordered multi-pushdown systems. 
\else
In this section, we focus on the complexity analysis for OBMC and PBMC. 
\fi 
Let OREP$_{\rm single}$ be the variant of BREP$_{\rm single}$ with ordered 
multi-pushdown systems: 
\ifLONG
\begin{description}
\itemsep 0 cm
\item[input:] an ordered multi-pushdown system $\mps$, a 
configuration $\pair{\pair{\astate}{i}}{(\bottom)^N}$,
a global state $\pair{\astate_f}{i_f}$.
\item[question:] Is there an infinite  run $\arun$ from 
$\pair{\pair{\astate}{i}}{(\bottom)^N}$ such that $\pair{\astate_f}{i_f}$ is repeated infinitely often?
\end{description}
\else
it takes as inputs
an ordered multi-pushdown system $\mps$, a 
configuration $\pair{\pair{\astate}{i}}{(\bottom)^N}$,
a global state $\pair{\astate_f}{i_f}$ and it asks whether there is an  
 infinite  run $\arun$ from 
$\pair{\pair{\astate}{i}}{(\bottom)^N}$ such that $\pair{\astate_f}{i_f}$ is repeated 
infinitely often.
\fi 
According to~\cite[Theorem 11]{Atig10bis}, 
%% repeated reachability for  ordered multi-pushdown systems $\mps$ with $k$ stacks
OREP$_{\rm single}$ restricted to  ordered multi-pushdown systems with $k$ stacks
can be checked in time
$\mathcal{O}(\size{\mps}^{2^{d \ k}})$ where $d$ is a constant.
Our synchronized product $\prodmps$ is exponential in the size of formulas 
(see Section~\ref{section-synchronisation}), whence
order-bounded model-checking problem OBMC can be solved in
$2 \exptime$ too ($k$ is linear in the size of our initial $\mps$). 
Note that Condition (C) from Theorem~\ref{theorem:mc2rep} needs to be used here.

\begin{corollary} \label{corollary-obmc}
OBMC  is in $2 \exptime$.
%% (II) OBMC with $k$ encoded with a binary representation is in 3\exptime. 
\end{corollary}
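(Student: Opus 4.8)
The plan is to chain together two ingredients: the model-checking-to-repeated-reachability reduction of Theorem~\ref{theorem:mc2rep}, which is readily seen to respect order-boundedness, and the complexity bound for ordered repeated reachability that follows from~\cite[Theorem~11]{Atig10bis}. Concretely, given an instance $(\mps, \pair{g_0}{i_0}, \aformula, \preceq)$ of OBMC, I would first build the synchronized product $\prodmps$ together with $I_0$ and $\accSet$ as in Section~\ref{section-synchronisation}. By Theorem~\ref{theorem:mc2rep}, there is a $\preceq$-bounded run of $\mps$ from $\pair{g_0}{(\bottom)^{\mpsStackCount}}$ satisfying $\aformula$ iff there is a $\preceq$-bounded run of $\prodmps$ from some global state in $I_0$ meeting the generalized B\"uchi condition $\accSet$; condition~(C) of that theorem is exactly what is needed, since $\prodmps$ never alters which stack is active nor the type (call/return/internal) of the action performed, so the two runs are $\preceq$-bounded simultaneously, and in particular $\prodmps$ equipped with the same ordering $\preceq$ is again an ordered multi-pushdown system.

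Next I would dispose of the generalized B\"uchi condition. As already used for BREP versus BREP$_{\rm single}$, one replaces $\accSet$ by a single acceptance set at the cost of duplicating $\prodmps$ roughly $\card{\accSet}$ times; since $\card{\accSet}$ is polynomial in $\size{\aformula} + \size{\mps}$ (one set per (abstract-)until subformula, plus $2\mpsStackCount$ sets for the dead/return tags), this is only a polynomial blow-up and preserves both $\preceq$-boundedness and the ordering $\preceq$. What remains is an instance of OREP$_{\rm single}$ on an ordered multi-pushdown system $\prodmps'$ of size $2^{\mathcal{O}(\size{\aformula} + \size{\mps})}$ with $\mpsStackCount$ stacks.

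Finally I would invoke~\cite[Theorem~11]{Atig10bis}: OREP$_{\rm single}$ on an ordered multi-pushdown system of size $M$ with $k$ stacks is decidable in time $\mathcal{O}(M^{2^{d k}})$ for some absolute constant $d$. Plugging in $M = 2^{\mathcal{O}(\size{\aformula} + \size{\mps})}$ and $k = \mpsStackCount = \mathcal{O}(\size{\mps})$ gives $M^{2^{d \mpsStackCount}} = 2^{\mathcal{O}(\size{\aformula}+\size{\mps})\,\cdot\,2^{\mathcal{O}(\size{\mps})}} = 2^{2^{\mathcal{O}(\size{\aformula}+\size{\mps})}}$, i.e.\ a doubly exponential running time, as required; the whole reduction (building $\prodmps$, then $\prodmps'$) is carried out in at most exponential time and is thus absorbed.

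The main obstacle is the first step: verifying that the construction of Section~\ref{section-synchronisation} genuinely yields an \emph{ordered} multi-pushdown system for the same ordering, i.e.\ that the side condition of Definition~\ref{definition-order-boundedness} (``a return on stack $s$ is allowed only when all stacks smaller than $s$ are empty'') is neither created nor destroyed by the augmentation. This reduces to the remark following Theorem~\ref{theorem:mc2rep}: $\prodmps$ faithfully carries a copy of each stack content (the original letter decorated with atoms and tags, with $\bottom$ still the unique bottom marker that cannot be pushed, popped or rewritten), and at every step it performs the same action on the same stack as $\mps$; hence a stack is empty in $\prodmps$ exactly when it is empty in $\mps$, and condition~(C) transfers $\preceq$-boundedness in both directions. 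Everything else is bookkeeping.
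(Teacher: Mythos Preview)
Your proposal is correct and follows essentially the same route as the paper: reduce OBMC to repeated reachability via Theorem~\ref{theorem:mc2rep} using condition~(C), collapse the generalized B\"uchi condition to a single accepting set, and then invoke~\cite[Theorem~11]{Atig10bis} on the (exponentially larger) product with its $\mpsStackCount$ stacks to obtain the $2\exptime$ bound. Your discussion of why $\preceq$-boundedness is preserved is more explicit than the paper's (which simply appeals to the remark after Theorem~\ref{theorem:mc2rep} that $\arun_1$ and $\arun_2$ use the same stack and action type at each step), but the substance is identical.
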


Corollary~\ref{corollary-obmc} is close to optimal since
non-emptiness problem for ordered multi-pushdown systems is 2\etime-complete. 
In addition, the same complexity upper bounds apply even when regularity constraints on stack
contents are added. 

Now, let us conclude this section by considering $k$-bounded-phase runs. Again,
let us define the problem PBREP$_{\rm single}$:
\ifLONG
\begin{description}
\itemsep 0 cm
\item[input:] a multi-pushdown system $\mps$, an initial configuration $\pair{\pair{\astate}{i}}{(\bottom)^N}$,
a global state $\pair{\astate_f}{i_f}$ and a bound $k \in \Nat$.
\item[question:] Is there an infinite $k$-phase-bounded run $\arun$ from 
$\pair{\pair{\astate}{i}}{(\bottom)^N}$ such that $\pair{\astate_f}{i_f}$ is repeated infinitely often?
\end{description}
\else
it takes as inputs 
a multi-pushdown system $\mps$, an initial configuration $\pair{\pair{\astate}{i}}{(\bottom)^N}$,
a global state $\pair{\astate_f}{i_f}$ and a bound $k \in \Nat$ and it asks
whether there is 
an infinite $k$-phase-bounded run $\arun$ from 
$\pair{\pair{\astate}{i}}{(\bottom)^N}$ such that $\pair{\astate_f}{i_f}$ is repeated 
infinitely often.
\fi
In~\cite[Section 5]{Atig&Bollig&Habermehl08}, it is shown that non-emptiness
for $k$-phase multi-pushdown systems can be reduced to
non-emptiness for ordered multi-pushdown systems with $2k$ stacks. By inspecting the proof, we can conclude:
\ifLONG
\begin{enumerate}
\itemsep 0 cm
\item a similar reduction can be performed for reducing the repeated reachability of a global state,
\item non-emptiness of $k$-phase $\mps$ with $N$ stacks is reduced to 
non-emptiness of one of $N^{k}$ instances of $\mps'$ with $2k$ stacks and each $\mps'$ is
polynomial-size in $k + \length{\mps}$.
\end{enumerate}
\else
a similar reduction can be performed for reducing the repeated reachability of a global state,
and  non-emptiness of $k$-phase $\mps$ with $N$ stacks is reduced to 
non-emptiness of one of $N^{k}$ instances of $\mps'$ with $2k$ stacks and each $\mps'$ is in
polynomial-size in $k + \length{\mps}$.
\fi 
Therefore, PBREP$_{\rm single}$ is in  $2 \exptime$ too.
Indeed, there is an exponential number of instances and checking non-emptiness for one of them
can be done in double exponential time. 
By combining the different complexity measures above, checking an instance
of PBREP$_{\rm single}$ with  $\prodmps$ requires time in
\ifLONG 
$$
\mathcal{O}(
N^k \times \size{\prodmps}^{2^{d \ 2k}}
)
$$
\else
$
\mathcal{O}(
N^k \times \size{\prodmps}^{2^{d \ 2k}}
)
$
\fi
which is clearly double-exponential in the size of $\mps$.  
Consequently, bounded model-checking with bounded-phase  multi-pushdown systems
is in $2 \exptime$ too if the number of phases is encoded in 
\ifLONG 
unary (and in $3 \exptime$ otherwise).
\else
unary.
\fi 
\begin{corollary} \label{corollary-pbmc}
(I) PBMC where $k$ is encoded in unary is in $2 \exptime$.
(II) PBMC where $k$ is encoded in binary  is in $3\exptime$. 
\end{corollary}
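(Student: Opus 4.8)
The plan is to chain the reductions already set up in the paper with the external results on ordered multi-pushdown systems, and then do the nested-exponential bookkeeping carefully. First I would invoke Theorem~\ref{theorem:mc2rep}, in particular conjunct (B) of conditions (A)--(C), to reduce an instance $(\mps,\astate,\aformula,k)$ of PBMC to $k$-phase-bounded repeated reachability on the synchronized product $\prodmps$ from Section~\ref{section-synchronisation}: there is an infinite $k$-phase-bounded run of $\mps$ satisfying $\aformula$ iff there is an infinite $k$-phase-bounded run of $\prodmps$ from some global state in $I_0$ meeting the generalized B\"uchi condition $\accSet$. The quantitative data to carry forward is that $\prodmps$ has exponentially many global states in $|\mps|+|\aformula|$, has the \emph{same} number $\mpsStackCount$ of stacks as $\mps$, has stack alphabet of exponential size, and that $|\accSet|$ is linear in $|\aformula|$. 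Copying $\prodmps$ once per set of $\accSet$, exactly as in the logspace reduction from BREP to BREP$_{\rm single}$, reduces the question to at most $|\aformula|$ instances of PBREP$_{\rm single}$ over a multi-pushdown system of exponential size with $k$ phases and $\mpsStackCount$ stacks.

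Next I would bound PBREP$_{\rm single}$ on its own. Inspecting the construction of~\cite[Section~5]{Atig&Bollig&Habermehl08}, which reduces non-emptiness of a $k$-phase multi-pushdown system with $N$ stacks to non-emptiness of an ordered multi-pushdown system with $2k$ stacks, I would check that (i) the same translation carries a repeated-reachability obligation on a distinguished global state of the source to a B\"uchi obligation on the target ordered system, and (ii) the source phase-bounded problem reduces to a disjunction over at most $N^k$ ordered instances (the nondeterministic choice being, for each of the $k$ phases, which stack is the one on which returns are performed), each of size polynomial in $k+|\mps|$. Feeding each such instance into~\cite[Theorem~11]{Atig10bis}, which solves repeated reachability for ordered multi-pushdown systems with $m$ stacks in time $\mathcal{O}(|\mps|^{2^{d\,m}})$ for a fixed constant $d$, yields that PBREP$_{\rm single}$ on a source of size $n$ with $k$ phases is solvable in time $\mathcal{O}\bigl(N^{k}\cdot (k+n)^{c\cdot 2^{2dk}}\bigr)$ for some constant $c$.

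Finally I would substitute $n=|\prodmps|=2^{\Theta(|\mps|+|\aformula|)}$, $N\le|\mps|$, together with the encoding-dependent bound on $k$. If $k$ is given in unary, then $k$ is polynomial in the input length, so $2^{2dk}$ is merely exponential, $n^{c\cdot 2^{2dk}}=2^{\,\mathrm{poly}\cdot 2^{2dk}}$ is doubly exponential, and the factor $N^{k}=2^{k\log N}$ is only exponential and hence dominated; this gives item~(I), $2\exptime$. If $k$ is given in binary it may be exponential in the input length, so $2^{2dk}$ becomes doubly exponential, making the same product triply exponential; this gives item~(II), $3\exptime$. The main obstacle I anticipate is the verification in the second paragraph: one must be sure the Atig--Bollig--Habermehl translation is faithful with respect to an $\omega$-acceptance condition and not merely state reachability, and that it decomposes into only $N^{k}$ \emph{polynomial}-size ordered instances rather than a single exponential-size one --- it is precisely this decomposition, combined with Condition~(B) of Theorem~\ref{theorem:mc2rep} transferring the phase bound between $\mps$ and $\prodmps$, that pins the final bound at two (resp.\ three) exponentials; the remaining steps are routine nesting of exponentials.
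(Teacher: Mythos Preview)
Your proposal is correct and follows essentially the same route as the paper: reduce via Theorem~\ref{theorem:mc2rep}(B) to $k$-phase-bounded generalized B\"uchi on $\prodmps$, collapse to single B\"uchi, then invoke the $N^{k}$-fold decomposition of~\cite[Section~5]{Atig&Bollig&Habermehl08} into polynomial-size ordered instances with $2k$ stacks and finish with~\cite[Theorem~11]{Atig10bis}, arriving at the bound $\mathcal{O}(N^{k}\cdot|\prodmps|^{2^{2dk}})$. The only caveat the paper flags, and which you also correctly single out as the main obstacle, is that the Atig--Bollig--Habermehl reduction must be checked to preserve the repeated-reachability (B\"uchi) obligation, not merely state reachability; the paper simply asserts this ``by inspecting the proof,'' so your plan matches exactly.
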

Again, the same complexity upper bounds apply when regularity constraints  are added. 
Note that an alternative proof of Corollary~\ref{corollary-pbmc}(I)
can be found in the recent paper~\cite[Theorem 5.2]{Bollig&Kuske&Mennicke12} where fragments
of MSO are taken into account.

\section{Conclusion}
\label{section-conclusion}

In this note, we have shown that model-checking over multi-pushdown
systems with $k$-boun\-ded runs is \exptime-complete when $k$ is
\ifLONG an input bound \fi encoded in unary, otherwise the problem is
in 2\exptime \ with a binary encoding.
The logical \ifLONG specification \fi language is a version of \caret~in which abstract temporal
operators are related to calls and returns and parameterized by the
stacks, and regularity constraints on stack contents are present
too. A 2\exptime \ upper bound is also established with ordered
multi-pushdown systems or with $k$-phase bounded runs and these are
optimal upper bounds with a unary encoding of $k$. Our complexity
analysis rests on the reduction from model-checking to repeated
reachability and on complexity analysis for pushdown systems. %% Our
%% results improve those from~\cite{LaTorre&Madhusudan&Parlato07,BCGZ11}
%% as far as our logical language is concerned.
%
%% Nevertheless,
The characterization of the complexity when $k$ is
encoded in binary is still open and we conjecture that an exponential
blow-up may occur.  More generally, our work can be pursued in several
directions including refinements of the complexity analysis but also
increase of the expressive power of the logics while keeping the same
worst-case complexity upper bounds.

\bibliographystyle{abbrv}
\bibliography{biblio}

\begin{thebibliography}{10}

\bibitem{Alur&Etessami&Madhusudan04}
R.~Alur, K.~Etessami, and P.~Madhusudan.
\newblock A temporal logic of nested calls and returns.
\newblock In {\em TACAS'04}, volume 2988 of {\em LNCS}, pages 467--481.
  Springer, 2004.

\bibitem{Atig10}
M.~Atig.
\newblock From multi to single stack automata.
\newblock In {\em CONCUR'10}, volume 6269 of {\em LNCS}, pages 117--131.
  Springer, 2010.

\bibitem{Atig10bis}
M.~Atig.
\newblock Global model checking of ordered multi-pushdown systems.
\newblock In {\em FST\&TCS'10}, pages 216--227. LIPICS, 2010.

\bibitem{Atig&Bollig&Habermehl08}
M.~Atig, B.~Bollig, and P.~Habermehl.
\newblock Emptiness of multi-pushdown automata is {2ETIME}-complete.
\newblock In {\em DLT'08}, volume 5257 of {\em LNCS}, pages 121--133. Springer,
  2008.

\bibitem{Atigetal12}
M.~Atig, A.~Bouajjani, K.~Kumar, and P.~Saivashan.
\newblock Model checking branching-time properties of multi-pushdown systems is
  hard.
\newblock Technical Report arXiv:1205.6928, arXiv:cs.LO, May 2012.

\bibitem{Atigetal12bis}
M.~Atig, A.~Bouajjani, K.~N. Kumar, and P.~Saivasan.
\newblock Linear-time model-cheking for multithreaded programs under
  scope-bounding.
\newblock In {\em ATVA'12}, volume 7561 of {\em LNCS}, pages 152--166.
  Springer, 2012.

\bibitem{Ball&Rajamani01}
T.~Ball and S.~Rajamani.
\newblock The {SLAM} {T}oolkit.
\newblock In {\em CAV'01}, volume 2102 of {\em LNCS}, pages 260--264. Springer,
  2001.

\bibitem{Biereetal03}
A.~Biere, A.~Cimatti, E.~M. Clarke, O.~Strichman, and Y.~Zhu.
\newblock Bounded model checking.
\newblock {\em Advances in Computers}, 58:118--149, 2003.

\bibitem{BCGZ11}
B.~Bollig, A.~Cyriac, P.~Gastin, and M.~Zeitoun.
\newblock Temporal logics for concurrent recursive programs: Satisfiability and
  model checking.
\newblock In {\em MFCS'11}, volume 6907 of {\em LNCS}, pages 132--144.
  Springer, 2011.

\bibitem{Bollig&Kuske&Mennicke12}
B.~Bollig, D.~Kuske, and R.~Mennicke.
\newblock The complexity of model-checking multi-stack systems.
\newblock 2012.
\newblock Submitted.

\bibitem{Bouajjani&Esparza&Maler97}
A.~Bouajjani, J.~Esparza, and O.~Maler.
\newblock Reachability analysis of pushdown automata: application to
  model-checking.
\newblock In {\em CONCUR'97}, volume 1243 of {\em LNCS}, pages 135--150.
  Springer, 1997.

\bibitem{Clarkeetal05}
E.~Clarke, D.~Kroening, N.~Sharygina, and K.~Yorav.
\newblock {SATABS}: {SAT}-based predicate abstraction for {ANSI-C}.
\newblock In {\em TACAS'05}, volume 3440 of {\em LNCS}, pages 570--574.
  Springer, 2005.

\bibitem{Courcelle90}
B.~Courcelle.
\newblock The monadic second order theory of graphs {I}: Recognisable sets of
  finnite graphs.
\newblock {\em I\&C}, 85:12--75, 1990.

\bibitem{Cyriac&Gastin&Kumar12}
A.~Cyriac, P.~Gastin, and K.~N. Kumar.
\newblock {MSO} decidability of multi-pushdown systems via split-width.
\newblock In {\em {CONCUR}'12}, volume 7454 of {\em LNCS}, pages 547--561.
  Springer, 2012.

\bibitem{Esparza&Ganty11}
J.~Esparza and P.~Ganty.
\newblock Complexity of pattern-based verification for multithreaded programs.
\newblock In {\em POPL'11}, pages 499--510. ACM, 2011.

\bibitem{Finkel&Willems&Wolper97}
A.~Finkel, B.~Willems, and P.~Wolper.
\newblock A direct symbolic approach to model checking pushdown systems.
\newblock In {\em INFINITY'97}, volume~9 of {\em ENTCS}, 1997.

\bibitem{Fischer&Ladner79}
M.~Fischer and R.~Ladner.
\newblock Propositional dynamic logic of regular programs.
\newblock {\em Journal of Computer and System Sciences}, 18:194--211, 1979.

\bibitem{Ibarra78}
O.~Ibarra.
\newblock Reversal-bounded multicounter machines and their decision problems.
\newblock {\em JACM}, 25(1):116--133, 1978.

\bibitem{Madhusudan&Parlato11}
P.~Madhusudan and G.~Parlato.
\newblock The tree width of auxiliary storage.
\newblock In {\em POPL'11}, pages 283--294. ACM, 2011.

\bibitem{Minsky67}
M.~Minsky.
\newblock {\em Computation, Finite and Infinite Machines}.
\newblock Prentice Hall, 1967.

\bibitem{Qadeer&Rehof05}
S.~Qaader and J.~Rehof.
\newblock Context-bounded model checking of concurrent software.
\newblock In {\em TACAS'05}, volume 3440 of {\em LNCS}, pages 93--107.
  Springer, 2005.

\bibitem{Schwoon02}
S.~Schwoon.
\newblock {\em Model-checking pushdown systems}.
\newblock PhD thesis, TUM, 2002.

\bibitem{Seth09}
A.~Seth.
\newblock Games on multi-stack pushdown systems.
\newblock In {\em LFCS'09}, volume 5407 of {\em LNCS}, pages 395--408.
  Springer, 2009.

\bibitem{LaTorre&Madhusudan&Parlato07}
S.~L. Torre, P.~Madhusudan, and G.~Parlato.
\newblock A robust class of context-sensitive languages.
\newblock In {\em LICS'07}, pages 161--170. IEEE, 2077.

\bibitem{LaTorre&Napoli11}
S.~L. Torre and M.~Napoli.
\newblock Reachability of multistack pushdown systems with scope-bounded
  matching relations.
\newblock In {\em CONCUR'11}, volume 6901 of {\em LNCS}, pages 203--218.
  Springer, 2011.

\bibitem{LaTorre&Napoli12}
S.~L. Torre and M.~Napoli.
\newblock A temporal logic for multi-threaded programs.
\newblock In {\em TCS 2012}, volume 7604 of {\em LNCS}, pages 225--239.
  Springer, 2012.

\bibitem{Vardi&Wolper94}
M.~Vardi and P.~Wolper.
\newblock Reasoning about infinite computations.
\newblock {\em I\&C}, 115:1--37, 1994.

\bibitem{Walukiewicz01}
I.~Walukiewicz.
\newblock Pushdown processes: games and model-checking.
\newblock {\em I\&C}, 164(2):234--263, 2001.

\end{thebibliography}

\ifSHORT
\newpage
\appendix
\section{Proof sketch for Lemma \ref{lemma:soundness}}
\label{section-appendix-soundness-lemma}
\begin{proof}[Sketch]\draftnote{Do we want to call this a sketch?}
We observe below properties about the way tags are placed in $\augmentrun{\arun}$.

Transition relations in $\widehat{\mps}$ and the acceptance conditions
in $\mathcal{F}$ mimic syntactically the semantical properties of the
augmented runs observed above, from which the lemma follows.
\end{proof}

\section{Proof sketch for Lemma \ref{lemma:completeness}}
\label{section-appendix-completeness-lemma}
\begin{proof}

\end{proof}

%% 
%% \section{Proof for Proposition~\ref{proposition-breps}}
%% \label{section-appendix-proposition-breps}
%% 
%% \input{proof-proposition-breps}
%% 
%% 
%% \section{Proof Sketch for Corollary~\ref{corollary-k-boundedness}}
%% \label{section-appendix-corollary-k-boundedness}
%% 
%% \input{proofsketch-corollary-k-boundedness}
%% 
%% 
%% 
%% \section{Proof Sketch for Corollary~\ref{corollary-k-boundedness-reg}}
%% \label{section-appendix-regularity}
%% 
%% \input{proofsketch-regularity}
%% 

\fi 

\end{document}